\DeclareFontFamily{U}{mathb}{\hyphenchar\font45}
\DeclareFontShape{U}{mathb}{m}{n}{
      <5> <6> <7> <8> <9> <10> gen * mathb
      <10.95> mathb10 <12> <14.4> <17.28> <20.74> <24.88> mathb12
}{}
\DeclareSymbolFont{mathb}{U}{mathb}{m}{n}
\DeclareMathSymbol{\llcurly}{3}{mathb}{"CE}
\DeclareMathSymbol{\ggcurly}{3}{mathb}{"CF}
\theoremstyle{plain}
\newtheorem{example}{Example}
\newtheorem{conjecture}{Conjecture}
\newtheorem{theorem}{Theorem}
\newtheorem{lemma}{Lemma}
\newtheorem{definition}{Definition}
\newtheorem{corollary}{Corollary}
\newtheorem{claim}{Claim}
\newtheorem{proposition}{Proposition}
\title{Exact and approximate maximin share allocations in multi-graphs}
\author{George Christodoulou\thanks{Aristotle University of
    Thessaloniki and Archimedes, Athena Research Center. Email: \texttt{\{gichristo,smastra\}@csd.auth.gr} }
\and {Symeon Mastrakoulis\footnotemark[1]}}
\date{}
\begin{document}
\usetikzlibrary {patterns,patterns.meta}

\maketitle
\begin{abstract}
We study the problem of (approximate) \emph{maximin share} (MMS) allocation of indivisible items among a set of agents.
We focus on the graphical valuation model, in which the input is given by a graph where edges correspond to items, and vertices correspond to agents. An edge may have non-zero marginal value only for its incident vertices. We study additive, XOS and subadditive valuations and we present positive and negative results for (approximate) MMS fairness, and also for (approximate) pairwise maximin share (PMMS) fairness.
    
\end{abstract}
\section{Introduction}
\label{sec:introduction}

The fair allocation of \textit{indivisible} goods is a fundamental problem that arises in various fields, including game theory, social choice theory, and multi-agent systems. The objective is to allocate a set of $m$ \textit{indivisible} items among $n$ agents in a way that satisfies a predefined notion of fairness. Over time, various fairness criteria have been explored, each capturing a distinct interpretation of what constitutes a ``fair'' allocation. In the case of \textit{divisible} goods---studied in the context of cake-cutting---the key fairness concepts are  {\em envy-freeness} and {\em proportionality}. An allocation is called envy-free if no agent envies the portion allocated to another agent, while it is called proportional if every agent receives her proportional share, which is at least $1/n$ of her total value for the whole cake. Unfortunately, it is well known that both notions may fail to exist in the discrete setting. It is therefore natural to employ relaxed or approximate fairness notions when dealing with indivisible goods. 

In this work, we consider \textit{maximin share (MMS)}, the most prominent relaxation of proportionality in the context of indivisible goods, introduced by \cite{BudishMMS}. 
Each agent $i$ has an associated threshold, called her maximin share $\mu_i$, which is equal to the maximum value agent $i$ can secure if she partitions the set of items into $n$ bundles and receives the lowest-value bundle. Under this notion, an allocation is considered fair if every agent receives at least her MMS value. 

\cite{KurokawaProcacciaWang18} first showed that, unfortunately, it is not always possible to guarantee the MMS value for every agent, even under additive valuations. Consequently, research has shifted toward approximate MMS fairness, where each agent is guaranteed a given fraction of her MMS value. This has led to a surge of research in the past decade, primarily for additive valuations \cite{AmanatidisMNS17, AkramiGargSODA24, GargTaki21, GhodsiHSSY21} but also for more general valuation classes \cite{GhodsiHSSY22, BarmanKrishnamurthy20, SeddighinSeddighin24, AkramiNeurIPS23, FeigeGrindberg25,FeigeHuang25}. Determining the best possible guarantees for important valuation classes, including additive, XOS, and subadditive, remains an active area of research.

\paragraph{Graphical Valuations.} While exact MMS allocations are not always achievable in general, they are known to exist for the case of two agents. A natural extension of the two-agent setting to the multi-agent setting, is the graphical model introduced by \cite{Christodoulou} in the context of EFX.\footnote{EFX, proposed in \cite{CaragiannisKurokawaACM19} is an extensively studied relaxation of envy-freeness.} In this model, agent valuations are represented by a graph where vertices correspond to agents, and edges correspond to items. Each edge $\{i,j\}$ may have a positive marginal value for agents $i$ and $j$ only; for all other agents, the item has zero marginal value. 
In the case of {\em multi-graphs}, multiple parallel edges between any pair of vertices are possible. The graphical model has recently received considerable attention in the study of EFX, with significant progress made for various classes of (multi-)graphs (see e.g. \cite{HsuCoRR24,BhaskarPanditCoRR24,Afshinmehr24,ZhouIJCAI24,MirsaSethia24, SgouritsSotiriouEFXMULTI2025,BSRS25}) drawing interesting connections between graph theory and fair division. However, while the existence of EFX allocations in simple graphs is known~\cite{Christodoulou}, the question of EFX existence for general multi-graphs remains open.

In this work, we study approximate MMS fairness, and other related notions of fairness, for the case of multi-graphs.

The fair allocation of \textit{indivisible} goods is a fundamental problem that arises in various fields, including game theory, social choice theory, and multi-agent systems. The objective is to allocate a set of $m$ \textit{indivisible} items among $n$ agents in a way that satisfies a predefined notion of fairness. Over time, various fairness criteria have been explored, each capturing a distinct interpretation of what constitutes a ``fair'' allocation.

Unlike the case of \textit{divisible} goods---studied in the context of cake-cutting---key fairness concepts such as envy-freeness and proportionality may not exist in the discrete setting. It is therefore natural to employ relaxed or approximate fairness notions when dealing with indivisible goods. Two of the most extensively studied relaxations of envy-freeness are EF1, introduced by Budish \cite{BudishMMS}, and EFX, proposed by Caragiannis et al. \cite{CaragiannisKurokawaACM19}.

Beyond envy-freeness relaxations, another prominent approach to fairness in the allocation of indivisible goods is the \textit{maximin share (MMS)}, introduced by Budish in \cite{BudishMMS}. MMS, which is the focus of this work, has garnered significant attention in recent years and is one of the most well-studied fairness notions for indivisible goods. This notion serves as an important relaxation of proportionality in the context of indivisible goods. Each agent $i$ has an associated threshold, called her maximin share $\mu_i$, which is equal to the maximum value agent $i$ can secure if she partitions the set of items into $n$ bundles and receives the lowest-value bundle. An allocation is considered fair if every agent receives at least her MMS value. 

Kurokawa, Procaccia, and Wang \cite{KurokawaProcacciaWang18} showed that it is not always possible to guarantee the MMS value for every agent, even for additive valuations. Consequently, research has shifted toward approximate MMS fairness, where each agent is guaranteed a given fraction of her MMS value. This has led to a surge of research in the past decade, primarily for additive valuations (e.g., \cite{AmanatidisMNS17, AkramiGargSODA24, GargTaki21, GhodsiHSSY21}) but also for more general valuation classes (e.g., \cite{GhodsiHSSY22, BarmanKrishnamurthy20, SeddighinSeddighin24, AkramiNeurIPS23, FeigeGrindberg25,FeigeHuang25}). Finding the best possible guarantees for important valuation classes, including additive, XOS, and subadditive, remains an active area of research. 

While exact MMS allocations are not always achievable in general, they are known to exist for two agents. A natural extension of the 2-agent setting is the graph model introduced by Christodoulou et al.~\cite{Christodoulou} in the context of EFX. In this model, agent valuations are represented by a graph where vertices correspond to agents, and edges correspond to items. Each edge $\{i,j\}$ may have a positive marginal value for agents $i$ and $j$ only; for all other agents, the item has zero marginal value. We study approximate MMS fairness for \textit{multi-graphs},\footnote{For simple graphs, each agent's MMS value is 0, making any allocation trivially MMS-fair.} where multiple parallel edges may connect a pair of vertices. Multi-graphs have recently received considerable attention in the study of EFX, with significant progress made for various classes of multi-graphs (e.g., \cite{HsuCoRR24,BhaskarPanditCoRR24,Afshinmehr24,ZhouIJCAI24,MirsaSethia24, SgouritsSotiriouEFXMULTI2025}). However, the question of EFX existence for general multi-graphs remains open.

\subsection{Our Contribution}

We study (approximate) MMS, the predominant notion of share-based
fairness for the case of indivisible goods, for graphical valuations which has been previously studied in the context of EFX. We focus on {\em multi-graphs}\footnote{We remark that for simple graphs, each agent's MMS value is $0$, making any allocation trivially MMS fair.} and
explore additive, XOS, and subadditive valuations.\footnote{We refer the reader to Section~\ref{sec:preliminaries} for formal definitions of fairness notions and valuation functions.} Importantly, our results apply to general {\em multi-graphs} without any additional assumptions on the graph properties.
%
We also obtain results on {\em pairwise maximin share (PMMS)} \cite{CaragiannisKurokawaACM19} as well as on \emph{ordinal approximation ($1$-out-of-$d$)} \cite{BudishMMS} which is strongly related to MMS. An allocation is PMMS fair, if for any pair of agents $i$ and $j$, it guarantees that agent $i$ receives at least the MMS threshold on the restricted set of items she shares with $j$. An allocation is $1$-out-of-$d$ fair if it ensures to each agent the minimum value she could guarantee if she could partition all the items in $d$ parts, assuming she receives her least preferred bundle.

We explore additive, XOS, and subadditive valuations.\footnote{We refer the reader to Section~\ref{sec:preliminaries} for formal definitions of fairness notions and valuation functions.}

\paragraph{Additive Valuations.} In Section~\ref{sec:additive}, we study additive valuations and show that in multi-graphs, an exact MMS allocation~\footnote{For MMS, we can focus on orientations---where each edge is allocated to an incident node--- as for every allocation, there exists an equally good orientation while the orientation is not wasteful.} always exists for any number of agents (Theorem
\ref{thm:multiaddcutandchoose}). This contrasts with the general (non-graphical) case, where exact MMS cannot be achieved, for $n \geq 3$~\cite{KurokawaProcacciaWang18}. In fact, we establish two stronger results: first, the existence of an allocation that is both MMS and PMMS (Theorem~\ref{thm:multiaddcutandchoose}), and second, the existence of a $1$-out-of-$3$ allocation, and hence MMS for $n\ge 3$ (Theorem~\ref{thm:multiaddroundrobin}). 

We also show that even in graphs, MMS does not imply $\alpha$-approximation of PMMS up to any factor $\alpha$. This was previously known for general
instances \cite{CaragiannisKurokawaACM19}.  
Interestingly, our impossibility result holds even for the case of {\em symmetric} multi-graphs, where each item has the same value for both endpoints (agents).

\paragraph{XOS Valuations.}In Section~\ref{sec:XOS}, we consider XOS valuations and prove the existence of a $2/3$-MMS allocation for any $n \geq 3$ (Theorem~\ref{thm:XOS4+}). This contrasts with
general (non-graphical) XOS valuations, where an upper bound of $1/2$
is known~\cite{GhodsiHSSY22}. We complement this positive
result by providing a $\left(1-\frac{1}{\left\lceil n/2 \right\rceil+1}\right)$-MMS 
approximation upper bound for $n$ agents. (Theorem~\ref{thm:XOSupper}).
For $3$ and $4$ agents, this bound matches our lower bound of $2/3$.
Both the upper and lower bounds rely on an interesting connection with extremal combinatorics, particularly the existence of independent transversals in multipartite graphs. This suggests that the primary bottleneck in obtaining near-optimal fairness guarantees for XOS valuations is combinatorial in nature, rather than an inherent limitation of the graphical model itself.

Moreover, we show a $(1-1/d)$-out-of-$d$ approximation when the number of agents is
$n \le 3$ and a $\frac{1}{2}$-out-of-$2$ approximation for an arbitrary number of agents
$n$. We explicitly show that there exists a graph with XOS agents in which
no $(1/2+\varepsilon)$-PMMS \emph{orientation} exists, but an exact PMMS allocation does exist (Theorem~\ref{thm:subadditivePMMS}).

In Section~\ref{sec:SBXOS}, we explore the possibility of achieving approximation guarantees beyond the $2/3$ factor for XOS agents using tools from extremal graph theory. We prove the existence of a $2/3$-out-of-$8\sqrt{n}$-MMS orientation (Theorem~\ref{thm:sqrtXOS}). Furthermore, we establish a formal connection between the existence of $(1-1/\ell)$-MMS orientations and a well-known conjecture in extremal graph theory regarding independent transversal sets. Specifically, we show that if Conjecture~\ref{thm:ITSconj} holds, it implies that every multigraph admits a $(1-1/\ell)$-out-of-$\mathcal{O}(n^{1-\epsilon})$-MMS orientation and, if there exists a constant $\alpha < 1$ such that an $\alpha$-MMS allocation is not guaranteed to exist for any instance with $n$ XOS agents with graphical valuations, then Conjecture~\ref{thm:ITSconj} does not hold (Theorem~\ref{thm:connectionITS}). 

\paragraph{Subadditive Valuations.} Finally, in Section \ref{sec:subadditive} we consider the case of subadditive valuations\footnote{Independently, \cite{Feige25} showed the existence of $1/2$-out-of-$2$ for subadditive agents and multi-graphs, using different techniques.} and we show a \emph{tight} $1/2$-MMS approximation for any $n\ge 2$ (Theorem \ref{Thm:subadditivemulti}). We emphasize that our results imply a separation between subadditive and XOS valuations with respect to approximate MMS in the graph model. Notably, such a separation remains an open problem in the general (non graphical) setting. Furthermore, for orientations, we show the existence of a $1/2$-PMMS orientation and provide a matching impossibility result that is tight even for XOS.

\paragraph{Discussion.} We remark that while at first look PMMS may seem to "fit" better within the graphical model, our results show that it is {\em wasteful}\footnote{This is similar to the case of EFX in graphs.} i.e. there exist instances in which in order to achieve PMMS fairness, we may need to allocate to some agent an item with no value for her.  In contrast, we show that this is not the case for MMS fairness. That is, the MMS fairness of an allocation from the perspective of agent $i$ does not depend on the bundle or the valuation of any other agent $j$. Thus, it is reasonable to study MMS in graphical model and as a result, our techniques and our tools can be also used in the general model.

We also remark that, when all items are goods (i.e., have non-negative values), the $1$-out-of-$d$ threshold is (weakly) monotonically decreasing as the number of partitions $d$ increases. Hence, we obtain stronger results using $d < n$ such as when $d=2$.

\begin{table}
    \centering
        \caption{Best known approximate MMS for general model and for the graph model. Our contributions appear in \textbf{\color{blue} bold blue}. One fraction indicates a tight approximation and two fractions with brackets indicate lower and upper bounds. $^{\dagger}$\cite{HuangZHouAdditiveMMS}; $^{\dagger\dagger}$\cite{FeigeSapirTauberWINE21};
    $^{\ddagger}$\cite{christodoulouIJCAI25};  $^{\mathparagraph}$ \cite{FeigeGrindberg25}; $^{\mathsection}$ \cite{GhodsiHSSY22}; $^{**}$\cite{Feige25}.}
    \begin{tabular}{c c c }
        \toprule
        {\bf Valuation Class} & { \bf General Model} & { \bf Graph Model}  \\
        \midrule
        Additive \\
        \midrule
         $n \ge 2$ &$ \left [ 7/9^{\dagger},1-\frac{1}{n^4}^{\dagger\dagger}\right]$ &  {\bf \color{blue} Exact MMS [Thm \ref{thm:multiaddcutandchoose}]}\\
        \midrule
        XOS \\
        \midrule
        $n=2$ & $1/2 ^{\ddagger}$ & $1/2^{\ddagger}$ \\
        $n \ge 3$ & $\left [4/17^{\mathparagraph},1/2^{\mathsection} \right ]$  & {\bf \color{blue}$\boldsymbol{\left[2/3 [Thm~\ref{thm:XOS4+}],\left(1-\frac{1}{\left\lceil n/2 \right\rceil+1}\right)[Thm~\ref{thm:XOSupper}]\right]}$}\\
        \midrule
        Subadditive \\
        \midrule
        $n \ge 2$ & $\left[  \frac{1}{\mathcal{O}(\log\log n)}^{**},1/2^{\mathsection}\right]$ & {\bf \color{blue}$\boldsymbol{1/2}$  [Thm \ref{Thm:subadditivemulti}]}  \\
        \bottomrule
    \end{tabular}
    \label{table:Results}

\end{table}

Overall, we show that the graph setting yields provably improved results for additive and XOS valuations, but not for subadditive valuations, where we match the best known upper bound. We also investigate the relationship between orientations versus allocations under the PMMS notion. Additionally, we study other important share-based notions of fairness, such as PMMS and 1-out-of-$d$ MMS fairness.

\subsection{Further related work}
In this paper we focus on approximate MMS fairness. We also provide
results for PMMS and $1$-out-of-$d$ fair notions which are strongly
correlated with MMS. We refer the interested reader to the recent
survey of \cite{AmanatidisABFLMVW23Survey} covering a wide variety of
discrete fair division settings along with the main fairness notions
and their properties.

\paragraph{Maximin Share and $\alpha$-MMS}  During the past years MMS has seen significant progress for all valuation classes. Although for two agents with additive valuations, MMS allocations always exist, it is known that there exist instances where it is impossible to allocate items in such a way to guarantee for every agent her exact MMS value. Kurokawa, Procaccia and Wang proved an upper bound of $1 - \mathcal{O}(\frac{1}{2^n})$ \cite{KurokawaProcacciaWang18} for additive values and $n$ agents. Later Feige, Sapir and Tauber provided an improved bound of $1-\frac{1}{n^4}$ in \cite{FeigeSapirTauberWINE21}. For additive valuations there is an abundance of works which has led to strong approximation guarantees. Among all the works we note the existence of $2/3$-MMS allocation for $n$ agents in \cite{KurokawaProcacciaWang18}, the existence of $3/4$-MMS allocation for $n$ agents from Ghodsi et al. in \cite{GhodsiHSSY21}, the existence of $7/8$ for $n=3$ agents in \cite{AmanatidisMNS17} by Amanatidis et al. and the existence of $3/4+1/12n$ for $n$ agents by Garg and Taki in \cite{GargTaki21}. Akrami et al.~\cite{AkramiGST23} improved the approximation ratio to $\frac{3}{4} + \min\left\{\frac{1}{36}, \frac{3}{16n-4}\right\}$, which was subsequently refined by Akrami and Garg~\cite{AkramiGargSODA24} to $\frac{3}{4} + \frac{3}{3836}$. Recently, Heidari et. al. \cite{HeidariSodaAdditiveMMS} prove the existence of $10/13$-MMS allocations and, using a similar method, Huang and Zhoue \cite{HuangZHouAdditiveMMS} prove the existence of $7/9$-MMS allocations which is the best approximation we know so far for additive valuations

Our understanding of valuation classes beyond additive is limited. For submodular functions, a tight $2/3$- approximation for MMS allocation with two agents is known \cite{Coauthor, Ruta}. For more agents, Uziahu and Feige showed a lower bound of $10/27$ \cite{UziahuFeigeSubmodular}, and Ghodsi et al. established an upper bound of $3/4$ \cite{GhodsiHSSY22}. Moreover, Ghodsi et al.~\cite{GhodsiHSSY22} proved the existence of a $1/5$-MMS allocation for fractionally subadditive valuations and a $1/2$ upper bound for approximate MMS. Later the lower bound improved to $3/13$ \cite{AkramiNeurIPS23} and recently to $4/17$ \cite{FeigeGrindberg25}. 

The gaps are larger for the class of subadditive valuations; there always exists a $\frac{1}{\mathcal{O}(\log\log n)}$-MMS allocation \cite{Feige25} while the best known upper bound is $1/2$ \cite{GhodsiHSSY22}.

Furthermore, Chekuri et al. proved that $1/2$-MMS allocations exist for {\em SPLC} valuations \cite{ChekuriKKM24}. The same guarantee holds for the case of hereditary set systems as Hummel shows in \cite{Hummel:HSS24}. Additionally, for valuations which are both {\em leveled} and submodular a lower bound of $2/3$-MMS allocation is known, and for valuations which are both leveled and subadditive we know a $1/2$-MMS allocation \cite{Coauthor}.

\paragraph{Pairwise maximin share and $\alpha$-PMMS} Another fairness notion which we will consider in this work, is the pairwise maximin share (PMMS) introduced by Caragiannis in \cite{CaragiannisKurokawaACM19}.
It is an open problem if we can guarantee exact PMMS value for all agents even when their valuations are additive. In \cite{CaragiannisKurokawaACM19}, it is shown that a PMMS allocation is also EFX (if each item has strictly positive value for both agents). In the same work they show that an approximate $0.618$-PMMS allocation always exists and they also establish that a PMMS allocation is also $1/2$-MMS allocation for additive valuations despite the fact that neither PMMS implies MMS nor the opposite. Later, Amanatidis, Birmpas and Markakis in \cite{AmanatidisBirmpasMarkakisIjcai18} proved an implication of $4/7$-MMS from a PMMS allocation, as well as an inapproximability result for $0.5914$-MMS. For approximate PMMS, the best known result is 0.781 by Kurokawa \cite{KurokavaThesis17} for additive valuations. Zhang et al.  \cite{Zhang24} studied the efficiency of PMMS computation and the existence for additive valuations when all agents agree on the ordinal ranking of the goods. Barman and Verma \cite{BarVermAAMAS20} studied the PMMS notion under matroid-rank valuations.

\paragraph{$1$-out-of-$d$ MMS}We will also obtain results of the $1$-out-of-$d$ fair notion (\emph{ordinal approximation}) which was introduced by Budish in \cite{BudishMMS} and showed the existence of $1$-out-of-$(n+1)$-MMS, by adding a small number of excess goods. This is the value that an agent can secure by partitioning the items into $d$ bundles, assuming she will receive the least preferred bundle. We note that when all items are goods (i.e., have non-negative values), the $1$-out-of-$d$ threshold is (weakly) monotonically decreasing as the number of partitions $d$ increases. For additive valuations the main open problem is the minimum $d$ for  such that we can guarantee at least $1$-out-of-$d$ for all agents.  In \cite{Aigner-HorevSH22} Aigner and Horev obtained a bound of $d=2n-2$ on the existence of $1$-out-of-$d$ allocations under additive valuations without adding excess goods. Later the approximation was subsequently improved to $1$-out-of-$\lceil 3n/2 \rceil$ by Hosseini and Searns \cite{HosseiniSearns21} and after to $1$-out-of-$\lfloor 3n/2 \rfloor$ by \cite{HosseiniSSH22}. The minimum such $d$ we know so far is proved by Akrami et al. \cite{AkramiGST24}. Whether $1$-out-of-$(n+1)$ MMS allocations always exist remains an open question. Babaioff, Nisan and Talgam{-}Cohen \cite{BabaioffNT21} introduced the $\ell$-out-of-$d$ maximin share, which corresponds to the maximum value an agent can guarantee to herself if she partitions the items into $d$ bundles and  then being allocated the worst $\ell$ of them. We note that in all these works, ordinal approximation was used with $n < d$. For more general valuations Hosseini, Searns and Segal show that an exact $1$-out-of-$d$ MMS allocation may not exist under submodular valuations \cite{HosseiniSSH22} for any $d \ge 1$ even for $2$ agents.

\paragraph{Graphical Valuations} Christodoulou et. al. \cite{Christodoulou} first studied the graphical model considering the EFX property and show that, while an EFX allocation exists for simple graphs, an EFX orientation is not guarantee to exist and deciding if it exists is NP complete. Zhou et. al. \cite{ZhouIJCAI24} studied the graphs with mixed mana; items with either positive or negative value. \cite{ZM24} showed that EFX orientations are not guaranteed to exist in graphs with chromatic number greater than $3$, and they always exist when the chromatic number is at most $2$. In \cite{DEGK24}, Deligkas et. al.  showed that EFX orientations need not exist in multi-graphs with $10$ vertices.
\cite{Afshinmehr24, BhaskarPanditCoRR24, SgouritsSotiriouEFXMULTI2025, DEGK24, HsuCoRR24, BSRS25, MirsaSethia24} studied the existence of EFX and also the complexity of finding EFX orientations for additive and more general valuations in multi-graphs under special conditions i.e. bipartite graphs, length of the shortest cycle, bounded neighbors, multi-trees etc. Mirsa and Sethia \cite{MirsaSethia24} studied the graph model assuming binary valuations.

We also note that this graphical model has been studied in other settings, including machine scheduling~\cite{EbeKrcJgall14, VershaeWiese10} and mechanism design~\cite{ChristodKoutsKov21}.

\section{Preliminaries}
\label{sec:preliminaries}
In this section, we introduce the main concepts and notation. We consider the graphical valuation. In this model, there is a given undirected graph $G=(V,E)$, where the vertices correspond to a set of $n$ agents $V=\{1,\ldots,n\}$ and the set of edges corresponds to a set $E=\{1,\ldots, m\}$ of $m$ indivisible goods. Each vertex (agent) $i$ is equipped with a valuation function $v_i:2^E\rightarrow \mathbb{R}^+$, where $v_i\left(X\right)$ is the value of agent $i$ for the subset $X \subseteq E$. An item $e\in E$ is \emph{irrelevant} to agent $i\in V$ if for all subsets $X\subseteq E$, it holds that $v_i(X\cup \{e\}) = v_i(X)$. If an item is not irrelevant for agent $i$, it is called {\em relevant} for agent $i$. The graph $G=(V,E)$ induces a special structure for the valuations as follows; an edge $e=(i,j)$ is only relevant to at most agents $i$ and $j$.\footnote{We allow the edge $e$ to be irrelevant for one of these agents (say for agent $j$), but in this case this edge could be equally modeled by a self-loop $e=(i,i)$.} We allow multiple edges between two agents $i$ and $j$ which we denote by $E_{i,j}$,\footnote{We allow $i=j$, $E_{i,i}$ for the case of possibly multiple self-loops and an agent $i$.} and we denote by $E_i$ the set of all edges adjacent to agent $i$, i.e., $E_{i}=\bigcup_jE_{i,j}$.

\paragraph{Valuation Classes.}
We consider valuation classes that are {\em monotone}, i.e., $v_i(S)\leq v_i(T)$ whenever $S\subseteq T\subseteq E$ and are {\em normalized} i.e., $v_i(\emptyset)=0$, for all $i\in V$. We study several important classes separately of complement-free valuations, such as additive, XOS, and subadditive valuation functions which we define below. It is well known that these classes form a hierarchy, namely $\text{additive } \subsetneq \text{ XOS } \subsetneq \text{ subadditive}$. A valuation function $v$ is \begin{itemize}
  
\item {\em additive,} if $v(S) = \sum_{g \in S} v(g)$ for any $S \subseteq E$.

\item  {\em fractionally subadditive (a.k.a. XOS)} if there exists a set of additive functions $a_1, \dots, a_k$ such that $v(S) = \max_{l \in [k]}a_l(S)$ for any $S \subseteq E$.  
\item {\em subadditive} if $v(S) + v(T) \ge v(S \cup T)$ for any $S, T \subseteq E$.
\end{itemize}

It is well-known that these classes form a hierarchy, namely $\text{additive } \subsetneq \text{ XOS } \subsetneq \text{ subadditive}$.

\paragraph{Allocations and Orientations.} We are interested in allocating (a subset of) edges in $E$ into $n$ mutually disjoint sets $X_1,\ldots, X_n$, where $X_i$ is the set of edges assigned to agent $i$. We denote the respective (partial) allocation by $X=(X_1, \ldots, X_n)$. An allocation $X$ is called a {\em partition} if $\bigcup_{i\in V} X_i=E$ and {\em partial} if $\bigcup_{i\in V} X_i \subset E$. An allocation $X$ is an \emph{orientation} if every edge is allocated to an agent for which it is relevant; that is, for each edge $e=(i,j)$, it must hold $e\in X_i\cup X_j$.

\paragraph{Fairness Notions.} 
Here, we provide the definitions for the fairness notions that we consider. For a positive integer $d$, let $P=(P_1, \ldots, P_d)$ be a partition of $E$ into $d$ parts (we also call it $d$-partition), and let $\Pi_d(E)$ be the set of all possible $d$-partitions of $E$.

We define the {\em $1$-out-of-$d$ maximin share value} of agent $i$, denoted by $\mu_i^d\left(E\right)$, to be the maximum over
    all $d$-partitions of $E$, of the minimum value under $v_i$ of a part in the $d$-partition, i.e.
    \[
    \mu^d_i\left(E\right) = \max_{P \in \Pi_d\left(E\right)}\min^{d}_{j=1}v_i\left(P_j\right).
  \]
  Of special interest is the case of $d=n$ which is known as the \emph{maximin share (MMS)} value of agent $i$, denoted by $\mu_i(E):=\mu_i^n(E)$. When $E$ is clear from the context, we use the simpler notation $\mu_i^d$ and $\mu_i$ respectively.
  
We are now ready to define the notion of (approximate) MMS allocations. Fix an $\alpha \in (0,1]$. We call $X=(X_1,\ldots, X_n)$ an {\em $\alpha$-out-of-$d$ MMS allocation} if $v_i\left(X_i\right) \ge \alpha \mu_i^d$ for all agents $i$. Of special interest is the case of $d=n$, and then $X$ is called an {\em $\alpha$-MMS allocation}. When  $\alpha =1$, $X$ is simply called an {\em (exact) MMS allocation}. Regarding MMS approximations, we can focus on orientations, since for every allocation there exists an equally good orientation.

\paragraph{Canonical Partitions.} A partition $P$ of $E$ into $d$ parts, which maximizes the $\min^{d}_{j=1}v_i\left(P_j\right)$ for agent $i$ is called a $1$-out-of-$d$ MMS partition of agent $i$. Note that there may be more than one such partition. We fix one (arbitrary) partition $\hat{P}^d_i$ and define the respective partition of the set $E_i$. We refer to this partition of $E_i$ as the {\em canonical} $1$-out-of-$d$ partition MMS of agent $i$, denoted by $B_i^d=\left(B_{i,1}^d,B_{i,2}^d,\ldots,B_{i,d}^d\right)$, where $B_{i,t}^d$ is the $t$-th {\em canonical bundle}
\footnote{Since every agent $i$ is only interested in relevant items in our setting, we have $v_i(\hat{P}_{i,t}) = v_i(\hat{P}_{i,t} \cap E_i) = v_i(B_{i,t})$. Therefore, it is convenient to define the canonical bundles $B_{i,t}$, which contain only edges relevant to agent $i$.} of agent $i$. When it is clear from the context, we drop all superscripts.

We also study the notion of \emph{pairwise MMS} (PMMS) fairness.\footnote{Note that PMMS does not imply EFX in the graph setting.} An allocation $X = \left(X_1,\ldots, X_n\right)$ is a {\em pairwise MMS} allocation if for every agent $i$
\[
v_i\left(X_i\right) \ge \max_{j \in V\setminus\{i\}}\mu_i^2 \left(X_i \cup X_j\right)=: \operatorname{PMMS}_i\left(X\right).
\]

Note that the PMMS threshold depends on the allocation $X$. When the allocation $X$ is clear from the context we will write $\operatorname{PMMS}_i$ instead of $\operatorname{PMMS}_i(X)$.

\section{Additive valuations}
\label{sec:additive}
In this section, we focus on multi-graphs with additive agents. The main result of this section asserts that there is always an MMS orientation in multi-graphs. We present two different algorithms that both guarantee exact MMS together with an additional property. The first algorithm computes an orientation that is both MMS and PMMS~\footnote{We remark that PMMS does not necessarily imply MMS in multi-graphs. In the Section~\ref{sec:addPMMSrelateMMS} we elaborate on the relation between MMS and PMMS and $1$-out-of-$d$ MMS in multi-graphs.}  (Theorem~\ref{thm:multiaddcutandchoose}), while the second guarantees an $1$-out-of-$3$ (and thus exact MMS for $n\ge 3$) (Theorem~\ref{thm:multiaddroundrobin}). These results are in contrast to the general (non-graphical) case where even for the case of $3$ additive agents, exact MMS allocations need not exist~\cite{KurokawaProcacciaWang18}. We also construct instances showing the limitations of our algorithms (Section~\ref{sec:addPMMSrelateMMS}).

\begin{theorem} \label{thm:multiaddcutandchoose}
    In every multi-graph with $n$ additive agents there exists an orientation which is MMS and also PMMS.
\end{theorem}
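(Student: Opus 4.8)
The plan is to construct the orientation greedily by processing agents one at a time, exploiting the graph structure to decouple the agents. Since each agent $i$ only cares about the edges in $E_i$, and an orientation must assign each edge $e=(i,j)$ to one of its two endpoints, the key observation is that the MMS value $\mu_i$ is computed from an $n$-partition of $E_i$ alone. With $n$ agents and the edges of $E_i$ partitioned into $n$ canonical bundles $B_{i,1},\ldots,B_{i,n}$, at most a few of these bundles can be ``blocked'' by edges already committed to other agents. The hope is that because agent $i$ has $n$ bundles each worth at least $\mu_i$, there is enough slack to always hand some agent a full canonical bundle (or something dominating it), which simultaneously secures $\mu_i$ and, for a suitable bundle, the pairwise guarantee.

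First I would set up a cut-and-choose style procedure in the spirit of the theorem's name: at each step identify an agent $i$ whose canonical MMS partition $B_i = (B_{i,1},\ldots,B_{i,n})$ is ``available'' in the sense that one of its bundles $B_{i,t}$ does not conflict with edges already allocated, assign that bundle to $i$, and remove $i$ and its incident edges from further consideration. To make PMMS hold as well, I would be careful about which bundle I give: for the pairwise condition $v_i(X_i) \ge \mu_i^2(X_i \cup X_j)$, I want $X_i$ to be a bundle that, when compared against any single other agent's allocation, still looks like a best-of-two split from $i$'s perspective. A natural sufficiency is to give each agent a bundle of value at least $\mu_i$ while ensuring no other agent's bundle of $i$-value is too large relative to $X_i$; since $\mu_i^2 \ge \mu_i^n = \mu_i$ in general, proving PMMS is strictly stronger and is where the real work lies.

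The main obstacle I anticipate is the interaction between the MMS and PMMS guarantees under an orientation constraint. MMS alone follows relatively cleanly from a counting/pigeonhole argument on the $n$ bundles: when we process agent $i$, at most $n-1$ bundles can be ruined by the (at most $n-1$) previously settled agents sharing edges with $i$, leaving at least one intact bundle worth $\ge \mu_i$. But PMMS demands control over pairwise comparisons simultaneously, and an orientation forces edges onto endpoints in a way that can inflate one agent's bundle in another agent's eyes. I expect the heart of the proof to be an invariant maintained across the greedy steps — something like: after agent $i$ is assigned $X_i$, for every not-yet-processed neighbor $j$, the edges between $i$ and $j$ that end up on $j$'s side cannot, together with the remaining choices, let $i$ re-partition $X_i \cup X_j$ into two parts both exceeding $v_i(X_i)$. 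I would prove this by arguing that $X_i$ is chosen as a canonical bundle, so $i$ cannot split $X_i$ itself into two pieces each worth $v_i(X_i)$, and any edges on $j$'s side relevant to $i$ were precisely the ones $i$ already accounted for in a different canonical bundle.

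Concretely, I would (1) fix canonical $n$-partitions $B_i$ for every agent, (2) define an order of processing — likely choosing at each step the agent whose ``cheapest still-available'' bundle is largest, or an agent some of whose bundles are entirely internal to the unprocessed subgraph — and assign that agent a safe canonical bundle, (3) verify by the pigeonhole slack that such a safe bundle always exists so the orientation is total, (4) check MMS holds by construction since each assigned bundle has value $\ge \mu_i$, and finally (5) verify PMMS via the invariant above, handling the orientation-induced edges carefully. The delicate case will be two agents joined by many parallel edges of high value, where the pairwise bundle $\mu_i^2$ over their combined edges is large; I would need the canonical-bundle choice to guarantee $v_i(X_i)$ dominates even this combined two-way split, which is exactly where I expect to invest the most effort.
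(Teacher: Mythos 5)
Your plan has a genuine gap at its core: the pigeonhole count does not work. You claim that when agent $i$ is processed, ``at most $n-1$ bundles can be ruined by the (at most $n-1$) previously settled agents,'' leaving an intact canonical bundle worth $\mu_i$. This implicitly assumes each settled neighbor $j$ ruins at most one of $i$'s canonical bundles. But $j$'s allocated set $X_j$ contains edges of $E_{i,j}$ that can be scattered across \emph{all} $n$ of agent $i$'s canonical bundles $B_{i,1},\ldots,B_{i,n}$ (nothing forces $i$'s and $j$'s partitions of their common edges to align), so a single neighbor can destroy every bundle of $i$, and after a few steps no intact bundle need survive. This conflict structure is exactly what the paper's XOS upper-bound constructions (Lemma~\ref{lem:2XOSdupper}, Theorem~\ref{thm:XOSupper}) exploit, and it is why an argument based on handing out whole canonical bundles cannot yield exact MMS. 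The paper's proof avoids this entirely: it never allocates canonical bundles. Instead it runs a cut-and-choose on each pairwise edge set $E_{i,j}$ (the higher-indexed agent cuts into two parts optimally for her $1$-out-of-$2$ value, the other chooses), so $X_i$ is a union of pieces $X_{ij}$, one per neighbor. MMS is then recovered by a quantitative claim (either $v_i(X_{ij})\ge \tfrac{1}{3}v_i(E_{i,j})$ or the neighbor took a singleton), discarding singleton-takers, and summing the $\tfrac{1}{3}$ fractions over the remaining neighbors to reach $\mu_i$ by additivity.

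The PMMS half of your proposal is also not carried through: the ``invariant'' you describe is never actually established, and it is unclear how giving $i$ a canonical bundle of her $n$-partition controls $\mu_i^2(X_i\cup X_j)$, which depends on the two-way split of the \emph{combined} pairwise holdings (your own ``delicate case'' of many high-value parallel edges is precisely where this breaks). In the paper, PMMS falls out for free from the pairwise cut-and-choose step itself, and the paper's Example~\ref{exa:improundrobin} shows that a greedy, bundle-oriented algorithm in the spirit of yours can fail to be even $(1/2+\epsilon)$-PMMS. So both halves of the theorem require the pairwise-split design that your proposal does not contain.
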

\begin{proof}
   We construct the allocation with the desired properties via an
  algorithm. We process the agents in an increasing order of their indices. For every pair of agents $i,j$ we allocate to them the set of
  their common edges $E_{i,j}$ in a cut-and-choose fashion. This step
  asserts that the final allocation is PMMS, and is also an orientation. In particular, suppose w.l.o.g. that $i<j$ then $j$
  cuts $E_{i,j}$ into two parts, such that the minimum of the two
  guarantees her the value $\mu^2_j(E_{i,j})$. Then $i$ is allocated
the most preferable of the two parts according to her valuation, denoted
  by $X_{i,j}$ and $j$ gets the remaining part, denoted by
  $X_{j,i}$. The final set $X_i$ allocated to agent $i$ is the union of all these subsets, and
  additionally any possible self-loops denoted by $X_{i,i}=E_{i,i}$,
  i.e.
  $X_i=\bigcup_{j=1}^nX_{i,j}$. 

The next claim asserts some useful properties of the way $E_{i,j}$ is split.

  \begin{claim} \label{claim:multiadd}
    Fix an agent $i$. Then, for any other agent $j\neq i$ it is either $v_i(X_{i,j})\geq \frac{1}{3}v_i(E_{i,j})$ or $ |X_{j,i}|=1.$
\end{claim}
\begin{proof} [Proof of claim \ref{claim:multiadd}]

  If $i<j$ then $j$ cuts, $i$ chooses her favorite bundle, hence $v_i(X_{i,j})\geq \frac{1}{2}v_i(E_{i,j})$. If $i>j$, then $i$ cuts and $j$ chooses the best of the two parts.
  Let's assume that $|X_{j,i}|\geq 2$, and also that $v_i(X_{i,j})<v_i(X_{j,i})$, otherwise the claim follows. By construction, it holds that $\min \left\{v_i(X_{i,j}),v_i(X_{j,i})\right\}=\mu^2_j(E_{i,j})$,  so we conclude that for any item $e\in X_{j,i}$ it must hold $\max\{v_i(\{e\}),v_i(X_{j,i}\setminus\{e\})\}\leq v_i(X_{i,j}),$
otherwise either the partition $(X_{i,j}\cup \{e\},X_{ji}\setminus\{e\})$ or the partition $(\{e\},X_{i,j}\cup X_{j,i}\setminus\{e\})$ would guarantee a greater minimum value. Therefore, we obtain that $v_i(X_{ij})\geq \frac{1}{2}v_i(X_{j,i})$ which implies that $v_i(X_{i,j})\geq \frac{1}{3}v_i(E_{i,j})$ as needed.
\end{proof}
  
We now argue that $v_i(X_i)\geq \mu_i$. Let $K_i$ be the set of agents that arrive before $i$ in the order and choose a {\em singleton} part from $i$'s partition of $E_{i,j}$, i.e., $K_i=\{j:|X_{j,i}|=1\}$. Let $G'=(V',E')$ be the multi-graph produced if we remove all agents in $K_i$ along with their allocated bundles,\footnote{Technically, any edge $(i,j)$ which was assigned to $j$ becomes a self-loop of  $j$ in $E'$.} i.e., $V'=V\setminus K_i, E'=E\setminus \left(\cup_{j\in K_i}X_j\right)$. We claim that in $G'$ the MMS guarantee of agent $i$ can only become higher, i.e., $\mu'_i\geq \mu_i$. First observe that  $v_i(X_j)=v_i(X_{j,i})$, due to the graphical property of the valuations. So since $X_{j,i}$ is a singleton, for any $j\in K_i$, we can always assume that the canonical MMS partition of agent $i$ for the original set of relevant edges $E_i$, is $B_{i,1},\ldots, B_{i,n}$ such that $X_{j,i}\subseteq B_{i,t}$ for some $t$. Therefore the set of items removed intersect with at most $\lvert K_i\rvert$ such bundles. This leaves in $E'$ at least $n'=n-\lvert K_i\rvert$ bundles $B_{i,t}$ intact, each of them guaranteeing value at least $\mu_i$.

Next we show, that if we reshuffle the items in $E'$ into $n'$
bundles, we guarantee $\mu_i'$ which can only be higher.
If $n'=1$, then $X_i=E'$, hence $v_i(X_i)\geq \mu_i$.  If $n'=2$, then
regardless of whether $i<j$ or $j<i$, $X_{ij}$ guarantees the MMS
value of the set $E^{\prime}$, hence $v_i(X_i)\geq \mu'_i\geq \mu_i$.
If $n'\ge 3$ then by using Claim~\ref{claim:multiadd} and the fact
that we have discarded all agents $j<i$ in $K_i$ who take singletons
of $E_{i,j}$, we conclude
$$v_i(X_i)\geq \sum_{j\not \in {K_i}}v_i(X_{i,j})\geq
\frac{1}{3}\sum_{j\not \in {K_i}}v_i(E_{i,j})\geq
\frac{1}{3}v_i(E')\geq \frac{1}{3}n'\mu'_i\geq \mu_i,$$ where the penultimate inequality holds due to additivity of $v_i$.
\end{proof}

The results of Theorem~\ref{thm:multiaddcutandchoose} are nontrivial. In general instances, PMMS does not imply an MMS allocation, nor does the converse hold~\cite{CaragiannisKurokawaACM19}. The implications are even stronger, as there exists an orientation that satisfies PMMS. This positive result does not hold for more general valuations.

The next theorem shows an alternative algorithm that achieves an $1$-out-of-$3$ MMS orientation which, in the case of $3$ or more agents, implies MMS.\footnote{Recall that when $d < n$, the existence of a $1$-out-of-$d$ MMS allocation is more restrictive than in general harder to achieve. Indeed, in Appendix 
we remark that the algorithm of Theorem~\ref{thm:multiaddcutandchoose} cannot guarantee $1$-out-of-$d$ MMS for any $d<n$.}

\begin{theorem} \label{thm:multiaddroundrobin}
    In every multi-graph with $n$ additive agents there exists an $1$-out-of-$3$ MMS orientation.
\end{theorem}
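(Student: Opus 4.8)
The plan is to exhibit an orientation in which every agent $i$ retains a set of value at least $\mu_i^3$. This already suffices for the statement: since the $1$-out-of-$d$ threshold is (weakly) monotonically non-increasing in $d$ for goods, we have $\mu_i^3 \ge \mu_i^n = \mu_i$ whenever $n \ge 3$, so a $1$-out-of-$3$ orientation is in particular an MMS orientation. The single inequality driving the whole argument is that, for additive $v_i$, the three parts of any $3$-partition of $E_i$ cannot all exceed the average, whence $\mu_i^3 \le \tfrac13 v_i(E_i)$. Thus it would be enough to orient the edges so that each agent keeps at least a $1/3$-fraction of the total value she assigns to her incident edges.

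I would run a round-robin procedure: fix an order of the agents and let them, cycling through that order, claim the most valuable still-available edge incident to them, orienting it toward the claimant; an agent who has already secured value $\ge \mu_i^3$ (e.g.\ every vertex with $\mu_i^3 = 0$, such as the leaves of a star) is skipped, which is exactly what prevents non-needy neighbors from greedily stripping a needy agent of her edges. The point of letting each unordered pair $\{i,j\}$ interact is that the restriction of the process to the edges $E_{i,j}$ behaves like a cut-and-choose split, so the local reasoning of Claim~\ref{claim:multiadd} is available: whenever agent $i$ does not surrender a single dominating edge of $E_{i,j}$ to $j$, she keeps at least $\tfrac13 v_i(E_{i,j})$ of that pair's value. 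If this $1/3$-bound held against every neighbor, then by additivity $v_i(X_i) \ge \tfrac13 \sum_j v_i(E_{i,j}) = \tfrac13 v_i(E_i) \ge \mu_i^3$, which is precisely the target, and the proof would close.

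The main obstacle is the exceptional case of Claim~\ref{claim:multiadd}: a neighbor acting as the chooser may grab a single heavy edge of $E_{i,j}$, and against such a neighbor the clean $1/3$-bound can fail. I would control these losses by charging each surrendered singleton against agent $i$'s \emph{canonical} $3$-partition $B_i^3 = (B_{i,1}, B_{i,2}, B_{i,3})$, arguing that the total value an agent can lose in this way never exceeds the slack $v_i(E_i) - \mu_i^3$ that her own optimal $3$-partition permits. Making this charging precise is the crux of the proof, and I expect it to absorb most of the work: one must exploit the graphical structure (each edge is contested by at most one other agent) to bound the number and weight of surrendered singletons, and likely choose the processing order so that an agent surrenders a singleton only to already-placed, and hence already-satisfied, lower-indexed neighbors --- mirroring the device in Theorem~\ref{thm:multiaddcutandchoose} where singleton-takers are removed to show the guarantee can only increase. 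By contrast, the reduction to retaining $\tfrac13 v_i(E_i)$ and the $1/3$-accounting above are routine.
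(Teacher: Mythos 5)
Your high-level reduction ($\mu_i^3 \le \tfrac13 v_i(E_i)$ by additivity, so a $1/3$-fraction of $v_i(E_i)$ would suffice) is fine, but the proof has a genuine gap at exactly the point you flag: the entire argument hinges on a charging scheme for surrendered heavy edges that you never construct, and the step you offer in its place does not hold. A fixed round-robin order does not make the restriction of the process to $E_{i,j}$ behave like cut-and-choose: between two of agent $i$'s turns, each of her $n-1$ neighbors gets a turn and may spend it on an edge of $E_i$, while $i$ may spend her own turn on a different neighbor's edge set. So neighbor $j$ can strip the best edges of $E_{i,j}$ before $i$ ever touches that set, and Claim~\ref{claim:multiadd} --- whose $1/3$ bound relies on the split being an optimal $\mu_j^2$ cut of $E_{i,j}$, not a greedy interleaved pick --- gives you nothing here. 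The losses therefore cannot be analyzed pairwise; you need a global accounting, and the proposed bound ``total loss $\le v_i(E_i)-\mu_i^3$'' is precisely the statement you would have to prove, not a tool you can invoke.

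The paper closes this gap with a different mechanism that you are missing: it is not a round-robin over a fixed agent order but a \emph{chain} --- after an edge $e'=(i_k,i_{k+1})$ is allocated to $i_k$, the \emph{other endpoint} $i_{k+1}$ picks next. This single design choice yields the two facts that make the analysis trivial: (a) before agent $i$'s first turn she can lose at most one incident edge (whoever takes an edge of $E_i$ hands the turn to $i$ immediately, so that loss happens in the round just before $i$'s first pick), bounding the value of everything allocated before her arrival by $v_i(e_i^*)\le v_i(B_{i,1})$; and (b) every edge of $E_i$ lost after her arrival can be injectively matched to an earlier pick of hers of at least equal value, so what she loses afterwards is at most what she keeps. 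Writing $v_i(E_i)=v_i(A)+v_i(B)+v_i(C)$ for the before/kept/after split then gives $v_i(B)\ge\bigl(v_i(B_{i,2})+v_i(B_{i,3})\bigr)/2\ge\mu_i^3$ with no charging against the canonical partition at all. Without replacing your fixed-order round-robin by this turn-passing rule (or supplying the missing charging argument in full), the proof does not go through.
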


\begin{proof}

  We present a greedy algorithm that for $n\ge 2$ produces a $1$-out-of-$3$ MMS allocation (hence an MMS allocation for $n \ge 3$).

  The
  algorithm runs in $m$ rounds. In the first round,  we
  pick an arbitrary agent $i_1$, who selects the relevant edge with
  the maximum value for her, edge $e=(i_1,i_2)\in E^1$. Then we
  proceed with agent $i_2$ that shares this edge (who may be agent
  $i_1$ in case of a self-loop). In round $k$, we allocate to the
  selected agent $i_k$, the best available edge
  $e'=(i_k,i_{k+1})\in E^k$ i.e.,
  $e' \in \arg \max_{e \in E^k} \{v_i(e)\}$. We update the set of
  edges $E^{k+1}=E^k\setminus \{e'\}$ and proceed to the next round
  $k+1$ with agent $i_{k+1}$. If in some round $k$ the set $E^{k}_{i_k} = \emptyset$
  for the selected agent $i_k$, then we continue with an arbitrary
  agent $i'$ such that $E^{k}_{i'}\ne \emptyset$. The algorithm
  terminates after $m=\lvert E \rvert$ iterations and each edge is
  allocated exactly to one agent.

  In what follows we will show that the resulting allocation is a
  $1$-out-of-$3$ MMS allocation. Take an agent $i$ and assume
  w.l.o.g. that the algorithm selects agent $i$ for the first time in
  round $k$. Let $e^*_{i} = \arg \max_{e \in E_i}\{v_i(e)\}$ be the edge with the maximum value for agent $i$. Let $P=(A,B,C)$ be a partition of
  $E$ in three (possibly empty) parts which is defined as follows: $A$ is
  the set of edges allocated before round $k$, $B$ is the set of edges
  allocated to agent $i$ and $C$ contains the rest of the edges, i.e.,
  the edges allocated after step $k$ to agents other than $i$. It
  suffices to shown that $v_i(B) \ge \mu_i^3$.
  
  First, observe that $\lvert A \cap E_i\rvert \le 1$. Indeed, if
  agent $j$ gets an item $e=(i,j)\in A\cap E_i$ in round $k'<k$ by definition agent $i$ will be selected in the next round, hence
  $k'=k-1$. Therefore, $v_i(A) \le v_i(e^*_{i})$. Next we claim that
  $v_i(C)\le v_i(B)$. It suffices to observe that for each edge
  $e_C \in C\cap E_i$ there exists a corresponding edge $e_B \in B$,
  with higher value for $i$; $i$ selected $e_B$ over $e_C$, hence
  $v_i(e_B)\geq v_i(e_C)$. We conclude that $v_i(B)\geq v_i(C)$ by noting that all edges $e\in C\setminus E_i$ are irrelevant for $i$.

  Let the canonical $1$-out-of-$3$ MMS partition of agent $i$ be
  $B_i^3=(B_{i,1},B_{i,2},B_{i,3})$ and w.l.o.g. assume that
  $v_i(B_{i,1}) \ge v_i(B_{i,2}) \ge v_i(B_{i,3})=\mu_i^3$. Note that
  $v_i(B_{i,1}) \ge v_i(e^*_{i}) \ge v_i(A)$. Since $P$ is partition
  of $E$ and $B_i$ is partition of $E_i$ the total valuation of
  agent $i$ over all parts sums up to $v_i(E_i)$ in both cases.
 
 We conclude that $v_i(B) \ge \left(v_i(B_{i,2})+v_i(B_{i,3})\right)/{2} \ge  \mu_i^3$ as needed. \end{proof}

\subsection{Limitation of Our Algorithms and Relations Among Fair Notions in Multi-Graphs} \label{sec:addPMMSrelateMMS}
As discussed in this section, we focus on \emph{symmetric} multi-graphs~\cite{Christodoulou}, where each edge $e = (i, j)$ has the same value $w_e$ for both agents $i$ and $j$ i.e. $v_i(e) = v_j(e) = w_e$. This symmetry allows us to establish our non-existence arguments in a structured manner.

\begin{example}\label{exa:improundrobin}
    There exists a symmetric multi-graph with $n$ additive agent where the allocation derived from the algorithm in Theorem~\ref{thm:multiaddroundrobin} can not imply a $(1/2+\epsilon)$-PMMS allocation for any $\epsilon > 0$.
\end{example}
\begin{proof}
    Assume the symmetric graph $G = (V,E)$ with $n$ agents and $\lvert E \rvert = 2(n-1)+1$ be the set of edges with one large edge (we denote it with $e_L$) and $2(n-1)$ small edges (we denote them with $e_i, i \le 2(n-1)$). We will focus on two agents, let them be agent $1$ and agent $2$. Each edge has endpoints $(1,2)$ and weight $w_{e_L}=n$ for the large edge and $w_{e_i}=1/2$ for the small edges. Both agents have MMS value equal to one and the canonical partition $B_i^n=(B_{i,1},\ldots B_{i,n})$ with $B_{i,t}= \{e_t,e_{2t}\},t \le n -1$ and $B_{i,n}=\{e_L\},i \in \{1,2\}$.
    The algorithm assigns to the first agent her most valuable edge, the large edge,  and then agents will select small edges sequentially, $n-1$ small edges each agent. Let $X=(X_1,X_2,\emptyset,\ldots,\emptyset)$ be the resulting allocation. For agent $2$ her PMMS value compare to agent $1$ is $\operatorname{PMMS}_2(X) = (n-1)$ if she partitions the edges in $X_1 \cup X_2$ into two bundles, one with the large edge and the other with the remaining small edges. She achieved $v_2(X_2) = \frac{1}{2}(n-1)  < n-1$ which $\frac{1}{2}$-$\operatorname{PMMS}_2$.
\end{proof}

The following example shows that the allocation derived from the algorithm  in Theorem~\ref{thm:multiaddcutandchoose} that achieves both MMS and PMMS does not guarantee $1$-out-of-$d$ MMS for any $d<n$.
\begin{example}\label{exa:impcutandchoose}
    There exists a symmetric graph with $n$ additive agent where the allocation derived from the algorithm in  Theorem~\ref{thm:multiaddcutandchoose} can not imply $\alpha$-out-of-$d$ MMS for any $d < n$ up to any factor $\alpha > 0$.
\end{example}
\begin{proof}

  Consider the complete graph $K_n$ with unit weights. Note that each agent's MMS value is equal to zero, as there is the case of graph. However, for each agent, the $1$-out-of-$(n-1)$ MMS value is equal to one, since it is possible to form $n-1$ bundles, each containing a single relevant edge. The last agent in the selection order will partition her edges into two parts, one of which is the empty set, when she applies the cut and choose protocol with every other agent.  As she is always the last to choose, she receives the empty set. As a result, while she trivially satisfies her MMS guarantee, she cannot guarantee $1$-out-of-$d$ up to any factor for $d < n$.

\end{proof}
The following proposition shows that even on symmetric multi-graph an MMS allocation does not guarantee $\alpha$-PMMS up to any factor.
\begin{proposition}\label{prop:MMSnotimplyPMMS}
 There exists a symmetric graph with $n$ additive agent where an MMS allocation can not imply $\alpha$-PMMS up to any constant factor.
\end{proposition}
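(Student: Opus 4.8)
The plan is to construct an explicit symmetric
multi-graph on $n$ agents together with an MMS allocation in which some
agent's realized value is an arbitrarily small fraction of her PMMS
threshold. The natural idea is to reuse the gadget of
Example~\ref{exa:improundrobin}: take two agents, say $1$ and $2$,
connected by a single ``large'' edge $e_L$ of weight $w_{e_L}=W$ and by
$2(n-1)$ ``small'' parallel edges each of weight $w=1/2$, where $W$ is a
large parameter we will send to infinity. As in that example, both
agents can partition their relevant edges into $n$ bundles each of value
$1$ (each of the $n-1$ bundles holds two small edges, and the last
bundle holds $e_L$ alone once $W\ge 1$), so $\mu_1=\mu_2=1$.

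First I would exhibit an allocation that is MMS but pits the two agents
against each other badly. Assign the large edge $e_L$ to agent $1$ and
split the $2(n-1)$ small edges so that agent $2$ receives all (or most)
of them while agent $1$ receives none of them. Then
$v_1(X_1)\ge W\ge 1=\mu_1$ and $v_2(X_2)=(n-1)\ge 1=\mu_2$, so the
allocation is exact MMS. I would give the remaining agents $3,\dots,n$
nothing (or only irrelevant edges), which is fine because in a complete
graph with unit weights, or in this gadget, their MMS value can be set to
$0$; to be safe I would keep the construction so that every other agent
trivially meets her MMS guarantee.

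Next I would compute the PMMS threshold that agent $2$ faces against
agent $1$. On the pooled bundle $X_1\cup X_2$, which contains $e_L$ of
weight $W$ together with the $2(n-1)$ small edges of total weight $n-1$,
agent $2$'s best $2$-partition puts $e_L$ in one part and all the small
edges in the other, yielding
$\mu_2^2(X_1\cup X_2)=\min\{W,\;n-1\}=n-1$ for $W\ge n-1$. Hence
$\operatorname{PMMS}_2(X)\ge n-1$, while $v_2(X_2)=n-1$ as computed
above, so this particular split is only borderline. To make the ratio
collapse I would instead let agent $1$ hold $e_L$ \emph{and} a constant
fraction of the small edges, handing agent $2$ only a vanishing share
while still keeping each agent above her MMS value of $1$; then
$v_2(X_2)$ stays near $1$ whereas $\operatorname{PMMS}_2(X)$ grows with
$W$, driving $v_2(X_2)/\operatorname{PMMS}_2(X)\to 0$.

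The main obstacle is the tension between the two constraints: the
allocation must remain \emph{exact} MMS for every agent (which forces
$v_2(X_2)\ge\mu_2=1$), yet I need $v_2(X_2)$ to be an arbitrarily small
fraction of $\operatorname{PMMS}_2$, which I inflate by making the pooled
pair extremely valuable. The resolution is to decouple the two
quantities by scaling: fix $\mu_2=1$ via the small-edge bundles but let
$W\to\infty$, so that agent $2$, when re-dividing $X_1\cup X_2$
optimally, could have guaranteed herself roughly $\min\{W, n-1+\text{her
own small edges}\}$, an amount growing without bound, while she actually
receives only her MMS-level bundle. Choosing the edge weights and the
split carefully so that agent $2$ simultaneously (i) meets MMS with
equality and (ii) has arbitrarily large PMMS threshold against agent $1$
is exactly where the construction must be tuned, and verifying both
bundle computations for the canonical partitions is the part I would
write out in full.
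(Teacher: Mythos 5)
There is a genuine gap, and it sits exactly at the step you flagged as the crux. Your gadget has a \emph{single} large edge $e_L$ of weight $W$ plus $2(n-1)$ small edges of total value $n-1$. For any pooling $X_1\cup X_2$ that contains $e_L$, every $2$-partition must place $e_L$ in one part, and the other part then consists only of small edges, so its value is at most $n-1$. Hence $\mu_2^2(X_1\cup X_2)\le n-1$ \emph{independently of $W$} --- indeed you compute exactly this ($\min\{W,n-1\}=n-1$) in your second paragraph, and then contradict it in the third by asserting that $\operatorname{PMMS}_2(X)$ ``grows with $W$.'' It does not: sending $W\to\infty$ leaves the threshold pinned at $n-1$, so the best ratio your construction can produce is $v_2(X_2)/\operatorname{PMMS}_2(X)\ge 1/(n-1)$, a fixed constant for fixed $n$. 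The scaling argument you rely on to drive the ratio to $0$ therefore fails.

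The missing idea is that you need \emph{two} large items, not one. The paper's construction takes $n=3$ agents and three parallel edges between agents $1$ and $2$: two large edges $e_{L_1},e_{L_2}$ of weight $M$ and one small edge $e_S$ of weight $1$. The $3$-partition putting each edge in its own bundle gives $\mu_1=\mu_2=1$ (the minimum is the small edge), so the allocation $X=(\{e_S\},\{e_{L_1},e_{L_2}\},\emptyset)$ is exact MMS. But now the $2$-partition of $X_1\cup X_2=E$ into $\{e_{L_1}\}$ and $\{e_{L_2},e_S\}$ has minimum value $M$, so $\operatorname{PMMS}_1(X)=M$ while $v_1(X_1)=1$, and $M\to\infty$ kills any constant factor. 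The structural point is that the MMS ($n$-partition) threshold is controlled by the \emph{smallest} bundle, which a lone cheap item keeps at $1$, whereas the pairwise $2$-partition threshold can be made large only if both parts can be made large --- which requires at least two heavy items to split between them. Your one-large-edge gadget can never satisfy the second condition. (A secondary route --- keeping your gadget and letting $n\to\infty$ to get ratio $1/(n-1)$ --- would salvage the statement, but that is not the argument you make, and it is weaker and less clean than the two-large-edge construction.)
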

\begin{proof}
    We will construct a counter example. Assume the instance $G=(V,E)$ where $V=\{1,2,3\}$ is the set of three agents and $\lvert E \rvert = 3$ is the set of $3$ edges, two large edges (we denote them with $e_{L_i}, i \le 2$) and one small edge (we denote it with $e_S$). We will focus on two agents, agent $1$ and agent $2$. Each edge has endpoints $(1,2)$ and weight $w_{e_{L_i}}=M$ for some $M>0$  for the large edges and $w_{e_S}=1$ for the small edge.
    
    The MMS values are $\mu_1(E)=\mu_2(E)=1$ and $\mu_3(E)=0$ as agents $1$ and $2$ can make three bundle each one with one edge and all edges are irrelevant for agent $3$.
    Allocation $X=\left(X_1,X_2,X_3\right)=\left(e_S,E \setminus \{e_S\},\emptyset\right)$ is MMS but is not PMMS as $
    \operatorname{PMMS}_1=\mu_1^3\left(X_1 \cup X_2\right) = M > v_1\left(X_1\right) = 1$ and hence it is $\frac{1}{M}$-PMMS. For any $\epsilon > 0$ there exists an arbitrary large $M$ such that we cannot guarantee more than $\epsilon$-PMMS. 
\end{proof}

\section{XOS valuations}
\label{sec:XOS}
In this section, we study multi-graphs with agents having XOS valuations. First, we provide our tools which are of independent interest. Then, we study the case of a few agents as a warm up and show the existence of a $(1-1/d)$-out-of-$d$ MMS orientation for $n \le 3$ (Theorem~\ref{thm:twothreeXOS}). Our main result demonstrates the existence of a $2/3$-MMS orientation in multi-graphs with three or more agents (Theorem~\ref{thm:XOS4+}). This separates the graphical model from the non-graphical model, for which a $1/2$ upper bound is known~\cite{GhodsiHSSY22}. Next, in Section~\ref{sec:SBXOS}, we establish asymptotically improved bounds $2/3$-out-of-$\mathcal{O(\sqrt{n}})$ (Theorem~\ref{thm:sqrtXOS}) by utilizing recent results from \cite{DaiLiuZhang2025}.

Furthermore, in the same paper, the authors state a conjecture for the existence of independent transversal sets as a function of the average degree of the graph. The main result in Section~\ref{sec:XOSITS}, (Theorem~\ref{thm:connectionITS}) states that, if their conjecture is proved true, for every $1-1/k$ with $k \ge 2$ there exists a positive integer $n_k$ such that for all $n \ge n_k$ the instance with $n$ XOS agents with graphical valuations admits a $(1-1/k)$-MMS orientation. Moreoever, if there exists an upper bound of $\alpha$ such that for every $n$ the instance with $n$ XOS agents with graphical valuations does not admit a $\alpha$-MMS allocation, then their conjecture is falsified. That is, we show a strong connection between MMS allocations for XOS agents in multi-graphs and independent transversal sets, greatly enhancing our understanding of the former via tools from graph theory.

We complement this positive result with impossibility results (upper bounds) in Section~\ref{sec:upper-bound-xos} for any number of agents and any number of bundles. Specifically, we prove upper bounds of $2/3$-MMS for three and four agents and for more agents, we show an upper bound of an $\left(1-\frac{1}{\left\lceil n/2\right\rceil+1}\right)$-MMS (Theorem~\ref{thm:XOSupper}). Moreover, in the same theorem, we provide upper bounds for two and three agents with $d$ bundles matching the corresponding lower bounds. For both the upper and lower bounds, we employ tools from combinatorics, particularly the theory of transversal independent sets in multipartite graphs.

We also focus on $\alpha$-out-of-$d$ MMS approximations for small values of $d$, independent of $n$. Analogous to the $1$-out-of-$3$ result for additive valuations (Theorem~\ref{thm:multiaddroundrobin}), we show that a modified version of the algorithm used in that proof can be adapted for XOS agents, achieving a $1/2$-out-of-$2$ MMS for $n$ agents (Theorem~\ref{thm:XOS1outofd}). This guarantee is optimal for the MMS notion in the graphical setting Theorem~\ref{thm:XOSupper}.
 
For the PMMS notion, we show the existence of a tight $1/2$-PMMS orientation. The same result also holds for subadditive valuations so its presentation is deferred to Section~\ref{sec:subadditive} i.e. the lower bound holds for the general class of subadditive valuations while the impossibility result of the upper bound is shown by construction with XOS valuations. We note that in our construction for the upper bound there exists a PMMS allocation while we cannot guarantee more than $(1/2+\varepsilon)$-PMMS orientation, showing that the fair notion is wasteful i.e. in order to guarantee fairness we must allocate the items in a suboptimal way.

\subsection{Our Tools}\label{sec:tools-XOS}

In this section, we present two lemmas and a definition that are essential to our analysis. We believe that our approach is of independent interest and may find its applications in future work. The following definition introduces a special type of partial allocation, which we call \textit{frugal}, as it assigns each agent subsets from only one of the $1$-out-of-$d$ MMS bundles; it never combines items from different MMS bundles. 

\begin{definition}\label{def:frugal}
Let ${\bf B}^d=\left(B_1^d,\ldots,B_n^d\right)$ be the vector of the canonical $1$-out-of-$d$ MMS partitions. An allocation $X$ is said to be \emph{frugal} with respect to ${\bf B}^d$  if for all $i$ there exists a $t_i\leq d$ such that $X_i \subseteq B_{i,t_i}^d$.  
\end{definition}

For simplicity, we refer to an allocation as frugal (omitting $\mathbf{B}^d$), with the understanding that this refers to the canonical partition vector $\mathbf{B}^d$. A frugal allocation may be partial; however, assigning the remaining edges arbitrarily (e.g., orienting them randomly) does not affect the approximation factor of the allocation, as $1$-out-of-$d$ MMS fairness guarantees are preserved due to monotonicity. 

The following normalization approach, which has been noted in the literature for MMS and XOS valuations (see, e.g., \cite{GhodsiHSSY22}), allows us to assume (by appropriate rescaling) that an agent's $1$-out-of-$d$ MMS bundles all have equal value. We provide a proof sketch for completeness.

\begin{lemma}\label{lem:normalized}
If there exists an $\alpha$-out-of-$d$ MMS allocation for normalized graphical XOS valuations, where $v_i(B_{i,t})=1$ and $v_i(S) \le 1, \forall S \subseteq E$ for each agent $i$ and each canonical bundle $B_{i,t}$, then there exists an $\alpha$-out-of-$d$ MMS allocation for graphical XOS valuations.
\end{lemma}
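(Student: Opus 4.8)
The plan is to reduce an arbitrary graphical XOS instance to a normalized one by applying two operations independently to each agent: a rescaling that sets the $1$-out-of-$d$ MMS value to $1$, followed by a truncation (``capping'') at $1$. Both operations should be shown to preserve the XOS property and the graphical structure, to leave the canonical partition usable, and to be harmless for the approximation guarantee.

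First I would rescale. For each agent $i$ with $\mu_i^d>0$, replace $v_i$ by $v_i/\mu_i^d$. Scaling by a positive constant preserves XOS (just scale every additive function in the representation), preserves irrelevance of edges (hence the graphical structure), and preserves the canonical partition $B_i^d$; moreover an allocation is $\alpha$-out-of-$d$ MMS for the scaled valuation iff it is for the original. After this step $\mu_i^d=1$, and since the canonical partition maximizes the minimum bundle value, every canonical bundle satisfies $v_i(B_{i,j})\ge \mu_i^d=1$. Agents with $\mu_i^d=0$ are trivially satisfied by any allocation and can be set aside.

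Next I would cap: define $\hat v_i(S)=\min\{v_i(S),1\}$. The key step is to check that $\hat v_i$ is still XOS. Writing $v_i=\max_{l}a_l$ with each $a_l$ additive and nonnegative, one has the pointwise identity $\min\{\max_l a_l(S),1\}=\max_l\min\{a_l(S),1\}$, so $\hat v_i=\max_l\min\{a_l,1\}$. Each budget-additive function $\min\{a_l,1\}$ is monotone, normalized, and submodular, hence XOS, and a maximum of XOS functions is XOS (take the union of the underlying additive families). The remaining properties are immediate: $\hat v_i(S)\le 1$ for all $S$ by construction; $\hat v_i(B_{i,j})=\min\{v_i(B_{i,j}),1\}=1$ because $v_i(B_{i,j})\ge 1$; capping preserves irrelevance, so the instance stays graphical; and the same partition $B_i^d$ witnesses $\hat\mu_i^d\ge 1$, while $\hat v_i(S)\le 1$ forces $\hat\mu_i^d\le 1$, so $\hat\mu_i^d=1$ with $B_i^d$ still a valid canonical partition.

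Finally I would transfer a solution back. If $X$ is an $\alpha$-out-of-$d$ MMS allocation for the normalized valuations, then for each $i$ we have $\min\{v_i(X_i),1\}=\hat v_i(X_i)\ge \alpha\hat\mu_i^d=\alpha$; since $\alpha\le 1$ this forces $v_i(X_i)\ge \alpha=\alpha\,\mu_i^d$, so $X$ is $\alpha$-out-of-$d$ MMS for the rescaled and hence, undoing the scaling, for the original valuations. I expect the only real content to be the XOS-preservation of the capping operation; the commutation identity together with the submodularity of budget-additive functions handles it cleanly, and everything else is routine bookkeeping.
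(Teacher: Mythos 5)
Your proof is correct and follows essentially the same route as the paper: rescale each agent's valuation by $\mu_i^d$, cap at $1$, verify that the canonical bundles and the MMS value stay at $1$, and transfer the allocation back via $\alpha \le \hat v_i(X_i) \le v_i(X_i)/\mu_i^d$. The only difference is that the paper cites Lemma~5.2 and Observation~5.2 of Ghodsi et al.\ for the fact that capping preserves XOS, whereas you prove it directly via the identity $\min\{\max_l a_l,1\}=\max_l\min\{a_l,1\}$ and the submodularity (hence XOS-ness) of budget-additive functions---a self-contained and equally valid justification.
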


\begin{proof}
  Let $v_1,\ldots, v_n$ be the valuations of the agents with $v_i$
  being the valuation of agent $i$ and
  ${B}_i^d=(B_{i,1},\ldots,B_{i,d})$ being her canonical
  $1$-out-of-$d$ MMS bundles. We will define scaled valuations $v_i''$ such that
  $v_i''(B_{i,t}) = 1$ for all $i, t$. Moreover for every subset $S\subseteq E$ whenever
  $v_i''(S) \ge \alpha$ in the scaled instance, then it must be $v_i(S) \ge \alpha \mu_i^d(E)$ in the original instance.
  
We can first scale the valuations (by $\mu_i^d(E)$), such that $\mu'^d_i(E)=1$ and hence  $v'_i(B_{i,t})\geq 1$ for all $i, t$. 
  
Then by use of Lemma 5.2 in \cite{GhodsiHSSY22} we can define the capped valuation $v_i''\left(S\right)=\min\left\{1,v_i'\left(S\right)\right\}$,$\forall S \subseteq E$ and maintaining the XOS property. From Observation 5.2 in \cite{GhodsiHSSY22} we have $v''_i\left(S\right)\le v_i'\left(S\right),\forall S \subseteq E$ and hence
$\mu''^d_i \le \mu'^d_i=1$. As a result, for the canonical bundles in $B_{i}^d$ we obtain $v_i''(B_{i,t})=1$ for all $j$ and $\mu''^d_i=1$. Thus, the existence of $\alpha$-out-of-$d$ MMS allocation $X = (X_1, \ldots,X_n)$ for the normalized values $v''_i$ implies the existence of  $\alpha$-out-of-$d$ MMS for the original instance with XOS agents i.e. $\alpha \le v_i''(X_i) \le v_i'(X_i)$.\end{proof}

In light of Lemma~\ref{lem:normalized}, we assume for the rest of this
section that all values are normalized; that is, $\mu_i^d = 1$ for all
$i \in V$, and for the canonical $1$-out-of-$d$ MMS partition
$B_i^d = (B_{i,1}, \ldots, B_{i,d})$ of agent $i$, we have
$v_i(B_{i,t}) = 1,t \le d$.

The next Lemma establishes a useful property of frugal allocations for agents with XOS valuations, and its applicability extends beyond graphical instances. The Lemma state that it suffices to assume additive valuation functions and frugal allocations for XOS agents (in general).

\begin{lemma} \label{lem:XOSadditive} 
The instance with $n$ XOS agents admits an $\alpha$-out-of-$d$ MMS allocation if and only if the instance with $n$ additive agents admits a \textit{frugal} $\alpha$-out-of-$d$ MMS allocation.
\end{lemma}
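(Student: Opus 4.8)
The plan is to prove the two directions by passing, in each case, to a companion instance in the other valuation class that lives on the \emph{same} graph and shares the \emph{same} canonical partitions $\mathbf{B}^d$, and then transferring allocations across. Before doing anything I would invoke Lemma~\ref{lem:normalized} to assume normalization, so that $\mu_i^d=1$ and (for XOS) $v_i(B_{i,j})=1$ for every agent $i$ and every canonical bundle $B_{i,j}$; the conceptual point is that frugality is exactly what restores the tightness that the outer $\max$ in the XOS definition otherwise destroys, and this is what lets values be transferred without loss.

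For the direction that a frugal additive guarantee lifts to an XOS guarantee (the direction used for the $2/3$ lower bound), I would start from a graphical XOS instance $(v_i)$ and, for each agent $i$ and index $j$, pick an additive clause $a_{i,j}$ of $v_i$ that supports $v_i$ at $B_{i,j}$, i.e.\ $a_{i,j}(B_{i,j})=v_i(B_{i,j})=1$ and $a_{i,j}(S)\le v_i(S)$ for all $S$ (such a clause is the maximizing term of the XOS representation at $B_{i,j}$). Since the canonical bundles $B_{i,1},\dots,B_{i,d}$ partition the relevant set $E_i$, I would glue these clauses edge-by-edge into a single additive valuation $w_i$ by setting $w_i(e)=a_{i,j}(e)$ whenever $e\in B_{i,j}$ (and $w_i(e)=0$ for irrelevant $e$). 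The key bookkeeping step is to check $w_i(B_{i,j})=1$ for all $j$ and $w_i(E_i)=\sum_j a_{i,j}(B_{i,j})=d$, which forces $\mu_i^d(w_i)=1$ with $B_i^d$ still optimal. Then, given a frugal $\alpha$-out-of-$d$ MMS allocation $X$ for $(w_i)$ with $X_i\subseteq B_{i,k_i}$, frugality makes $w_i$ coincide on $X_i$ with the single supporting clause $a_{i,k_i}$, so $v_i(X_i)\ge a_{i,k_i}(X_i)=w_i(X_i)\ge\alpha$, and $X$ is an $\alpha$-out-of-$d$ MMS allocation for the XOS instance.

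For the converse direction (used for the impossibility bounds) I would start from a graphical additive instance $(w_i)$, scaled so that $\mu_i^d(w_i)=1$, and define the companion XOS valuation $v_i(S)=\max_j w_i(S\cap B_{i,j})$. This $v_i$ is XOS because each restriction $S\mapsto w_i(S\cap B_{i,j})$ is additive, and it is graphical because it depends only on $E_i$. I would then verify $v_i\le w_i$ pointwise while $v_i(B_{i,j})=w_i(B_{i,j})$, which pins down $\mu_i^d(v_i)=\mu_i^d(w_i)=1$ with $B_i^d$ again optimal. Given an $\alpha$-out-of-$d$ MMS allocation $Y$ for $(v_i)$, I would frugalize it by letting $k_i$ attain the maximum defining $v_i(Y_i)$ and setting $X_i=Y_i\cap B_{i,k_i}$: the parts $X_i$ are disjoint (being subsets of the disjoint $Y_i$), frugal by construction, and satisfy $w_i(X_i)=\max_j w_i(Y_i\cap B_{i,j})=v_i(Y_i)\ge\alpha$.

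I expect the main obstacle to be precisely this converse direction, since frugalizing an arbitrary allocation normally destroys value: an agent's bundle may draw a little value from each of several MMS parts, and discarding all but one part can drop her below $\alpha$. The resolution is to engineer the companion XOS valuation as the maximum of the per-bundle restrictions, so that any value an agent can extract is \emph{already} concentrated in a single bundle and frugalization is free. The two constructions are dual, and the only genuinely delicate point in each is the normalization bookkeeping---the identities $w_i(E_i)=d$ in the first construction and $v_i\le w_i$ with equality on the canonical bundles in the second---which is exactly what certifies that $\mu_i^d$ is preserved and hence that the approximation factor $\alpha$ carries over unchanged.
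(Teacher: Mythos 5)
Your proposal is correct and follows essentially the same route as the paper's proof: the same gluing of per-bundle supporting clauses into a single additive valuation for one direction, and the same companion XOS valuation $\max_j w_i(S\cap B_{i,j})$ with the frugalization $X_i=Y_i\cap B_{i,k_i}$ for the other. Your verification that $\mu_i^d$ is preserved via the pointwise inequality $v_i\le w_i$ together with equality on the canonical bundles is a slightly cleaner bookkeeping than the paper's uniqueness argument, but the substance is identical.
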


\begin{proof}
We will first show that for every  instance (and thus also in a graph) with $n$ agents and XOS valuations $v_i$, there exists an equivalent instance with the same items  (i.e.  in the same graph for graphical valuations) with additive functions $v'_i$ and $\mu'_i = \mu_i$ such that for every frugal allocation $X = (X_1, \ldots, X_n)$, we have $v'_i(X_i) \le v_i(X_i)$. Thus, a frugal $\alpha$-MMS allocation in the equivalent instance (with additive agents) yields an $\alpha$-MMS allocation in the original instance (with XOS agents). Then we will show that given an instance with $n$ agents and additive valuations $v_i$, if there exists an upper bound of $\alpha$-out-of-$d$ MMS for frugal allocations then  there exists an equivalent instance with the same items with XOS functions $v'_i$ and $\mu'_i = \mu_i$ such that for every allocation $X = (X_1, \ldots, X_n)$, we have $v'_i(X_i) \le \alpha \mu_i'^d$.

Consider an instance with $m$ items and $n$ agents with XOS valuation functions $v_1, \ldots, v_n$. By definition of XOS, the valuation of each agent $i$ is $v_i(S) = \max_{j \le k}\{a_j(S)\}$, for all $S \subseteq E$, where $a_j$ is a valuation function from a set of $k$ additive functions $a_1, \ldots, a_k$. By Lemma~\ref{lem:normalized}, we can assume that every canonical bundle has a value of 1. We now define additive valuation functions $v'_1, \ldots, v'_n$.

Let $B_i^d = (B_{i,1}, \ldots, B_{i,d})$ be the canonical $1$-out-of-$d$ MMS partition of agent $i$, and let $f^*_{i,t}$ be the additive function corresponding to the bundle $B_{i,t}$. Therefore $v_i(B_{i,t}) = \max_{j \le k}\{a_j(B_{i,t})\} = f^*_{i,t}(B_{i,t}) = 1$.

Observe that every item $e$ belongs to exactly one bundle $B_{i,t_e}$ for some index $t_e$, and thus, we can define the additive function $v'_i$ as follows:
\[
v'_i(e) = f^*_{i,t_e}(e), \quad e \in B_{i,t_e}\]
and
\[
v'_i(S) = \sum_{e \in S} v'_i(e), \quad S \subseteq E.
\]

We now claim that $\mu'^d_i = 1$. To see this, observe that $\mu'^d_i \ge \mu_i^d = 1$; that is, if we consider the same canonical bundles for agent $i$ in the additive setting $B_i^d = (B_{i,1}, \ldots, B_{i,d})$, we achieve a value of $v'_i(B_{i,t}) = 1$ for all $t$. Moreover, due to additivity, we have $v_i'(E) = \sum_{t=1}^d v_i'(B_{i,t}) \ge d \min_t\{v_i'(B_{i,t})\} = d\mu'^d_i$, and thus, $\mu'^d_i \le v_i'(E)/d = 1$.

Now consider a frugal $\alpha$-out-of-$d$ MMS allocation $X = (X_1, \ldots, X_n)$, where $X_i \subseteq B_{i,t_i}$ for some $t_i$. We conclude that $\alpha \le v'_i(X_i) = f^*_{i,t_i}(X_i) \le \max_j\{a_j(X_i)\} = v_i(X_i).$

Assume now an instance with $m$ items and $n$ agents with additive valuation functions $v_1, \ldots, v_n$ in whicht here is not $(\alpha + \varepsilon)$-out-of-$d$ MMS frugal allocation. We will define the XOS valuation functions $v'_1, \ldots, v'_n$.

Let $B_i^d = (B_{i,1}, \ldots, B_{i,d})$ be the canonical $1$-out-of-$d$ MMS partition of agent $i$, and let $a_{i,k}$ be an additive function as follows
$$a_{i,t}(e)=\begin{cases}
    v_i(e), & e \in B_{i,t} \\
    0, & \text{otherwise}
\end{cases}$$

For each agent $i$ we denote the XOS function $v_i'(S)=\max_{j \le d}\{a_{i,j}(S)\}$, for all $S \subseteq E$.

We now claim that $\mu'^d_i = \mu_i^d$. To see this, observe that $\mu'^d_i \ge \mu_i^d$; that is, if we consider the same canonical bundles for agent $i$ in the additive setting $B_i^d = (B_{i,1}, \ldots, B_{i,d})$, we achieve a value of $v'_i(B_{i,t}) \ge \mu_i^d$ for all $t$. Moreover the partition $B_i^d = (B_{i,1}, \dots, B_{i,d})$ is the unique $1$-out-of-$d$ MMS partition of agent $i$, with $\mu_i'^d =\mu_i^d$. To see this, suppose there exists another $1$-out-of-$d$ MMS partition $X = (X_1, \dots, X_d)$ such that $v_i(X_{t_X}) \ge \mu_i^d$ for every $t_X$, but $X$ is not a reordering of $B_i$. Then, for every $t_X \in \{1, \dots, d\}$, there exists a $B_{i,t}$ such that $X_{t_X} \supseteq B_{i,t}$ with the containment being strict for at least some $t^*$. This is impossible.

Now assume for the shake of contradiction that an $(\alpha+\varepsilon)$-out-of-$d$ MMS allocation $X = (X_1, \ldots, X_n)$ exists and thus for each $i$ there exists an additive function $f^*_{i,t_i}=v_i(X_i) \ge \alpha+\varepsilon$. Hence, allocation $X'=(X_1 \cap B_{1,t_1}, \ldots,X_n \cap B_{n,t_n})$ is frugal and moreover by construction $\alpha + \varepsilon \le v_i'(X_i)=f_{i,t_i}^*(X_i \cap B_{i,k_i})=v_i(X'_i)$ which contradicts the fact that there is not $(\alpha + \varepsilon)$-out-of-$d$ MMS frugal allocation in the original instance with the additive functions.
\end{proof}

Due to Lemma~\ref{lem:XOSadditive}, it suffices from now on to assume additive valuation functions and frugal allocations.

\subsection{Two and Three Agents}\label{sec:two-three-XOS}
As a warm up, for the case of two
\footnote{We note that for two agents this result was already known for $d=2$ and for subadditive values~\cite{christodoulouIJCAI25}.}
and three agents we show that there always exists a frugal $(1-1/d)$-out-of-$d$ MMS allocation, for any $d\geq 1$. This is used as a building block for the proof of the main result for many agents. In Section~\ref{sec:upper-bound-xos} we show that the results of Theorem~\ref{thm:twothreeXOS} are tight.

\begin{theorem}\label{thm:twothreeXOS}
      In every multi-graph with $n \le 3$ XOS agents there exists a frugal $(1-1/d)$-out-of-$d$ MMS orientation. 
\end{theorem}

\begin{proof}

We will break down the proof into two lemmas, one for the case of two agents and one for the case of three agents. Combining the two lemmas we prove the desired theorem.
\begin{lemma} \label{lem:2XOSdlower}
    In every multi-graph with $2$ XOS agents there exists a frugal $(1-1/d)$-out-of-$d$ MMS allocation.
\end{lemma}
\begin{proof}

  We focus on the partitioning of agent
  $1$'s canonical bundle $B_{1,1}$ into $d$ disjoint sets according
  to $B^d_2=(B_{2,1},\ldots,B_{2,d})$, the canonical $1$-out-of-$d$ MMS partition of
  agent $2$. Due to Lemma~\ref{lem:normalized} and Lemma~\ref{lem:XOSadditive} we assume that agent $1$ has additive value for this bundle and its value is equal to $1$, therefore 
  there exists at least one canonical bundle of agent $2$, let it be $B_{2,t^*}$, such that
  $v_1(B_{1,1} \cap B_{2,t^*}) \le 1/d$. Due to additivity, we obtain $v_1(B_{1,1} \setminus B_{2,t^*}) \ge (d-1)/d$. We output the allocation
  $X=(B_{1,1} \setminus B_{2,t^*},B_{2,t^*})$ which is is frugal and
  $(1-1/d)$-out-of-$d$. \end{proof}
\begin{lemma}\label{lem:3XOSdlower}
      In every multi-graph with $3$ XOS agents there exists a frugal $(1-1/d)$-out-of-$d$ MMS allocation.
\end{lemma}
\begin{proof}
  We will show that for every $d\geq 1$ there are indices
  $t_1,t_2,t_3 \leq d$ for agents $1, 2, 3$ respectively such that there exists a frugal allocation $X$ where $v_i(X_i) \ge (1-1/d)\mu_i^d$, with respect to $\mathbf{B}^d=(B_1^d,B_2^d,B_3^d)$  with $X_{i} \subseteq B_{i,t_i}$, for $i=1,2,3$.

  Intuitively, a given configuration of indices $t_1, t_2, t_3$ is a
  good candidate frugal solution if the demand on the intersections
  $B_{1,t_1}\cap B_{2,t_2}$, $B_{2,t_2}\cap B_{3,t_3}$ and
  $B_{1,t_1}\cap B_{3,t_3}$ of the canonical bundles satisfy
  certain properties. We call an intersection
  $B_{i,t_i}\cap B_{j,t_j}$ {\em large} for agent $i$, if
  $v_i\left(B_{i,t_i}\cap B_{j,t_j}\right)> 1/d$, otherwise we
  call it {\em small}. For each such intersection, only two agents are potentially interested, so if it is small for one of them, then they are happy to abandon one small intersection, as the value of the remaining bundle is higher than $1-1/d$.

  We will show by downward induction, that if there is a pair of
  agents $i, j$ and an index $t_i$ such that $i$ has $k$ or more
  large intersections of bundle $B_{i,t_i}$ with the canonical bundles
  of $j$, then there exists a frugal $(1-1/d)$-out-of-$d$ MMS allocation.
  Notice that by additivity (Lemma~\ref{lem:XOSadditive}) there exist at most {\em
    at most} $d-1$ large intersections for any such instantiation, i.e. $k\leq d-1$.

  \noindent{\bf Base case, $\bf k=d-1$}.  Assume that there exist two agents
  (say $1, 2$), and some bundle (say $B_{1,1}$) with $d-1$ large
  intersections for $1$ with the bundles of agent $2$ let them be $B_{2,1},\ldots, B_{2,d-1}$. Then, by additivity $v_1(B_{1,1} \cap B_{2,d}) \le \frac{1}{d}$. By similar arguments, regarding bundle $B_{2,d}$ and agent $2$, there
  must exist some bundle of agent $3$, say $B_{3,1}$, such that
  $v_2(B_{2,d} \cap B_{3,1}) \le \frac{1}{d}$. As a result the
  allocation $X=(B_{1,1} \cap(B_{2,1} \cup \ldots\cup B_{2,d-1}), B_{2,d} \setminus B_{3,1},B_{3,1})$ is
  $(1-1/d)$-out-of-$d$ and frugal and the lemma follows i.e. the lemma follows for $t_1=1,t_2=d$ and $t_3=1$. Figure~\ref{fig:all1} illustrates such an allocation for $d=3$.

  \begin{figure}[htbp]
   \begin{tabular}{ccc}
\begin{tikzpicture}[scale=0.6]
\small{
  \draw[step=1cm,] (0,0) grid (3,3);
\draw[pattern={north west lines},pattern color=red]
(2,0) rectangle +(1,3);
\draw[pattern={horizontal lines},pattern color=blue](0,2) rectangle +(2,1);

\node[anchor=north] at (0.3,4) {$B_{2,1}$};
\node[anchor=north] at (1.4,4) {$B_{2,2}$};
\node[anchor=north] at (2.5,4) {$B_{2,3}$};
\node[anchor=west] at (-1.3,0.5) {$B_{1,3}$};
\node[anchor=west] at (-1.3,1.5) {$B_{1,2}$};
\node[anchor=west] at (-1.3,2.5) {$B_{1,1}$};
\node[anchor=west] at (0.8,-0.5) {$E_{1,2}$};
\node[anchor=west] at (1,-1.2) {(a)};}
\end{tikzpicture} &\begin{tikzpicture}[scale=0.6]
\small{
\draw[pattern={crosshatch dots},pattern color=green](0,0) rectangle +(1,3);
  \draw[step=1cm,] (0,0) grid (3,3);

\node[anchor=north] at (0.3,4) {$B_{3,1}$};
\node[anchor=north] at (1.4,4) {$B_{3,2}$};
\node[anchor=north] at (2.5,4) {$B_{3,3}$};
\node[anchor=west] at (-1.3,0.5) {$B_{1,3}$};
\node[anchor=west] at (-1.3,1.5) {$B_{1,2}$};
\node[anchor=west] at (-1.3,2.5) {$B_{1,1}$};
\node[anchor=west] at (0.8,-0.5) {$E_{1,3}$};
\node[anchor=west] at (1,-1.2) {(b)};}
\end{tikzpicture} &
\begin{tikzpicture}[scale=0.6]
\small{
  \draw[step=1cm,] (0,0) grid (3,3);
\draw[pattern={crosshatch dots},pattern color=green](0,0) rectangle +(1,3);
\draw[pattern={north west lines},pattern color=red](1,0) rectangle +(2,1);
\node[anchor=north] at (0.3,4) {$B_{3,1}$};
\node[anchor=north] at (1.4,4) {$B_{3,2}$};
\node[anchor=north] at (2.5,4) {$B_{3,3}$};
\node[anchor=west] at (-1.3,0.5) {$B_{2,3}$};
\node[anchor=west] at (-1.3,1.5) {$B_{2,2}$};
\node[anchor=west] at (-1.3,2.5) {$B_{2,1}$};
\node[anchor=west] at (0.8,-0.5) {$E_{2,3}$};
\node[anchor=west] at (1,-1.2) {(c)};
}
\end{tikzpicture}\\
    \end{tabular}
  \caption{Common edges $E_{1,2}$ (a), $E_{1,3}$ (b) and $E_{2,3}$ (c). In each set $E_{i,j}$ rows represent the bundles $B_{i,t}$ of agent $i$ and columns represent the bundles $B_{j,t}$ of agent $j$. Cell $(t_i,t_j)$ represents the intersection $B_{i,t_i} \cap B_{j,t_j}$. The figure illustrates the frugal orientation $X=(B_{1,1}\cap(B_{2,1} \cup B_{2,2}),B_{2,3}\setminus B_{3,1},B_{3,1})$ of the base case with $d=3$. With blue, red and green being the bundles allocated to agent $1,2$ and $3$ respectively.}
  \label{fig:all1}
\end{figure}

  \noindent{\bf Induction step}. Assume now that there do not exist
  $i,j$ and $t_i$ such that $i$ has strictly more than $k$ large
  intersections of  $B_{i,t_i}$ with the canonical bundles of
  $j$, or otherwise there is a desired allocation. We will show that
  this is also the case for $k$.
  
  Therefore w.l.o.g. let's assume that for $B_{1,1}$ of agent $1$ there are $k$
  large intersections with the bundles of agent $2$ say the first $k$
  bundles $B_{2,1},\ldots, B_{2,k}$.  Agent $1$ would be
  happy to abandon any single intersection $B_{1,1}\cap B_{2,k+l}$ with
  $l=1,\ldots, d-k$.  Now we focus on those bundles of agent $2$ and
  consider the intersections with agent $3$. The following claim
  asserts that there are sufficiently many good pairs of distinct indices that agent 2
  considers small.

  \begin{claim}\label{claim:hall}
    There exist $d-k$ distinct set of indices $r_l, l=1,\ldots, d-k$, such that each intersection $B_{2,k+l}\cap B_{3,r_l}$ is small for agent $2$ for $l=1,\ldots, d-k$.
  \end{claim}
  \begin{proof}
  Consider a bipartite graph $G_b=(U_b,V_b,E_b)$ with parts
  $U_b =\{u_{1},\ldots,u_{k-d}\}$ and $V_b = \{v_1,\ldots v_d\}$. A
  vertex $u_l\in U_b$ corresponds to the bundle $B_{2,k+l}$ of agent $2$
  while a vertex $v_t\in V_b$ corresponds to the bundle $B_{3,t}$ of
  agent $3$. An edge $(u_{l},v_{t})$ exists if and only if the
  corresponding intersection is small for agent $2$, that is
  $v_2\left(B_{2,k+l} \cap B_{3,t}\right) \le 1/d$. By the induction
  hypothesis, for each bundle $B_{2,k+l}$ there exists at most $k$
  large intersections with the bundles of agent $3$ or else there exists a $(1-1/d)$-out-of-$d$ MMS
  allocation. Therefore, there are at least $d-k$ small intersections 
  of $B_{2,k+l}$ for agent $2$ with the bundles of agent $3$. As a result, the degree of each vertex
  in $U_d$ is at least $d-k$. By  Hall's Theorem there exists a matching of size $d-k$. Therefore there exist $d-k$
  pairs with distinct elements and the induction hypothesis holds also
  for $k$. \end{proof}

We note that all these $d-k$ pairs guaranteed to exist by
Claim~\ref{claim:hall} are good for agent $2$, i.e. agent $2$ would be
happy to abandon any one of the respective intersections with agent $2$. To
complete the proof, we need to argue that out of all these $d-k$ pairs of indices,
there is a pair $(k+l,r_l)$ for which agent 1 can afford to abandon the subset $S_l:=B_{1,1}\cap
\left(B_{2,k+l} \cup B_{3,r_l}\right)$ as it has value of at most $1/d$.

First, observe that the $S_l$'s are mutually disjoint for all
$l=1,\ldots, d-k$. Since agent $1$ has exactly $k$ large intersections
with agent $2$, then there are at most $d-k-1$ such $S_l$'s with
$v_i(S_l)> 1/d$ and therefore there is at least one $S_{l^*}$ with
value of at most $1/d$.

Hence, the allocation $X=(B_{1,1}\setminus \left(B_{2,k+l^*}\cup B_{3,r_{l^*}}\right),B_{2,k+l^*}\setminus B_{3,r_{l^*}},B_{3,r_{l^*}})$  is $(1-1/d)$-out-of-$d$ i.e. the lemma follows for $t_1=1,t_2=k+l^*$ and $t_3=r_{l^*}$
Figure~\ref{fig:all2} illustrates such an
allocation for $d=3$ and $l^*=3$.
\begin{figure}[htbp]
    \begin{tabular}{ccc}
\begin{tikzpicture}[scale=0.6]
\small{
  \draw[step=1cm,] (0,0) grid (3,3);
\draw[pattern={north west lines},pattern color=red]
(2,0) rectangle +(1,3);
\draw[pattern={horizontal lines},pattern color=blue](0,2) rectangle +(2,1);

\node[anchor=north] at (0.3,4) {$B_{2,1}$};
\node[anchor=north] at (1.4,4) {$B_{2,2}$};
\node[anchor=north] at (2.5,4) {$B_{2,3}$};
\node[anchor=west] at (-1.3,0.5) {$B_{1,3}$};
\node[anchor=west] at (-1.3,1.5) {$B_{1,2}$};
\node[anchor=west] at (-1.3,2.5) {$B_{1,1}$};
\node[anchor=west] at (0.8,-0.5) {$E_{1,2}$};
\node[anchor=west] at (1,-1.2) {(a)};}
\end{tikzpicture} &\begin{tikzpicture}[scale=0.6]
\small{
\draw[pattern={horizontal lines},pattern color=blue](0,2) rectangle +(1,1);
\draw[pattern={horizontal lines},pattern color=blue](2,2) rectangle +(1,1);
\draw[pattern={crosshatch dots},pattern color=green](1,0) rectangle +(1,3);
  \draw[step=1cm,] (0,0) grid (3,3);

\node[anchor=north] at (0.3,4) {$B_{3,1}$};
\node[anchor=north] at (1.4,4) {$B_{3,2}$};
\node[anchor=north] at (2.5,4) {$B_{3,3}$};
\node[anchor=west] at (-1.3,0.5) {$B_{1,3}$};
\node[anchor=west] at (-1.3,1.5) {$B_{1,2}$};
\node[anchor=west] at (-1.3,2.5) {$B_{1,1}$};
\node[anchor=west] at (0.8,-0.5) {$E_{1,3}$};
\node[anchor=west] at (1,-1.2) {(b)};}
\end{tikzpicture} &
\begin{tikzpicture}[scale=0.6]
\small{
  \draw[step=1cm,] (0,0) grid (3,3);
\draw[pattern={crosshatch dots},pattern color=green](1,0) rectangle +(1,3);
\draw[pattern={north west lines},pattern color=red](0,0) rectangle +(1,1);
\draw[pattern={north west lines},pattern color=red](2,0) rectangle +(1,1);
\node[anchor=north] at (0.3,4) {$B_{3,1}$};
\node[anchor=north] at (1.4,4) {$B_{3,2}$};
\node[anchor=north] at (2.5,4) {$B_{3,3}$};
\node[anchor=west] at (-1.3,0.5) {$B_{2,3}$};
\node[anchor=west] at (-1.3,1.5) {$B_{2,2}$};
\node[anchor=west] at (-1.3,2.5) {$B_{2,1}$};
\node[anchor=west] at (0.8,-0.5) {$E_{2,3}$};
\node[anchor=west] at (1,-1.2) {(c)};}

\end{tikzpicture}\\
    \end{tabular}
  \caption{Common edges $E_{1,2}$ (a), $E_{1,3}$ (b) and $E_{2,3}$ (c). In each set $E_{i,j}$ rows represent the bundles $B_{i,t}$ of agent $i$ and columns represent the bundles $B_{j,t}$ of agent $j$. Cell $(t_i,t_j)$ represents the intersection $B_{i,t_i} \cap B_{j,t_j}$. The figure illustrates the frugal orientation $X=(B_{1,1}\setminus(B_{2,3} \cup B_{3,2}), B_{2,3}\setminus B_{3,2},B_{3,2})$ of the induction step $k=1$, with $d=3$ and $l^*=3$. With blue, red and green being the bundles allocated to agent $1,2$ and $3$ respectively.}
  \label{fig:all2}
\end{figure}
\end{proof}

The proof of Theorem~\ref{thm:twothreeXOS} follows from Lemmas~\ref{lem:2XOSdlower} and \ref{lem:3XOSdlower}.
\end{proof}
\subsection{Many Agents}\label{four-XOS}

Next, we present our main result, asserting that there always exists a $2/3$-MMS allocation for $n\ge 3$.

\begin{theorem}\label{thm:XOS4+}
    In every multi-graph with $n \ge 3$ XOS agents there exists a frugal $2/3$-MMS orientation.
\end{theorem}
\begin{proof}
    Let $G = (V, E)$ be a graph with $n \ge 3$ vertices, and let $\mathbf{B}_i^n = (B_{i,1}, \ldots, B_{i,n})$ denote the canonical MMS partition of agent $i$ for the edge set $E_i$. For some $k \leq n$, let $\mathbf{I}_i(k) \subseteq \{1, 2, \ldots, n\}$ be a subset of $k$ indices, i.e., $|\mathbf{I}_i(k)| = k$. We define $B_i(\mathbf{I}_i(k))$ as the set of the respective bundles $B_{i,t}$ with $t \in \mathbf{I}_i(k)$ i.e. $B_i(\mathbf{I}_i(k))= \{B_{i,t} \mid t\in \mathbf{I}_i(k)\}$.

    We employ Lemma~\ref{lem:XOSadditive}, which asserts that it
    suffices to show that there exists a $2/3$-MMS {\em frugal}
    allocation in every multi-graph with $n \geq 3$ {\em additive}
    agents. We will show by induction on $k\in \{3,\ldots, n\}$, that
    for every subset $\mathbf{P}=\{p_1,\ldots,p_k\} \subseteq V$ of
    $k$ agents and for any fixed collection of $k$ sets $\mathbf{I}_{p_1}(k),\ldots, \mathbf{I}_{p_k}(k)$, of indices of size $k$ there exists a
    frugal orientation $X$ with respect to
    $\mathbf{B}^k=(B_{p_1}(\mathbf{I}_{p_1}(k)),
    \ldots,B_{p_k}(\mathbf{I}_{p_k}(k)))$ such that
    $v_{p_i}(X_{p_i}) \ge \frac{2}{3}\min_{t \in
      \mathbf{I}_{p_i}(k)}\{v_{p_i}(B_{p_i,t})\}\ge
    \frac{2}{3},\forall p_i\in \mathbf{P}$. Note that every frugal
    allocation with respect to bundles $B_{p_i}(\mathbf{I}_{p_i}(k))$
    for some agent $p_i \in \mathbf{P}$ and set of $k$ indices
    $\mathbf{I}_{p_i}(k)$ is also frugal with respect to the canonical
    partition $B_{p_i}^n$. Furthermore, observe that for $k=n$ we have $\boldsymbol{P} =V$. Thus for each agent $i\in V$ we have $v_{i}(X_{i}) \ge \frac{2}{3}\min_{t \in
      \mathbf{I}_{i}(n)}\{v_{i}(B_{i,t})\}\ge
    \frac{2}{3}V$ and the theorem follows.

    By Theorem~\ref{thm:twothreeXOS} we have already established the base
    case of $n=3$. Now, assume that the statement holds for all
    subsets $\mathbf{P}=\{p_1,\ldots,p_k\}\subset V$ of $k$
    agents and for all corresponding vectors of $k$ subsets of $k$ indices
    $\mathbf{I}_{p_1}(k),\ldots, \mathbf{I}_{p_k}(k)$, with $3\leq k <n$. We now show
    that the statement also holds for all possible subsets
    $\mathbf{P}=\{p_1,\ldots,p_{k+1}\}\subseteq V$ of $k+1$ agents and
    for all possible vectors of $k+1$ subsets of indices
    $\mathbf{I}_{p_1}(k+1),\ldots,
    \mathbf{I}_{p_{k+1}}(k+1)$. W.l.o.g. we will establish the induction step for subset 
     $\mathbf{P}=\{1,\ldots,k+1\}$ and set of indices
    $\mathbf{I}_i(k+1)=\{1,\ldots, k+1\}$, for all
    $i \in \{1,\ldots,k+1\}$; then, clearly, the statement holds for any other
    subset of agents and any other vector of indices of size $k+1$ by appropriate renaming of agents and indices.
    
    We divide the proof of the induction step into two key claims. Claim~\ref{clm:XOS1}, using the inductive
    hypothesis, outlines the necessary conditions for the case of
    $k+1$ agents when no $2/3$-MMS frugal orientation
    exists. Definition~\ref{def:overconstrained} summarizes these
    conditions in terms of what we call an \textit{overconstrained
      set}. Claim~\ref{clm:XOS2}  shows that overconstrained sets reveal useful structural properties in the intersections of canonical bundles, and uses this structure to construct a
    $2/3$-MMS orientation. Due to space limitations, we moved the proof to Appendix. 
    We now proceed with the definition of an overconstrained set.
\begin{definition}[Overconstrained set]
    \label{def:overconstrained}
Let  $\mathbf{P}=\{1,\ldots,k+1\}$ be a set of $k+1$ agents and $\mathbf{I}_i(k+1)=\{1,\ldots, k+1\},$ with $i \in \mathbf{P}$ be $k+1$ subsets of indices. We say that set ${\bf P}$ is {\em overconstrained} if for every subset ${\bf P} \setminus \{i\}$ of $k$ agents, there exist two $2/3$-MMS frugal orientations $X^{(i)}, X'^{(i)}$ such that $$X^{(i)}_j \cap X'^{(i)}_j = \emptyset,\forall j \in \mathbf{P}\setminus \{i\}\text{ and}$$ 
\begin{align}\label{eq:XOSstructure}
v_i\left(B_{i,t_i} \cap \left(X^{(i)} \cup X'^{(i)}\right)\right) \geq 2/3, \quad  t_i\in\{1,\ldots, k+1\}.
\end{align}
\end{definition}

\begin{claim} \label{clm:XOS1}
Let  $\mathbf{P}=\{1,\ldots,k+1\}$ be a set of $k+1$ agents and $\mathbf{I}_i(k+1)=\{1,\ldots, k+1\},$ with $i \in \mathbf{P}$ be $k+1$ subsets of indices.  Either (i) there exists a frugal $2/3$-MMS orientation for agents in ${\bf P}$, or (ii) ${\bf P}$ is overconstrained.
\end{claim}

\begin{proof}

  If (i) holds we are done. Otherwise, using the inductive hypothesis, for each agent $i$ we will identify two orientations $X^{(i)}$ and $X'^{(i)}$ for ${\bf P}\setminus\{ i\}$.  Since $(i)$ does not hold, we conclude that we cannot extend any of the existing orientations $X^{(i)}$ or $X'^{(i)}$ to provide a $2/3$-MMS frugal allocation for ${\bf P}\setminus\{ i\}$, establishing that ${\bf P}$ is {\em overconstrained}.

To show this, we focus on agent $k+1$ and we carefully define a partition of her first MMS bundle $B_{k+1,1}$ 
in three parts
$\left(A_{k+1,1}^1, A_{k+1,1}^2, A_{k+1,1}^3\right).
$  
This partition has the property that if at least one of the first two parts has value at most $2/3$ for agent $k+1$ i.e. either $v_{k+1}\left(A_{k+1,1}^1\right) \leq \frac{1}{3}$ or $v_{k+1}\left(A_{k+1,1}^2\right) \leq \frac{1}{3}$, we can extend an existing allocation—guaranteed to exist by the inductive hypothesis—for the set of $k$ agents and $k$ bundles to the set of $k+1$ agents and $k+1$ bundles. If this is not the case, we will show that inequality~(\ref{eq:XOSstructure}) holds for $i=k+1$ and $t_i=1$. We can proceed in a similar fashion by defining partitions for any bundle $B_{i,t_i}$ with $i, t_i \leq k+1$ and show that ${\bf P}$ is overconstrained.

By the inductive hypothesis, let $X^{(k+1)}$ be the promised frugal orientation for agents $1, \dots, k$ with respect to $B_1(\mathbf{I}_1(k)), \dots, B_{k}(\mathbf{I}_{k}(k))$, where $\mathbf{I}_i(k) = \{1, \dots, k\}$. Moreover, we have $v_i\left(X_i^{(k+1)}\right) \geq \frac{2}{3} \min_{t \in \mathbf{I}_i(k)} \left\{ v_i(B_{i,t}) \right\}$. By frugality, each agent $i \leq k$ receives a subset of their canonical bundles. Without loss of generality, assume $X_i^{k+1} \subseteq B_{i,1}$.

Now, consider the indices $\mathbf{I}'_1(k), \dots, \mathbf{I}'_{k}(k)$, where $\mathbf{I}'_i(k) = \{2, \dots, k+1\}$. That is, for each $\mathbf{I}'_i(k)$, we omit index $1$ —the index from which the previous allocated bundle $X_i^{(k+1)}$ was a subset of— and include the remaining index, $k+1$. Let $X'^{(k+1)}$ be the corresponding promised allocation for agents $1, \dots, k$, satisfying  
$
v_i\left(X'^{(k+1)}_i\right) \geq \frac{2}{3} \min_{t \in \mathbf{I}'_i(k)} \left\{ v_i(B_{i,t}) \right\}.
$  
By frugality, each agent $i \leq k$ receives a subset of their canonical bundles. Without loss of generality, assume $X'^{(k+1)}_i \subseteq B_{i,2}$.

Now, consider the partition of $B_{k+1,1}$ in three parts as follows: 
\[
    \left(B_{k+1,1} \cap X^{(k+1)}, B_{k+1,1} \cap X'^{(k+1)}, B_{k+1,1} \setminus \left(X^{(k+1)} \cup X'^{(k+1)}\right)\right).
\]

We argue that this is a feasible partition. Indeed, take any edge $e = (i,j) \in X^{(k+1)} \cap X'^{(k+1)}$ and assume that $e$ is oriented towards $i$ in $X^{(k+1)}$, i.e., $e \in X_i^{(k+1)} \subseteq B_{i,1}$. By the construction of $X'^{(k+1)}$, edge $e$ should be oriented towards agent $j$ in $X'^{(k+1)}$. Therefore, $X^{(k+1)} \cap X'^{(k+1)} \cap B_{k+1,1} = \emptyset$, i.e. the edge $e$ is irrelevant for agent $k+1$. See Figure~\ref{fig:validpartition} for an illustration of such a partition for bundle $B_{k+1,1}$.

\begin{figure*}[htbp]
  \centering
  \begin{tabular}{cccc}
\begin{tikzpicture}[scale=0.6]
\small{
  \draw[step=1cm,] (0,-1) grid (4,3);
    \draw[pattern={crosshatch dots},pattern color=green](2,2) rectangle +(2,1);
\draw[pattern={north west lines},pattern color=red](1,2) rectangle +(1,1);
\draw[pattern={horizontal lines},pattern color=blue](0,2) rectangle +(1,1);

\node[anchor=north] at (0.3,4) {$B_{1,1}$};
\node[anchor=north] at (1.4,4) {$B_{1,2}$};
\node[anchor=north] at (2.35,3.75) {$\ldots$};
\node[anchor=north] at (3.6,4) {$B_{1,k+1}$};

\node[anchor=west] at (-2.5,-0.5) {$B_{k+1,k+1}$};
\node[anchor=west] at (-1.2,0.5) {$\vdots$};
\node[anchor=west] at (-2,1.5) {$B_{k+1,2}$};
\node[anchor=west] at (-2,2.5) {$B_{k+1,1}$};

\node[anchor=west] at (1.2,-1.5) {$E_{1,k+1}$};
\node[anchor=west] at (1.5,-2.2) {(a)};}
\end{tikzpicture} & \begin{tikzpicture}[scale=0.6]
\small{
  \draw[step=1cm,] (0,-1) grid (4,3);
    \draw[pattern={crosshatch dots},pattern color=green](2,2) rectangle +(2,1);
\draw[pattern={north west lines},pattern color=red](1,2) rectangle +(1,1);
\draw[pattern={horizontal lines},pattern color=blue](0,2) rectangle +(1,1);tern color=red](2,0) rectangle +(1,1);
\node[anchor=north] at (0.3,4) {$B_{2,1}$};
\node[anchor=north] at (1.4,4) {$B_{2,2}$};
\node[anchor=north] at (2.35,3.75) {$\ldots$};
\node[anchor=north] at (3.6,4) {$B_{2,k+1}$};

\node[anchor=west] at (-2.5,-0.5) {$B_{k+1,k+1}$};
\node[anchor=west] at (-1.2,0.5) {$\vdots$};
\node[anchor=west] at (-2,1.5) {$B_{k+1,2}$};
\node[anchor=west] at (-2,2.5) {$B_{k+1,1}$};
\node[anchor=west] at (1.2,-1.5) {$E_{2,k+1}$};
\node[anchor=west] at (1.5,-2.2) {(b)};}
\end{tikzpicture} & \begin{tikzpicture}[scale=0.6]
  \draw[step=1cm,] (0,0) grid (0,0);
  \large{
\node[anchor=north] at (0,4) {$\boldsymbol{\ldots}$};}
\end{tikzpicture} &\begin{tikzpicture}[scale=0.6]
\small{
  \draw[step=1cm,] (0,-1) grid (4,3);
    \draw[pattern={crosshatch dots},pattern color=green](2,2) rectangle +(2,1);
\draw[pattern={north west lines},pattern color=red](1,2) rectangle +(1,1);
\draw[pattern={horizontal lines},pattern color=blue](0,2) rectangle +(1,1);tern color=red](2,0) rectangle +(1,1);
\node[anchor=north] at (0.3,4) {$B_{k,1}$};
\node[anchor=north] at (1.4,4) {$B_{k,2}$};
\node[anchor=north] at (2.35,3.75) {$\ldots$};
\node[anchor=north] at (3.6,4) {$B_{k,k+1}$};

\node[anchor=west] at (-2.5,-0.5) {$B_{k+1,k+1}$};
\node[anchor=west] at (-1.2,0.5) {$\vdots$};
\node[anchor=west] at (-2,1.5) {$B_{k+1,2}$};
\node[anchor=west] at (-2,2.5) {$B_{k+1,1}$};

\node[anchor=west] at (1.2,-1.5) {$E_{k,k+1}$};
\node[anchor=west] at (1.5,-2.2) {(c)};}
\end{tikzpicture} \\
    \end{tabular}  
\caption{Common edges $E_{1,k+1}$ (a), $E_{2,k+1}$ (b) and $E_{k,k+1}$ (c). In each set $E_{i,j}$ rows represent the bundles $B_{i,k_i}$ of agent $i$ and columns represent the bundles $B_{j,t_j}$ of agent $j$. Cell $(t_i,t_j)$ represents the intersection $B_{i,t_i} \cap B_{j,t_j}$. The figure illustrates the partition of $B_{k+1,1}$ bundles as follows: $\left(B_{k+1,1} \cap X^{(k+1)}, B_{k+1,1} \cap X'^{(k+1)}, B_{k+1,1} \setminus \left(X^{(k+1)} \cup X'^{(k+1)}\right)\right)$. $X^{(k+1)}$ is a $2/3$-MMS frugal orientation with $X^{(k+1)}_i \subseteq B_{i,1}$ and $X'^{(k+1)}$ is a $\frac{2}{3}$-MMS frugal orientation with $X'^{(k+1)}_i \subseteq B_{i,2}$.  The three parts are color-coded in blue, red, and green, respectively. $X^{(k+1)}$ and $X'^{(k+1)}$ remain disjoint over the edge set $E_{k+1}$.}
\label{fig:validpartition}
\end{figure*}

If for at least one of the first two parts has value no more than $1/3$ i.e. either $v_{k+1}(B_{k+1,1} \cap X^{(k+1)}) \le 1/3$ or $v_{k+1}(B_{k+1,1} \cap X'^{(k+1)}) \ge 1/3$, then $X^{(k+1)}$ or $X'^{(k+1)}$ can be extended with $X_{k+1} = B_{k+1,1} \setminus X^{(k+1)}$ (or resp. $X_{k+1}' = B_{k+1,1} \setminus X'^{(k+1)}$) and then (i) is satisfied. Otherwise, by additivity, 
\[
v_{k+1}\left(B_{k+1,1} \cap \left(X^{(k+1)} \cup X'^{(k+1)}\right)\right) \geq 2/3.
\]  

Since agent $k+1$ and bundle $B_{k+1,1}$ were picked arbitrarily, it also holds for every $i, t_i\in\{1,\ldots, k+1\}$ and we conclude  that condition~(\ref{eq:XOSstructure}) is satisfied with $X^{(i)}$ and $X'^{(i)}$ being the two promised orientations, as per the inductive hypothesis, for ${\bf P} \setminus \{i\}$, and the claim follows. 
\end{proof}

We denote by $S_{i,t_i}$ the high value intersections of an overconstrained set, $ S_{i,t_i} = B_{i,t_i} \cap \left(X^{(i)} \cup X'^{(i)}\right)$ for each agent $i \in \mathbf{P}$ and $ t_i\in\{1,\ldots,  k+1\} $. 
Notably, due to the frugality, each bundle $S_{i,t_i}$ intersects with exactly two canonical MMS bundles of any other agent $j \in \mathbf{P} \setminus \{i\}$; with (i) the bundle $B_{j,t_j}$ such that $X^{(i)}_j\subseteq B_{j,t_j}$ and (ii) the bundle $B_{j,t_j'}$ such that $X'^{(i)}_j\subseteq B_{j,t_j'}$. 
Symmetrically, each bundle $S_{j,t_j}$ intersects with exactly two MMS bundles of agent $i$. 

We say that bundle $S_{i,t_i}$ {\em intersects} with bundle  $S_{j,t_j}$ if and only if $X^{(i)}_j\subseteq B_{j,t_j}$ or $X'^{(i)}_j\subseteq B_{j,t_j}$ and additionally $X^{(j)}_i\subseteq B_{i,t_i}$ or $X'^{(j)}_i\subseteq B_{i,t_i}$. By definition, for each pair of agents there are exactly four such intersections. 

In the next claim, we show that for $|\mathbf{P}| \geq 5$, we can allocate to each agent $i \in \mathbf{P}$ some bundle $S_{i,t_i}$ in a feasible way, i.e., the bundles will not intersect and achieve $2/3$-MMS frugal orientation. We emphasize that for the special case of $|\mathbf{P}| = 4$, such an orientation need not exist (i.e. allocating a bundle $S_{i,t_i}$ to each agent $i$ in a feasible way). In this case, we show that if such an orientation is not possible, we can redistribute the edges in the $S_{i,t_i}$ sets and still achieve a $2/3$-MMS frugal orientation.
\begin{claim}\label{clm:XOS2}
  If ${\bf P}$ is overconstrained and has $5$ or more agents, then, a greedy approach admits a $2/3$-MMS allocation $X = (X_1, \dots,X_{k+1})$ such that for
  every agent $i$, the allocated bundle satisfies
  $X_i = S_{i,t_i}$
  for some $t_i \in \mathbf{I}_i(k+1)$. If $|\bf P|=4$, then we can use a special treatment and achieve $2/3$-MMS frugal orientation.
\end{claim}

Claim~\ref{clm:XOS2} proves the inductive hypothesis and hence completes the proof. As a warm up, we present a simpler proof of a slightly weaker statement, which is of particular interest when
$|{\bf P}| \ge 15$. The existence of such an allocation
can be established through an interesting connection to the theory
of {\em independent transversal sets} in $r$-partite graphs. An independent transversal set is a selection of $r$ vertices, one from each part, such that no two vertices share an edge.

An overconstrained set $\mathbf{P}$ can be modeled by a $|\mathbf{P}|$-partite graph. Part $i$ consists of $k+1$ vertices corresponding to the $S_{i,t_i}$ sets. An edge $(v_{i,t_i},v_{j,t_j})$ represents the intersecting bundles $S_{i,t_i}$ and $S_{j,t_j}$. 
Because overlapping bundles cannot be assigned simultaneously to their respective agents, the goal is to find an independent transversal set. However, an independent transversal set is not guaranteed to exist for a $4$-partite graph with this special structure, (see Figure~\ref{fig:4XOS}), therefore this case needs special treatment. The proof of Claim~\ref{clm:XOS2} shows the existence of an independent transversal set for any $|\mathbf{P}| \ge 5$ with this structure via a greedy procedure. 
We claim that, using Corollary~10 of Wanless and Wood \cite{WanlessWood21}, existence can be established for $|\mathbf{P}| \geq 15$.

\begin{lemma}[ Corollary~10 by \cite{WanlessWood21}] \label{lem:ITSXOS}
    Fix any $t \ge 1$. For a graph $G$, let $V_1,\ldots,V_n$ be a partition of $V(G)$ such that $\lvert V_i \rvert \ge t$ and there are at most $\frac{t}{4}|V_i|$ edges in $G$ with exactly one endpoint in $V_i$ for each $i \in \{1, \ldots,n\}$. Then there exists at least $(\frac{t}{2})^n$ independent transversal of $V_1,\ldots,V_n$.
\end{lemma}

Indeed, by setting  $t = |\mathbf{P}|$ and $n = |\mathbf{P}|$ we observe that when $|\mathbf{P}| \ge 15$, the conditions of Lemma~\ref{lem:ITSXOS} are satisfied for the graph that models the overconstrained set, which guarantees the existence of an independent transversal set and, consequently, a $2/3$-MMS frugal orientation.

\begin{proof}[Proof of Claim~\ref{clm:XOS2}]

We distinguish cases based on the cardinality of the over constrained set.

\noindent{\textbf{Case of} $|\mathbf{P}| \ge 5$.} Let $K_{2,2}^{(i,j)}=\left\{S_{i,t_i},S_{i,t_i'},S_{j,t_j},S_{j,t_j'}\right\}$ be the collection of bundles such that the pairs formed\\ $(S_{i,t_i}, S_{j,t_j}),(S_{i,t_i}, S_{j,t_{j}'})$, $(S_{i,k_{i}'}, S_{j,k_j}),(S_{i,t_{i}'}, S_{j,t_{j}'})$ are all intersected i.e., we cannot allocate two bundles from the same collection. Recall that for two fixed agents $i$ and $j$, collection $K_{2,2}^{(i,j)}$ has exactly for bundles; two bundles of agent $i$ and two of agent $j$. See Figure~\ref{fig:conditions} for an illustration of those bundles. 
\begin{figure}[htbp]
  \centering
\begin{tikzpicture}[scale=0.6]
\small{
  \draw[step=1cm,] (0,0) grid (6,6);
   
\draw[pattern={north west lines},pattern color=red](0,4) rectangle +(6,2);
\draw[pattern={horizontal lines},pattern color=blue](0,0) rectangle +(2,6);
\node[anchor=north] at (0.5,7) {$B_{j,1}$};
\node[anchor=north] at (1.6,7) {$B_{j,2}$};
\node[anchor=north] at (3,6.75) {$\ldots$};
\node[anchor=north] at (4.2,6.75) {$\ldots$};

\node[anchor=north] at (6,7) {$B_{j,k+1}$};

\node[anchor=west] at (-2,0.5) {$B_{i,k+1}$};
\node[anchor=west] at (-1.2,3.5) {$\vdots$};
\node[anchor=west] at (-1.2,2) {$\vdots$};

\node[anchor=west] at (-2,4.5) {$B_{i,2}$};
\node[anchor=west] at (-2,5.5) {$B_{i,1}$};}
\end{tikzpicture}
\caption{Common edges $E_{i,j}$ where the rows correspond to the bundles $B_{i,t_i}$. In each set $E_{i,j}$ rows represent the bundles $B_{i,t_i}$ of agent $i$ and columns represent the bundles $B_{j,t_j}$ of agent $j$. Cell $(t_i,t_j)$ represents the intersection $B_{i,t_i} \cap B_{j,t_j}$. The figure focus on agents $i,j$ of an overconstrained set. With blue are the bundles $S_{i,t_i}=B_{i,t_i} \cap \left(X^{(i)} \cup X'^{(i)}\right)$ with $X^{(i)}_j\subseteq B_{j,1}$ and $X'^{(i)}_j\subseteq B_{j,2}$. With red are the bundles $S_{j,t_j}=B_{j,t_j} \cap \left(X^{(j)} \cup X'^{(j)}\right)$ with $X^{(i)}_i\subseteq B_{i,1}$ and $X'^{(j)}_i\subseteq B_{i,2}$. The intersected bundles are $(S_{i,1},S_{j,1}), (S_{i,2},S_{j,1}), (S_{i,1},S_{j,2})$ and $(S_{i,2},S_{j,2})$}
\label{fig:conditions}
\end{figure}

For a fixed agent $i$ her bundles are in $|\mathbf{P}|-1$ different collections in total, one for every other agent. Thus, by the pigeonhole principle, we can conclude that for every agent $i$:

\begin{itemize}  
    \item[(i)] Either there exists a bundle $S_{i,t_i}$ that is fully compatible with the allocation, meaning that it is not in any $K_{2,2}^{(i,j)}, j \in \mathbf{P}$.
    \item[(ii)] Agent $i$ has at least two distinct bundles, $S_{i,t_i}$ and $S_{i,t_i'}$, such that each one is contained in exactly one $K_{2,2}^{(i,j)}, j \in \mathbf{P}$ respectively.
\end{itemize}  

We refer to bundles $S_{i,t_i}$ with properties (i) or (ii) for all as {\em candidate bundles} with respect to the set of agents $\mathbf{P}$. We establish the result using a greedy approach, prioritizing the allocation of the candidate bundles $S_{i,t_i}$ i.e., the bundles which are in fewer collections.

\noindent{\bf{Case 1.}} If for some agent $i$ there exists a candidate bundle $S_{i,t_i}$ with property (i), that is, none of the collections contain it, then allocating it to agent $i$ does not affect the final allocation i.e., for any other bundle $S_{j,t_j}$ allocated to an agent $j$, there will not be a conflict.
    
\noindent{\bf{Case 2.}} If there exist agents $i,j_1$ with candidate bundles $S_{i,t_i} \in K_{2,2}^{(i,j_1)}$ and $S_{j_1,t_{j_1}} \in K_{2,2}^{(j_1,j_2)}$ with $j_2 \ne i$, then we can allocate $S_{i,t_i}$ and $S_{j_1,t_{j_1}}$ to the corresponding agents and proceed in the same way to agent $j_2$ until we reach an agent $j_{r^*}$ with a candidate bundle $S_{j_{r^*},t_{j_{r^*}}}$ with the property (i) w.r.t the set of the remaining agents and the agent selected in the previous iteration i.e. there is no $j \in \mathbf{P} \setminus \{i, j_1, \dots, j_{r^*-2}\}$ such that $S_{j_{r^*},t_{j_{r^*}}}\in K_{2,2}^{(j_{r^*},j)}$. Note that this is sufficient since $S_{j_{r^*},t_{j_{r^*}}}$ is not in the same collection with any candidate bundle previously selected; otherwise agent $j_{r^*}$ would have been selected earlier.
    
We now argue that, in each iteration, it is possible to allocate a candidate bundle to agent $j_r,r\ge 2$. Indeed, after the selection of agent $j_r$ there are at most $|\mathbf{P}|-r$ collections overall w.r.t. the remaining agents and agent $j_{r-1}$. Due to the fact that each collection obtains two of her bundles by the pigeonhole principle, there exists either at least one candidate bundle with property (ii) i.e., $S_{j_r,t_{j_r}} \in K_{2,2}^{(j_r,j_{r+1})},j_{r+1}\ne j_{r-1}$ and thus we proceed to the next agent or there exists a candidate bundle with property (i) and hence $r=r^*$.  Observe that for $r \ge |\mathbf{P}|/2 + 1$ from the pigeonhole principle a candidate bundle with property (i) is guaranteed to exist for agent $j_r$. 

\noindent{\bf{Case 3.}} If this is not the case then the agents are forming pairs and collection $K_{2,2}^{(i,j)}$ contains the two candidate bundles of agent $i$ with property (ii) and the two candidate bundles of agent $j$ with property (ii). Figure~\ref{fig:4XOS} illustrates this structure for four agents.  

\begin{figure}[htbp]
  \centering
    
    \begin{tikzpicture}[scale=0.85]

  \small{
  \node[anchor=south west] at (-3.75,-0.75) {${\bf V_1}$};
\node[anchor=south west] at (0.25,3) {${\bf V_2}$};
\node[anchor=south west] at (4,-0.75) {${\bf V_3}$};
\node[anchor=south west] at (0.25,-4.5) {${\bf V_4}$};

  \foreach \i in {1, 2, 3, 4} {
    \node[draw, circle, fill=blue!30] (A\i) at (-2.5, 2.5 - 1.25*\i) {$S_{1,\i}$};
  }

  \foreach \i in {1, 2, 3, 4} {
    \node[draw, circle, fill=red!30] (B\i) at (-2.5 + 1.25*\i, 2.5) {$S_{2,\i}$};
  }

  \foreach \i in {1, 2, 3, 4} {
    \node[draw, circle, fill=green!30] (C\i) at (3.5, 2.5 -1.25*5+ 1.25*\i) {$S_{3,\i}$};
  }

  \foreach \i in {1, 2, 3, 4} {
    \node[draw, circle, fill=yellow!30] (D\i) at (-2.5 + 1.25*5 -1.25*\i, -3.5) {$S_{4,\i}$};
  }

  \foreach \i in {1, 2} {
    \foreach \j in {1, 2} {
      \draw[ultra thick, blue] (A\i) -- (B\j);
    }
  }

  \foreach \i in {3, 4} {
    \foreach \j in {3, 4} {
      \draw[ultra thick, blue] (A\i) -- (C\j);
    }
  }

  \foreach \i in {3, 4} {
    \foreach \j in {3, 4} {
      \draw[ultra thick, blue] (A\i) -- (D\j);
    }
  }

  \foreach \i in {3, 4} {
    \foreach \j in {3, 4} {
      \draw[ultra thick, red] (B\i) -- (C\j);
    }
  }

  \foreach \i in {3, 4} {
    \foreach \j in {3, 4} {
      \draw[ultra thick, red] (B\i) -- (D\j);
    }
  }

  \foreach \i in {1, 2} {
    \foreach \j in {1, 2} {
      \draw[ultra thick, green] (C\i) -- (D\j);
    }
  }
  }
\end{tikzpicture}
\caption{The figure represents bundles $S_{i,t_i}$ via a graph. Each part $V_i$ corresponds to an agent $i$, with vertices representing her bundles. An edge between two vertices $v_{i,t_i},v_{j,t_j}$ exists if and only if the corresponding bundles $S_{i,t_i},S_{j,t_j}$ may intersect. Finding an independent transversal set ensures an allocation of bundles $S_{i,t_i}$. In the case where $\lvert \mathbf{P} \rvert = 4$, no such independent set guarantee to exist, as shown in the figure, requiring a redistribution of edges among the agents.}
\label{fig:4XOS}
\end{figure}

For each such pair of agents we allocate arbitrarily the bundle $S_{i,t_i}$ to agent $i$ and then proceed to the next pair. 

After the processing of all pairs we argue that for $|\mathbf{P}| \ge 5$ we can allocate to each of the remaining agents (omitting their initial candidate bundles) either a candidate bundle having the property (i) w.r.t. the remaining agents or we can iterate as in Case 2 and thus the claim follows. Indeed, for each of the remaining agents, there are at most $\left\lfloor |\mathbf{P}|/2 \right\rfloor -1$ collections i.e. one of with each remaining agent. Omitting their initial candidate bundles, those two which forms intersecting sets with the paired agent to which we allocated bundle arbitrarily, each agent has $|\mathbf{P}|-2$ bundles. Each collection contains $2$ bundles and by the pigeonhole principle the argument holds.

\noindent{\textbf{Case of} $|\mathbf{P}| = 4$.} Unfortunately, the pigeonhole principle argument used in the last case, does not hold in the case of $4$ agents. More generally, there need not exist a feasible allocation of the $S_{i,t_{i}}$ sets. However, the 4-partite graph that models the conflicts of the $S_{i,t_i}$ sets must have a very specific structure (isomorphic to Figure~\ref{fig:4XOS}). In this case, we will provide a different orientation. The proof is applied for the structure represented in Figure~\ref{fig:4XOS}. Clearly, by renaming we can apply the same proof in every case with $4$ agents in which an independent transversal set is not guaranteed to exist. 

Consider the $2/3$-MMS frugal orientations among the first three agents: $X''^{(4)} = \{S_{1,1}, S_{2,3}, S_{3,1}\}$ and $X'''^{(4)} = \{S_{1,2}, S_{2,4}, S_{3,2}\}$.

Now, consider the partition of $B_{4,4}$ in three parts as follows:
$$ \left(B_{4,4} \cap X''^{(4)}, B_{4,4} \cap X'''^{(4)}, B_{4,4} \setminus \left(X''^{(4)} \cup X'''^{(4)}\right)\right).
$$

By the construction of orientations $X''^{(4)}$ and $X'''^{(4)}$ respectively, each agent $i \in \{1,2,3\}$ gets the edges from $S_{i,t_i}$ and $S_{i,t_i+1}$. We argue that agent $4$ derives no value from the intersection of these orientations, i.e.,  $v_4\left(X''^{(4)} \cap X'''^{(4)}\right) = 0$. Indeed, take any edge $e = (i,j) \in X''^{(4)} \cap X'''^{(4)}$ and assume that $e$ is oriented towards $i$ in $X''^{(4)}$, i.e., $e \in X_i''^{(4)} \subseteq S_{i,t_i}$. By the construction of $X'''^{(4)}$, edge $e$ should be oriented towards agent $j$ in $X'''^{(4)}$. Therefore edge $e$ is irrelevant for agent $4$ and thus the partition is feasible.

If either $v_{4}(B_{4,4} \cap X''^{(4)})$ or $v_{4}(B_{4,4} \cap X'''^{(4)})$ are at most $1/3$ then $X''^{(4)}$ or $X'''^{(4)}$ can be extended with $X_{4}'' = B_{4,4} \setminus X''^{(4)}$ (or resp. $X_{4}''' = B_{4,4} \setminus X'''^{(4)}$). Otherwise, by additivity, 
\[
v_{4}\left(B_{4,4} \cap \left(X''^{(4)} \cup X'''^{(4)}\right)\right) \geq 2/3.
\] 

From the structure of bundles $S_{i,t_i}$ (Figure~\ref{fig:4XOS}) we have  
$v_4\left(B_{4,4} \cap \left(X''^{(4)} \cup X'''^{(4)}\right)\right) = v_4\left(B_{4,4} \cap \left(S_{2,4} \cup S_{2,3} \right)\right)\\ \geq 2/3$. Therefore, 
$X = \left(S_{1,3}, S_{2,1}, S_{3,2}, B_{4,4} \cap \left(S_{2,3} \cup S_{2,4} \right)\right)$ admits a $2/3$-MMS orientation, ensuring that each agent receives a subset of her MMS bundle.
\end{proof}
\end{proof}

In the following theorem, we show that a $1/2$-out-of-$2$ MMS orientation exists for $n$ XOS agents. The proof is similar to that of Theorem~\ref{thm:multiaddroundrobin}. We note that this result is tight due to the upper bound of \cite{GhodsiHSSY22} for two agents i.e., we can extend the counter example by adding $n-2$ disconnected vertices and the inapproximability follows.

\begin{theorem}\label{thm:XOS1outofd}
In every multi-graph with $n$ XOS agents, there exists a frugal $1/2$-out-of-$2$ MMS orientation.
\end{theorem}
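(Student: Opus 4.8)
The plan is to adapt the greedy ``pick-your-favorite-relevant-edge'' algorithm from Theorem~\ref{thm:multiaddroundrobin}, but to argue against the $1$-out-of-$2$ MMS benchmark, and to exploit XOS structure via the normalization and frugality reductions we have already developed. By Lemma~\ref{lem:normalized} and Lemma~\ref{lem:XOSadditive}, it suffices to produce a \emph{frugal} $1/2$-out-of-$2$ MMS orientation for $n$ \emph{additive} agents, where each agent $i$'s canonical $1$-out-of-$2$ partition is $B_i^2=(B_{i,1},B_{i,2})$ with $v_i(B_{i,1})=v_i(B_{i,2})=1$. So I would first invoke these two lemmas to reduce to the additive, normalized, frugal setting, and then run the same round-robin orientation: starting from an arbitrary agent, each selected agent takes the best available relevant edge, and we continue with the other endpoint of the chosen edge (or jump to any agent still owning a relevant edge if the current agent has none).

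Next I would fix an agent $i$ and analyze the first round $k$ in which $i$ is selected, partitioning $E$ into $A$ (edges allocated before round $k$), $B=X_i$ (edges $i$ receives), and $C$ (edges allocated after round $k$ to others), exactly as in the proof of Theorem~\ref{thm:multiaddroundrobin}. The same two observations should carry over: first, $|A\cap E_i|\le 1$, because if some neighbor $j$ grabbed an edge $e=(i,j)$ in round $k'<k$ then by the rule agent $i$ would be selected in round $k'+1$, forcing $k'=k-1$; hence $v_i(A)\le v_i(e^*_i)$ where $e^*_i$ is $i$'s most valuable relevant edge. Second, $v_i(C)\le v_i(B)$, since for every edge $e_C\in C\cap E_i$ there is a distinct earlier-chosen edge $e_B\in B$ that $i$ preferred, and all edges in $C\setminus E_i$ are irrelevant to $i$. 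Writing $v_i(B_{i,1})=v_i(B_{i,2})=\mu_i^2=1$, and noting $v_i(E_i)=v_i(A)+v_i(B)+v_i(C)$ (additivity after the reduction), the derivation should give $v_i(B)\ge (v_i(E_i)-v_i(A))/2 \ge (2\mu_i^2 - \mu_i^2)/2 = \mu_i^2/2$, using $v_i(A)\le v_i(e^*_i)\le v_i(B_{i,1})=\mu_i^2$ and $v_i(E_i)=v_i(B_{i,1})+v_i(B_{i,2})=2\mu_i^2$.

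The main obstacle I anticipate is the frugality bookkeeping: the round-robin allocation does not automatically respect the constraint $X_i\subseteq B_{i,k_i}$ that Definition~\ref{def:frugal} demands, whereas the reduction in Lemma~\ref{lem:XOSadditive} translates an additive frugal guarantee into an XOS guarantee only for frugal allocations. I would resolve this in one of two ways. Either I argue directly in the XOS instance—using $v_i(B_{i,1})=v_i(B_{i,2})=1$ from the normalization of Lemma~\ref{lem:normalized} together with the fact that the ``favorite edge'' selection and the inequalities $v_i(A)\le v_i(e^*_i)$ and $v_i(C)\le v_i(B)$ only require monotonicity and the single-edge comparison $v_i(\{e_B\})\ge v_i(\{e_C\})$, which hold for XOS valuations via the supporting additive function—thereby bypassing frugality entirely for this $d=2$ case; or, if a clean XOS-direct argument is awkward, I would post-process the output to its frugal core $X_i\cap B_{i,k_i}$ for the index $k_i$ maximizing $v_i(X_i\cap B_{i,k_i})$ and show this retains at least half the value. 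The cleaner route is almost certainly the direct XOS argument: the proof of Theorem~\ref{thm:multiaddroundrobin} never truly needed additivity except in the final averaging step, so replacing $v_i(E_i)=\sum_t v_i(B_{i,t})$ by the XOS lower bound $v_i(X_i)\ge$ (value of the supporting additive function on $X_i$) should close the gap without any frugality detour. I would state the conclusion $v_i(X_i)\ge \tfrac12 \mu_i^2$ for all $i$, and remark that the $1/2$ upper bound of \cite{GhodsiHSSY22} (extended by padding with isolated vertices) shows tightness.
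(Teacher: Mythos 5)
Your overall plan (reduce to additive valuations via Lemma~\ref{lem:normalized} and Lemma~\ref{lem:XOSadditive}, then run the greedy orientation of Theorem~\ref{thm:multiaddroundrobin} and reuse the $A,B,C$ analysis) is the right starting point, and you correctly identify the crux: the unmodified round-robin output is not frugal, so the additive guarantee does not transfer back to XOS. But neither of your two proposed fixes closes this gap. The ``direct XOS argument'' fails at the final averaging step: for an XOS valuation one only has $v_i(E_i)\ge\max_k v_i(B_{i,k})=\mu_i^2$, not $v_i(E_i)=v_i(B_{i,1})+v_i(B_{i,2})=2\mu_i^2$ (take $v_i=\max(a_1,a_2)$ with $a_k$ supported on $B_{i,k}$, or the capped valuation from Lemma~\ref{lem:normalized}, where $v_i(E_i)=1$), so the inequality $v_i(B)\ge\bigl(v_i(E_i)-v_i(A)\bigr)/2$ yields nothing. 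Replacing this by ``the supporting additive function on $X_i$'' does not help either: the greedy selects edges by $v_i$-value, not by the value of any fixed supporting function $f^*_{i,k}$, so the matching $v_i(e_B)\ge v_i(e_C)$ does not imply $f^*_{i,k}(e_B)\ge f^*_{i,k}(e_C)$ (the preferred edge $e_B$ may lie in the other bundle, where $f^*_{i,k}(e_B)=0$). Your fallback of post-processing to the frugal core $X_i\cap B_{i,k_i}$ also loses a factor: in the reduced additive instance $v'_i(X_i)\ge 1/2$ splits across the two bundles, so $\max_{k}v'_i(X_i\cap B_{i,k})$ is only guaranteed to be $1/4$.

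The missing idea, which is exactly how the paper proceeds, is to modify the \emph{algorithm} rather than the analysis: when agent $i$ is selected for the first time, she is permanently associated with one canonical bundle $B_{i,k_i}$ that is still entirely available (this is possible precisely because $|A\cap E_i|\le 1$, so at least one of her two bundles is untouched), and in every subsequent turn she picks her best available edge from $E^k\cap B_{i,k_i}$ only. The output is then frugal by construction, so Lemma~\ref{lem:XOSadditive} applies, and the matching argument now lives entirely inside $B_{i,k_i}$: with $B=X_i$ and $C=B_{i,k_i}\setminus(X_i\cup A)$ one gets $v_i(B)\ge v_i(C)$ and $v_i(B)+v_i(C)=v_i(B_{i,k_i})\ge\mu_i^2$ by additivity of the reduced valuation on a single bundle, hence $v_i(X_i)\ge\frac12\mu_i^2$. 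Your closing remark on tightness via the $1/2$ upper bound of \cite{GhodsiHSSY22} padded with isolated vertices matches the paper.
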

\begin{proof}
Let the graph $G=(V,E)$ where $V$ is the set of $n$ agents and $E$ is the set of edges. Let $B_i^2=(B_{i,1},B_{i,2})$ be the $1$-out-of-$2$ MMS partition of agent $i$. We will show that we can find a frugal $1/2$-out-of-$2$ allocation considering additive valuations which using Lemma~\ref{lem:XOSadditive} admits a $1/2$-out-of-$2$ MMS allocation with XOS agents. We present a greedy algorithm that for $n\ge 2$ produces a $1/2$-out-of-$2$ MMS allocation. 

The algorithm runs in at most $m$ rounds. In the first round,  we pick an arbitrary agent $i_1$, who selects the relevant edge with the maximum value for her, edge $e=(i_1,i_2)\in E^1$ and we associate agent $i_1$ with her canonical bundle $B_{i_1,t_{i_1}}$ which contains the edge she selected i.e., $e \in B_{i_1,t_{i_1}}$. Then we proceed with agent $i_2$ that shares this edge (who may be agent $i_1$ in case of a self-loop). Let $i_k$ be the selected agent in round $k$. If agent $i_k$ is not associated with some bundle $B_{i_k,t_{i_k}}$ then we associate the agent with her canonical bundle $B_{i_k,t_{i_k}}$ such that $B_{i_k,t_{i_k}} \subseteq E^k$ i.e. none of the edges is oriented before turn $k$.  Then we orientate to the selected agent $i_k$, the best available edge $e'=(i_k,i_{k+1})\in E^k\cap B_{i_k,t_{i_k}}$ i.e., $e' \in \arg \max_{e \in E^k\cap B_{i_k,t_{i_k}}} \{v_i(e)\}$.  We update the set of edges $E^{k+1}=E^k\setminus \{e'\}$ and proceed to the next round $k+1$ with agent $i_{k+1}$. If in some round $t$ the set $E^{t}_{i_t} = \emptyset$ for the selected agent $i_t$, then we continue with an arbitrary agent $i'$ such that $E^{t}_{i'}\ne \emptyset$. The algorithm terminates after at most $m=\lvert E \rvert$ iterations and each edge is allocated at most to one agent.

In what follows we will show that the resulting allocation is a $1/2$-out-of-$2$ MMS allocation. Take an agent $i$ and  assume w.l.o.g. that agent $i$ is selected for the first time at some step $k$ and hence we have to define the fixed canonical bundle $B_{i,t_i},t_i \in \{1,2\}$. We claim that there exists a canonical bundle such that $B_{i,t_i}\subseteq E^k$.  Indeed, let $A$ be the set of edges allocated before round $k$.  If agent $j$ gets an edge $e=(i,j)\in A\cap E_i$ in round $k'<k$ by
  definition agent $i$ will be selected in the next round, hence  $k'=k-1$. Therefore, $\lvert A \cap E_i\rvert \le 1$. If $A \cap E_i \subseteq B_{i,1}$ (resp. $A \cap E_i \subseteq B_{i,2}$) then the associated bundle is $B_{i,2}$ (resp. $B_{i,1}$).

  Let $P=(B,C)$ be a partition of $B_{i,t_i}$ in two (possibly empty) parts which is defined as follows: $B$ is the set of edges allocated to agent $i$ and $C$ contains the rest of the edges, i.e., the edges allocated after step $k$ to agents other than $i$.  It suffices to show that $v_i(B) \ge 1/2 \mu_i^2$.

  Next we claim that
  $v_i(C)\le v_i(B)$. It suffices to observe that for each edge
  $e_C \in C\cap E_i$ there exists a corresponding edge $e_B \in B$, with higher value for $i$; $i$ selected $e_B$ over $e_C$, hence
  $v_i(e_B)\geq v_i(e_C)$. We conclude that $v_i(B)\geq v_i(C)$ by noting that all edges $e\in C\setminus E_i$ are irrelevant for $i$. Note that $v_i(B)+v_i(C) = v_i(B_{i,t_i}) \ge \mu_i^2$ and as a result $v_i(B)  \ge  1/2\mu_i^2$ and the theorem follows. 

\end{proof}

\subsection{Asymptotically Improved Bounds for XOS Valuations}
\label{sec:SBXOS}
In this section, we provide asymptotically improved lower bounds for XOS agents. The main theorem of this section (Theorem~\ref{thm:sqrtXOS}) states that it is sufficient for each agent to have $d \in \mathcal{O}(\sqrt{n})$ bundles in order to achieve a $2/3$-out-of-$d$ orientation. To show this, we utilize some recent results on the number of independent transversal sets from Dai, Liu and Zhang \cite{DaiLiuZhang2025}. 


\begin{theorem}\label{thm:sqrtXOS}
    In every multi-graph with $n$ XOS agents, there exists a frugal $2/3$-out-of-$8\sqrt{n}$ orientation. 
\end{theorem}

Before we continue, we introduce the necessary notation. Given an $r$-partite graph $G$ with a partition $P = (V_1, V_2, \ldots, V_r)$ such that $V_i$ is an independent set for each $i \le r$, let $d_G(v)$ denote the degree of vertex $v \in V_i$ in the graph $G$ for some $i \le r$. We define the \emph{average degree} of $V_i$ in $G$ as:
$$\bar{d}_G(V_i) =\frac{1}{|V_i|} \sum_{v\in V_i} d_G(v).$$
Additionally, we define the \emph{maximum block average degree} as: 

$$\bar{D}_P(G) = \max_{V_i \in P} \bar{d}_G(V_i)$$
which represents the highest average degree among all the independent sets $V_i$ in the partition $P$. A partition $P = (V_1, V_2, \ldots , V_r)$ is $s$-thick if each $V_i$ has size at least $s$. In an $r$-partite graph, an independent transversal set of size $s$ (ITS) consists of $s$ vertices from each part, forming an independent set. We are now ready to present the proof of Theorem~\ref{thm:sqrtXOS}.

\begin{proof}
Let $G=(V,E)$ be a graph with $n \ge 2$ vertices, and let $\mathbf{B}^{8\sqrt{n}}_i=(B_{i,1},\ldots,B_{i,8\sqrt{n}})$ denote the canonical $1$-out-of-$8\sqrt{n}$-MMS partition of agent $i$ for the edge set $E_i$. For some $k \le 8\sqrt{n}$, let $\mathbf{I}_i(k) \subseteq\{1,2,\ldots,8\sqrt{n}\}$ be a subset of $k$ indices, i.e., $|\mathbf{I}_i(k)|=k$. We define $B_i(\mathbf{I}_i(k))$ as the set of the respective bundles $B_{i,t}$ with $t \in \mathbf{I}_i(k)$ i.e., $B_{i}(\mathbf{I}_i(k))=\{B_{i,t}|t \in \mathbf{I}_i(k)\}$.  For some subset of agents $\mathbf{P} \subseteq V$ we denote the set of agents $\mathbf{P}_{-i} =\mathbf{P} \setminus \{i\}, i \in \mathbf{P}$. Also, let the sets $E_{i}^{\mathbf{P}}=E_{i} \cap \left(\bigcup_{j \in \mathbf{P}}E_{i,j}\right)$ and $E^{\mathbf{P}}= \bigcup_{i \in \mathbf{P}} E_{i}^{\mathbf{P}}$ be \emph{the set of the common edges of agent $i$} with agents in $\mathbf{P}$ and \emph{the set of the common edges} of agents in $\mathbf{P}$ respectively. We will refer to those sets as the set of the common edges, omitting $\mathbf{P}$ and agent $i$ when they are clear from context.

We employ Lemma~\ref{lem:XOSadditive}, which asserts that it suffices to show the existence of a  $2/3$-out-of-$8\sqrt{n}$-MMS frugal orientation in every multi-graph with $n \ge 2$ additive agents. We prove by induction on $k \in \{2, \ldots, n\}$ that for every subset $\boldsymbol{P}=(p_1,\ldots,p_k) \subseteq V$ of $k$ agents and for any fixed $k$ sets $\mathbf{I}_{p_1}(8\sqrt{k}), \ldots,\mathbf{I}_{p_k}(8\sqrt{k})$ of indices of size $8\sqrt{k}$, there is a frugal orientation $X$, restricted to the set of the common edges,  with respect to $\mathbf{B}^k=(B_{p_1}(\mathbf{I}_{p_1}(8\sqrt{k})), \ldots, B_{p_k}(\mathbf{I}_{p_k}(8\sqrt{k})))$  (the allocated bundle of each agent is a subset of some canonical bundle with index in the corresponding set) such that each agent is guaranteed at least $2/3$ of the restricted value that the corresponding canonical bundle has over the set of the common edges. That is, for all agents $p_i \in \mathbf{P}$, there exists a canonical bundle $B \in B_{p_i} ( \mathbf{I}_{p_i}(8\sqrt{k}))$ such that $$v_{p_i}(X_{p_i}) \ge \frac{2}{3}v_{p_i}(B \cap E_{i}^{\mathbf{P}}), \quad X_{p_i} \subseteq B \cap E_{p_i}^{\mathbf{P}}$$

We note that every frugal orientation with respect to bundles $B_{p_i}(\mathbf{I}_{p_i}(8\sqrt{k}))$ for some agent $p_i \in \mathbf{P}$ and set of indices $\mathbf{I}_{p_i}(8\sqrt{k})$ is also a frugal orientation with respect to the canonical partition $B_{p_i}^{8\sqrt{n}}$. Furthermore, observe that in contrast with the proof of Theorem~\ref{thm:XOS4+}, each agent of the set achieves $2/3$ of the value that some bundle with admissible index has over the set of the common edges with the other agents in the set. This is crucial for our analysis as we cannot guarantee the existence of two $2/3$ frugal orientations in which each agent selects from a different canonical bundle. Intuitively, if the induction holds for some $k$, then we append the set with one more agent, agent $i$. Now, the allocation guaranteed to exist for the $k$ agents, has to be also extended with the set of the common edges of those agents and the new agent, agent $i$. We give priority to the first $k$ agents of the group over their common edges with agent $i$ in order to guarantee at least $2/3$ of their corresponding selected bundle also in the new set of items.

   For $k=n$, we have $\boldsymbol{P} =V$ and $E_{i}^{V}=E_i$, and as a result, for each agent $i\in V$ we have: $$v_{i}(X_{i}) \ge \frac{2}{3}\min_{B \in B_i( \mathbf{I}_{i}(8\sqrt{n}))}\left\{v_{i}\left(B \cap E_i\right)\right\}=\frac{2}{3}\min_{B \in \mathbf{B}_i^{8\sqrt{n}}}\left\{v_{i}\left(B \right)\right\}=\frac{2}{3}\text{-out-of-}8\sqrt{n}\text{, }\forall i \in V$$ and the theorem follows. 

    \noindent{\bf Base case, $\bf k=2$}. By Lemma~\ref{lem:2XOSdlower} we have that for $n=2$ there exists a $2/3$-out-of-$3$ MMS orientation. That is, for every pair of agents $(i,j)$ and every two sets of $\left\lceil 8\sqrt{2} \right\rceil$ indices (note that $3$ indices are enough for the case of $2$ agents) $\mathbf{I}_i\left(\left\lceil 8\sqrt{2} \right\rceil\right)$ and $\mathbf{I}_j\left(\left\lceil 8\sqrt{2} \right\rceil\right)$, scaling their value in each bundle over the set of the common edges to one i.e., the scaling is applied in each canonical bundle and $v_i'(e) = \frac{v_i(e)}{v_i(B_{i,k_i} \cap E_{i,j})}$ for $e \in B_{i,k_i}$ and $$v'_i(B_{i,k_i} \cap E_{i,j})=v'_j(B_{i,k_j} \cap E_{i,j})=1\text{, }\forall k_i \in \mathbf{I}_i\left(\left\lceil 8\sqrt{2} \right\rceil\right),k_j \in \mathbf{I}_j\left(\left\lceil 8\sqrt{2} \right\rceil\right)$$ Applying Lemma~\ref{lem:2XOSdlower} we can find a frugal orientation for which the inductive hypothesis holds. That is, $2/3$ value in the scaled bundle admits $2/3$ value of the bundle restricted to the value of their common edges as for the allocated bundle $X_{i} \subseteq B_{i,k_i} \cap E_{i,j}$ we have $$v_i(X_i)=v_i'(X_i)\cdot v_i(B_{i,k_i} \cap E_{i,j})=2/3\cdot v_i(B_{i,k_i} \cap E_{i,j})$$

\noindent{\bf Induction step}. Now, let's assume that the statement holds for all
    possible subsets $\mathbf{P}=\{p_1,\ldots,p_k\}\subset V$ of $k$
    agents and for all possible vectors of $k$ subsets of indices
    $\mathbf{I}_{p_1}(8\sqrt{k}),\ldots, \mathbf{I}_{p_k}(8\sqrt{k})$, with $2\leq k <n$. We will show
    that the statement also holds for all possible subsets
    $\mathbf{P}=\{p_1,\ldots,p_{k+1}\}\subseteq V$ of $k+1$ agents and
    for all possible vectors of $k+1$ subsets of indices
    $\mathbf{I}_{p_1}(8\sqrt{k+1}),\ldots,
    \mathbf{I}_{p_{k+1}}(8\sqrt{k+1})$. W.l.o.g., we will establish the induction step for subset $\mathbf{P}=\{1,\ldots,k+1\}$ and set of indices $\mathbf{I}_i(8\sqrt{k+1})=\{1,\ldots, 8\sqrt{k+1}\}$, for all $i \in \{1,\ldots,k+1\}$; clearly then the statement holds for any other subset of agents and any other vector of indices of size $8\sqrt{k+1}$ by renaming. As in the case of $k=2$, for brevity, we assume that for each agent and each of her canonical bundles, its value over the common set of edges is scaled to one i.e., $v_i\left(B_{i,t} \cap E_{i}^{\mathbf{P}}\right) = 1, \forall i \in \mathbf{P},t \in \mathbf{I}_i(8\sqrt{k+1})$. Clearly, for some agent $i$, a frugal orientation $X_{i} \subseteq B_{i,t} \cap E_{i}^{\mathbf{P}}\text{, }t
    \in \mathbf{I}_i(8\sqrt{k+1})$ with value at least $2/3$ in the scaled instance admits (by rescaling) a frugal orientation with value at least $2/3$ of the total value over $E_{i}^{\mathbf{P}}$ that agent $i$ has for her selected bundle $B_{i,t}$ and hence the inductive hypothesis will hold. 

    We divide the proof of the induction step into two key claims. The first claim (Claim~\ref{clm:XOS1sqrt}), using the inductive hypothesis, outlines the necessary conditions for the case of $k+1$ agents when no frugal orientation in which every agent achieves at least $2/3$ of the value exists. Definition~\ref{def:overconstrainedsqrt} summarizes these
    conditions in terms of what we call an \textit{overconstrained
      set}. The second claim (Claim~\ref{clm:XOS2sqrt}) demonstrates how
    overconstrained sets reveal a useful structure in the intersection
    of the canonical bundles, and uses results on the existence of independent transversal sets to construct a frugal orientation where each agent gets at least $2/3$ of the value.

    We proceed with the definition of an overconstrained set.
\begin{definition}[Overconstrained set]
    \label{def:overconstrainedsqrt}
Let $\mathbf{P}=\{1,\ldots,k+1\}$ be a set of $k+1$ agents and let $\mathbf{I}_i(8\sqrt{k+1})=\{1,\ldots, 8\sqrt{k+1}\}$ with $i \in \mathbf{P}$ be $k+1$ subsets of indices. We say that the set ${\bf P}$ is {\em overconstrained} if for every agent $i$ there exists a set of edges $S_i = \bigcup_{t \in \mathbf{I}_i(8\sqrt{k+1})} S_{i,t}$ with $S_{i,t}\subseteq B_{i,t} \cap E_{i}^{\mathbf{P}}$, such that:

\begin{enumerate}
\item $
v_i\left(S_{i,t}\right) \geq 1/3$, for all $t \in \mathbf{I}_i(8\sqrt{k+1})$ and
\item For every $j \in \mathbf{P}_{-i}$ there exist exactly two indices $t_1^i,t_2^i \in \mathbf{I}_i(8\sqrt{k+1})$ such that $S_{i,t_1^i} \cap E_{i,j} \ne \emptyset$ and $S_{i,t_2^i} \cap E_{i,j} \ne \emptyset$.

\end{enumerate}

\end{definition}

Intuitively, the first property guarantees that for every frugal orientation of edges in $S_{i,t}$ for some index $t$ towards agent $i$ the value she achieves is high enough. Moreover, those bundles $S_{i,t}$ arise from the intersection of the corresponding canonical bundle $B_{i,t}$ and the extended orientation for the set of $k$ agents $\mathbf{P}_{-i}$. The second property guarantees that those edges cannot intersect "much" the remaining agents in $\mathbf{P}_{-i}$ as they are uniformly distributed over their common edges. We are now ready to state the following claim:

\begin{claim} \label{clm:XOS1sqrt}
Let $\mathbf{P}=\{1,\ldots,k+1\}$ be a set of $k+1$ agents and let $\mathbf{I}_i(8\sqrt{k+1})=\{1,\ldots,8\sqrt{k+1}\}$ with $i \in \mathbf{P}$ be $k+1$ subsets of indices. Either (i) there exists a frugal orientation in which each agent gets at least $2/3$, or (ii) ${\bf P}$ is overconstrained.
\end{claim}

\begin{proof}[Proof of Claim~\ref{clm:XOS1sqrt}]
  If (i) holds we are done i.e., by rescaling the values the hypothesis holds for the given orientation. Otherwise, using the inductive hypothesis, for each agent $i$ we identify a set of edges $S_i \subseteq E_{i}^{\mathbf{P}}$ for which ${\bf P}$ is {\em overconstrained}.

To prove this, we focus on agent $k+1$ and we carefully define a partition of her bundles over the set of her relevant edges in two parts. We show that the first one will serve as the desired union of the high value intersections and the second one will be the union of the remaining edges. That is $$B_{k+1,t} \cap E_{k+1}^{\mathbf{P}}=S_{k+1,t} \cup S_{k+1,t}', \quad S_{k+1,t} \cap S_{k+1,t}'=\emptyset$$ for all $t \in \mathbf{I}_{k+1}(8\sqrt{k+1})$ and we show that the partition has the following property: if $v_{k+1}\left(S_{k+1,t}\right) \leq \frac{1}{3}$ for some $t$ then we can extend an existing allocation—guaranteed to exist by the inductive hypothesis—for the set of $k$ agents and $8\sqrt{k}$ indices to the set of $k+1$ agents and $8\sqrt{k+1}$ indices. If this is not the case, we will show that the two properties of the overconstrained set hold for $i=k+1$ and the set $S_{k+1}=\bigcup_{t \in \mathbf{I}_{k+1}(8\sqrt{k+1})} S_{k+1,t}$. We can proceed in a similar fashion by defining partitions for any bundle $B_{i,t}$ with $i, t\leq k+1$ and show that ${\bf P}$ is overconstrained. Thus, it is enough to show the existence of such a partition.

By the inductive hypothesis, let $X^{(k+1)}$ be the promised frugal orientation for the agents in $\mathbf{P}_{-k+1}$ with respect to $B_1(\mathbf{I}_1(8\sqrt{k})), \dots, B_{k}(\mathbf{I}_{k}(8\sqrt{k}))$, where $\mathbf{I}_i(8\sqrt{k}) = \{1, \dots, 8\sqrt{k}\}\subseteq \mathbf{I}_i(8 \sqrt{k+1})$. By frugality, each agent $i $ receives a subset of their canonical bundles and the orientation is also frugal with respect to the set of indices $\mathbf{I}_i(8\sqrt{k+1})$. Without loss of generality, assume $X_i^{k+1} \subseteq B_{i,1}$ and we get $v_i\left(X_i^{(k+1)}\right) \geq \frac{2}{3} v_i(B_{i,1} \cap E_{i}^{\mathbf{P}_{-k+1}})$ for all $i \in \mathbf{P}_{-k+1}$.

Now, we argue that if we extend the orientation with some bundle $X_{k+1} \subseteq B_{k+1,t}$ for some $t \in \mathbf{I}_{k+1}(8\sqrt{k+1})$ such that $v_{k+1}(X_{k+1}) \ge 2/3$ and moreover $v_i(B_{i,1} \cap X_{k+1} ) \le \frac{1}{3}v_i(B_{i,1} \cap E_{i,k+1})$ for all $i \in \mathbf{P}_{-k+1}$
then we are in case (i) of the Claim and the inductive hypothesis holds for the extended orientation $$X=\left(X^{(k+1)}_1 \cup \left(B_{1,1} \cap \left(E_{1,k+1}\setminus X_{k+1}\right) \right),\ldots,X^{(k+1)}_k \cup \left(B_{k,1} \cap \left(E_{k,k+1}\setminus X_{k+1}\right) \right),X_{k+1}\right)$$. Indeed, for the extended orientation the hypothesis holds for agent $k+1$. Furthermore, for each agent $i \in \mathbf{P} _{-k+1}$ there exists a partition of her selected bundle over the set of the common edges, i.e., set $B_{i,1}\cap E_{i}^{\mathbf{P}}$, into two parts such that the orientation guarantees at least $2/3$ in each part and hence, by additivity, guarantees $2/3$ value i.e., $$v_i(X_i)=v_i(X_i \cap E_{i,k+1})+v_i(X_i\cap E_{i}^{\mathbf{P}_{-k+1}}) \ge \frac{2}{3}v_i(E_{i,k+1} \cap B_{i,1})+v_i(X_i^{(k+1)})= 2/3$$

We try to construct such an extended orientation for all $t \in \mathbf{I}_{k+1}(8\sqrt{k+1})$ in which for the set of the allocated edges $X_{k+1} \subseteq B_{k+1,t}$ none of the agents $i \in \mathbf{P}_{-k+1}$ overestimates it i.e., $$v_i(X_{k+1} \cap B_{i,1}) \le \frac{1}{3}v_i(B_{i,1} \cap E_{i,k+1}).$$ The inability to find such a bundle will imply the desired partition of the corresponding canonical bundle by setting $S'_{k+1,t}=X_{k+1}$ and $S_{k+1,t}=B_{k+1,t} \setminus X_{k+1}$.

In order to extend the orientation it is sufficient to prove that agent $k+1$ values her bundle enough. To do so, each agent $i \in \mathbf{P}_{-k+1}$ marks the two indices  in $I_{k+1}(8\sqrt{k+1})$, let them be $(t_1^i,t_2^i)$, for which the corresponding bundles of agent $k+1$ have the highest value over her selected bundle $B_{i,1}$. That is,
$$t_1^i=\arg \max_{t \in \mathbf{I}_{k+1}(8 \sqrt{k+1})}\{v_i(B_{i,1} \cap B_{k+1,t})\}$$
and

$$t_2^i=\arg \max_{t \in \mathbf{I}_{k+1}(8 \sqrt{k+1})\setminus\{t^i_1\}}\{v_i(B_{i,1} \cap B_{k+1,t})\}$$
See Figure~\ref{fig:markedXOS} for an illustration. 

Let the desired set $S_{k+1}$ be the intersections of the canonical bundle each agent selected in orientation $X^{(k+1)}$ with the union of the marked bundles of the corresponding agent, i.e., $$S_{k+1}= \bigcup_{i \in \mathbf{P}_{-k+1}}B_{i,1} \cap \left(B_{k+1,t_{1}^i} \cup B_{k+1,t^i_2}\right)$$ and
$$S'_{k+1,t}= (B_{k+1,t} \cap E_{k+1}^{\mathbf{P}}) \setminus S_{k+1}$$
See Figure \ref{fig:markedXOS} for an illustration. Now we argue that indeed, none of the agents overestimates any bundle $S'_{k+1,t}$. Assume otherwise and let $t^*$ and $i$ such that $v_i(S'_{k+1,t^*}\cap B_{i,1}) > \frac{1}{3}v_i(B_{i,1} \cap E_{i,k+1})$. There can not be the case $t^* \in \{t_1^i,t_2^i\}$ since, by construction, we omit the corresponding intersections from the bundle i.e., $B_{i,t^*} \cap B_{i,1} \in S_{k+1}$ and thus the corresponding value for the intersection is zero as $S'_{k+1,t^*} \cap B_{i,1}= \emptyset$. Hence, there must be the case  $t^* \notin \{t_1^i,t_2^i\}$. Due to additivity, we have 
\[
v_{i}(B_{i,1} \cap B_{k+1,t_1^i})+v_{i}(B_{i,1} \cap B_{k+1,t_2^i})+v_{i}(B_{i,1} \cap B_{k+1,t^*}) \ge 3v_{i}(B_{i,1} \cap B_{k+1,t^*}) > v_i(B_{i,1} \cap E_{i,k+1})
\] which is a contradiction.

\begin{center}
\begin{figure} 
\begin{center}
\begin{tikzpicture}[scale=0.65]
\small{
  \draw[step=1cm,] (0,-1) grid (8,6);
\draw[pattern={north west lines},pattern color=red](0,5) rectangle +(1,1);
\draw[pattern={north west lines},pattern color=red](0,4) rectangle +(1,1);
\node[anchor=north] at (0.5,6) {$t_1^1$};
\node[anchor=north] at (0.5,5) {$t_2^1$};

\draw[pattern={north west lines},pattern color=red](2,5) rectangle +(1,1);
\draw[pattern={north west lines},pattern color=red](2,2) rectangle +(1,1);
\node[anchor=north] at (2.5,6) {$t_1^2$};
\node[anchor=north] at (2.5,3) {$t_2^2$};

\draw[pattern={north west lines},pattern color=red](4,3) rectangle +(1,1);
\draw[pattern={north west lines},pattern color=red](4,0) rectangle +(1,1);
\node[anchor=north] at (4.5,4) {$t_1^3$};
\node[anchor=north] at (4.5,1) {$t_2^3$};

\draw[pattern={north west lines},pattern color=red](7,4) rectangle +(1,1);
\draw[pattern={north west lines},pattern color=red](7,1) rectangle +(1,1);
\node[anchor=north] at (7.5,5) {$t_1^k$};
\node[anchor=north] at (7.5,2) {$t_2^k$};

\node[anchor=north] at (0.5,7) {$B_{1,1}$};
\node[anchor=north] at (1.6,6.75) {$\ldots$};
\node[anchor=north] at (2.5,7) {$B_{2,1}$};
\node[anchor=north] at (3.6,6.75) {$\ldots$};
\node[anchor=north] at (4.5,7) {$B_{3,1}$};
\node[anchor=north] at (5.6,6.75) {$\ldots$};
\node[anchor=north] at (6.6,6.75) {$\ldots$};
\node[anchor=north] at (7.5,7) {$B_{k,1}$};

\node[anchor=west] at (-3.5,-0.5) {$B_{k+1,8\sqrt{k+1}}$};
\node[anchor=west] at (-3.5,0.5) {$B_{k+1,8\sqrt{k+1}-1}$};
\node[anchor=west] at (-0.75,3.5) {$\vdots$};
\node[anchor=west] at (-0.75,2) {$\vdots$};
\node[anchor=west] at (-2,4.5) {$B_{k+1,2}$};
\node[anchor=west] at (-2,5.5) {$B_{k+1,1}$};}
\end{tikzpicture}
    
\end{center}
\caption{This figure illustrates the set of edges $E_{k+1}^{\mathbf{P}}$. For agent $1$ and her selected bundle, $B_{1,1} \supset X_1^{(k+1)}$, the corresponding indices are $t_{1}^1=1$ and $t_{2}^1=2$. That is, $v_1(B_{k+1,1} \cap B_{1,1})\ge v_1(B_{k+1,2} \cap B_{1,1}) \ge v_1(B_{k+1,t} \cap B_{1,1}),$ $t \in \mathbf{I}_{k+1}(8\sqrt{k+1})\setminus \{1,2\}$. The corresponding indices for the remaining agents are as shown. Note that it may happen that some bundles are marked by more than one agent, bundles $B_{k+1,1}$ and $B_{k+1,2}$, while other may not be marked, bundle $B_{k+1,8\sqrt{k+1}}$. The set $S_{k+1}$ is the union of those intersections, $S_{k+1}=\bigcup_{i \in \mathbf{P}_{-k+1}} B_{i,1} \cap \left(B_{k+1,t^i_1} \cup B_{k+1,t^i_2}\right)$, shown in figure with red diagonal lines. Observe that by construction, $S_{k+1}$ has the second property of overconstrained set.}
\label{fig:markedXOS}
\end{figure}  
\end{center}

As a result, if for some index $t \in \mathbf{I}_{k+1}(8\sqrt{k+1})$ we have $v_{k+1}(S_{k+1,t}') \ge 2/3$ then we can extend the existing orientation and property (i) of the claim holds. If this is not the case, then due to additivity, $v_i(S_{k+1,t}) > 1/3$. Moreover from the construction we have that for every $i \in \mathbf{P}_{-k+1}$ there exist exactly two indices $t_1^i,t_2^i \in \mathbf{I}_i(8\sqrt{k+1})$ such that $S_i \cap B_{i,t_1^i} \cap E_{i,k+1}=S_{i,t_1^i} \cap E_{i,k+1} \ne \emptyset$ and $S_i \cap B_{i,t_2^i} \cap E_{i,k+1}= S_{i,t_2^i} \cap E_{i,k+1} \ne \emptyset$ and the properties of the overconstrained set follow for agent $k+1$ and set of edges $S_{k+1}$. In a similar way, we can prove that either $(i)$ we can extend an existing orientation or $(ii)$ ${\bf P}$ is overconstrained as all the properties hold for all agents.
\end{proof}



To complete the proof, we state the following claim which asserts a strong and interesting connection between independent transversal sets and the MMS problem with XOS agents.

\begin{claim} \label{clm:XOS2sqrt}
    Let $\mathbf{P}=\{1,\ldots,k+1\}$ be a set of $k+1$ agents and let $\mathbf{I}_i(8\sqrt{k+1})=\{1,\ldots,8\sqrt{k+1}\}$ with $i \in \mathbf{P}$ be $k+1$ subsets of indices. If $\mathbf{P}$ is overconstrained then Theorem~\ref{thm:ITS} implies a frugal orientation $X$ such that 
    $$v_{i}(X_{i}) \ge \frac{2}{3}v_{i}(B_{i,k_i} \cap E_{i}^{\mathbf{P}}),X_{i} \subseteq B_{i,k_i} \cap E_{i}^{\mathbf{P}}, \forall i \in \mathbf{P} \text{ and some }k_i \in \mathbf{I}_{k+1}(8\sqrt{k+1}).$$
\end{claim}

Claim~\ref{clm:XOS2sqrt} combined with Claim~\ref{clm:XOS1sqrt} proves the inductive hypothesis and hence the theorem.

\begin{proof}
    We show that the overconstrained set $\mathbf{P}$ can be represented via a multipartite graph with the desired properties such that there exists an $ITS$ of size $2$. This implies that each agent $i \in \mathbf{P}$ can get her corresponding high value intersections $S_{i,k_i}$ for some $k_i$ and, by Theorem~\ref{thm:XOS1outofd}, also at least $1/2$ of the remaining value in the corresponding canonical bundle. This guarantees at least $2/3$ value for each agent i.e.

    $$v_{i}(S_{i,k_i}) + \frac{1}{2}v_i(B_{i,k_i}\setminus S_{i,k_i}) \ge  \frac{1}{3}+\frac{1}{2}\left(1-\frac{1}{3}\right)=2/3$$

We define the $k+1$-partite graph $G_{\mathbf{P}}$, corresponds to the overconstrained set, with the partition $P=(V_1, V_2\ldots,V_{k+1})$ as follows:
\begin{enumerate}
    \item For each part we have $V_i=\left(v_{i,1},v_{i,2},\ldots,v_{i,8\sqrt{k+1}}\right)$ where each vertex $v_{i,t}$ corresponds to the canonical bundle $B_{i,t} \in B_{i}(\mathbf{I}_i(8\sqrt{k+1}))$. 
    \item For every pair of vertices $(v_{i,t},v_{i',t'})$ with $i\ne i'$ there exist an edge if and only if either (i) $S_{i,t} \cap B_{i',t'} \ne \emptyset$ or (ii) $S_{i',t'} \cap B_{i,t} \ne \emptyset$.
\end{enumerate}

That is, an edge $(v_{i,t},v_{i',t'})$ exists if and only if the corresponding set $S_{i,t}$ (or set $S_{i',t'}$) intersects the canonical bundle $B_{i',t'}$ (respectively bundle $B_{i,t}$). Observe that an independent transversal set $(v_{1,t_1},v_{2,t_2},\ldots,v_{k+1,t_{k+1}})$ in $G_{\mathbf{P}}$ yields a frugal orientation of the corresponding high value sets $S_{i,t_i}$ for each agent $i$ and some index $t_{i} \in \mathbf{I}_i(8\sqrt{k+1})$ i.e., the orientation is valid as none of the items is allocated multiple times. In the same way, a $2$-independent transversal set $(v_{1,t_1},v_{1,t_1'},v_{2,t_2},v_{1,t_2'},\ldots,v_{k+1,t_{k+1}},v_{k+11,t_{k+1}'})$ in $G_{\mathbf{P}}$ derives an orientation of the corresponding high value intersections $S_{i,t_i},S_{i,t_i'}$ for each agent $i$ and two indices $t_{i},t_i' \in \mathbf{I}_i(8\sqrt{k+1})$.

Figure~\ref{fig:overconstrainedgraph} presents the graph $G_{\mathbf{P}}$. 

Note that the graph $G_{\mathbf{P}}$ is $8\sqrt{k+1}$ thick and has maximum average degree $\bar{D}_P(G) \le \frac{4k}{8\sqrt{k+1}}$. That is, from the definition of the overconstrained set, for every pair of agents $(i,i')$ there exist two bundles with indices in $\mathbf{I}_i(8\sqrt{k+1})$ (and also two bundles with indices in $\mathbf{I}_{i'}(8\sqrt{k+1})$) intersecting the corresponding two bundles of agent $i'$ (respectively, the bundles of agent $i$) and thus at most $4$ edges between parts $V_i,V_{i'}$. Each agent forms $k$ pairs in total, and as a result, there are at most $4k$ of those intersections that correspond to some edge with one endpoint in $V_i$. See Figure~\ref{fig:overconstrainedgraph} for an illustration.

To prove the Claim, we obtain the following theorem from \cite{DaiLiuZhang2025}:

\begin{theorem}[Theorem~4.3 \cite{DaiLiuZhang2025}]\label{thm:ITS}
    Given a multipartite graph $G$ with a partition $P = (V_1, V_2,\ldots,V_r)$, if $\bar{D}_P(G) \le D$ and $P$ is $4s^2D$-thick, then $G$ has an (ITS) with size $s$ with respect to $P$. 
\end{theorem}

For $s=2$ the theorem~\ref{thm:ITS} follows as $4s^2D=16 \frac{k}{2\sqrt{k+1}} < 8 \sqrt{k+1}$ and thus there exists an ITS of size $2$ implying that each agent can get the high value intersections from at least two canonical bundles and no multi allocation will occur.

\begin{figure}
    \centering
    \begin{tikzpicture}[scale=0.65]

  \tiny{
\node[anchor=south west] at (2,3) {${\bf V_1}$};
\node[anchor=south west] at (-2.5,-2.75) {${\bf V_2}$};
\node[anchor=south west] at (2.25,-2.75) {${\bf V_3}$};

  \foreach \i in {1, 2, 3} {
    \node[draw, circle, fill=yellow!30] (B\i) at (-2.5 + 1.25*\i, 2.5) {$v_{1,\i}$};
  }
  \node[draw,
  circle, fill=yellow] (B5) at (-2 + 1.25*6, 2.5) {};

    \node[draw, circle, fill=yellow!0] (B4) at (-2 + 1.25*5, 2.5) {};
\node[anchor=south west] at (-2.5 + 1.25*4, 2.5) {$\boldsymbol{\ldots}$};

\node[anchor=south west] at (-2.5 + 1.25*5, 3) {$v_{1,8\sqrt{k+1}-1}$};

\node[anchor=south west] at (-2.5 + 1.25*6.8, 2.25) {$v_{1,8\sqrt{k+1}}$};


   Part 4: Bottom (D) - More south
  \foreach \i in {1, 2, 3} {
    \node[draw, circle, fill=red!30] (D\i) at ( 1.25*\i - 5, -1.5) {$v_{2,\i}$};
  }
      \node[anchor=south west ] at ( 1.25*4 - 5.5, -1.5){$\boldsymbol{\ldots}$};
  \draw[ultra thick, red] (B1) -- (D1);
  \draw[ultra thick, red] (B1) -- (D2);
  \draw[ultra thick, red] (B2) -- (D3);
  \draw[ultra thick, red] (B3) -- (D3);

    \foreach \i in {1, 2, 3} {
    \node[draw, circle, fill=blue!30] (C\i) at ( 1.25*\i, -1.5) {$v_{3,\i}$};
  }
      \node[anchor=south west ] at ( 1.25*4-0.5, -1.5){$\boldsymbol{\ldots}$};
  \draw[ultra thick, blue] (C1) -- (B2);
  \draw[ultra thick, blue] (C1) -- (B4);
  \draw[ultra thick, blue] (C2) -- (B5);
  \draw[ultra thick, blue] (C3) -- (B5);
\draw[fill=yellow!30] (-2 + 1.25*5, 2.5) circle (4.5mm); 
\draw[fill=yellow!30] (-2 + 1.25*6, 2.5) circle (4.5mm); 
}

\end{tikzpicture}

    \caption{The figure illustrates a subset of vertices and edges of $G_{\mathbf{P}}$. Parts $V_1,V_2$ and $V_3$ correspond to agents $1,2$ and $3$ respectively and each vertex $v_{i,t}$ corresponds to bundle $B_{i,t}$. As we can see, $X^{(2)}_{1} \subset B_{1,1}$ i.e., in the promised orientation $X^{(2)}$ for agents in $\mathbf{P}_{-2}$ agent $1$ receives a subset of her canonical bundle $B_{1,1}$ while the two indices she marked are $t_1^1=1$ and $t_{2}^1=2$, the corresponding edges from part $V_{1}$ to $V_{2}$. Similarly, $X^{(3)}_1 \subset B_{1,8\sqrt{k+1}}$ with $t_{1}^1=2$ and $t_{2}^1=3$. On the other hand, for the other agents we have $X^{(1)}_2 \subset B_{2,3}$ and $X^{(1)}_3 \subset B_{3,1}$ with the corresponding marked indices $t_{1}^2=t_{1}^3=2,t_{2}^2=3$ and $t_{2}^3=8\sqrt{k+1}-1$. As the set is overconstrained, each of the sets $S_{1,2}=B_{1,2} \cap \left(B_{2,3} \cup B_{3,1}\right),S_{1,3}=B_{1,3}\cap B_{3,1}$ and $S_{1,8\sqrt{k+1}-1} = B_{1,8\sqrt{k+1}-1}\cap B_{3,1}$ has a value of at least $1/3$ for agent $1$. For every independent transversal set, it holds that each of those edges, which is a superset of some $S_{i,t}$, can be oriented towards agent $i$ and guarantee a value of at least $1/3$ value.} 
        \label{fig:overconstrainedgraph}
\end{figure}

Without loss of generality, let $(v_{1,1},v_{1,2},v_{2,1},v_{2,2}, \ldots,v_{k+1,1},v_{k+1,2})$ be the promised ITS of size $2$. The analysis holds for every ITS of size $2$ by renaming. For the corresponding bundles restricted to the set of the common items $E^{\mathbf{P}}$ without the items corresponding to edges in $G_{\mathbf{P}}$, 
i.e., the set of items $A = E^{\mathbf{P}}\setminus \bigcup_{i \in \mathbf{P}}\left(S_{i,1} \cup S_{i,2}\right)$, we scale the value to one. That is, $v_i\left(B_{i,t} \cap A\right)=1\text{, for } i \in \mathbf{P}\text{, }t\in\{1,2\}$ and apply Theorem~\ref{thm:XOS1outofd} which yields a $1/2$-out-of-$2$ frugal orientation and w.l.o.g. let this orientation be $X=\left(X_1,X_2,\ldots,X_{k+1}\right)\text{, with }X_i \subseteq B_{i,1} \cap \left(E_{i}^{\mathbf{P}}\setminus S_{i}\right)$. By rescaling, each agent $i$ achieves at least half the restricted value of her first bundle, i.e., $v_i(X_i) \ge \frac{1}{2}v_i(B_{i,1} \cap A)$. We extend the orientation by oriented the corresponding edges $S_{i,1}$ towards agent $i$ which can be done as the bundles form an ITS in $G_{\mathbf{P}}$. Thus, due to additivity, each agent guarantees a value of at least
$$v_i(X_i \cup S_{i,1}) = v_i(S_{i,1}) + \frac{1}{2}v_i(B_{i,t_i} \cap A)= v_i(S_{i,1}) +\frac{1}{2}(1-v_i(S_{i,1}))=\frac{1}{2}+\frac{1}{2}v_i(S_{i,1}) \ge 2/3$$ As a result, the claim holds, which establishes the inductive hypothesis and also the theorem.

\end{proof}
\end{proof}

\begin{corollary}
In every multi-graph with $n \ge 64$ XOS agents, there exists a $2/3$-MMS orientation.
    
\end{corollary}

We remark that for a small number of agents the previous theorem does not admit a $2/3$-MMS orientation due to the constant factor. For that matter, we provide Theorem~\ref{thm:XOS4+} which states the existence of a $2/3$-MMS for any number of agents $n \ge 3$. On the negative side, the analysis of Theorem~\ref{thm:XOS4+} is much more challenging and moreover, for many agents, the guarantees are worse, i.e., $2/3$-out-of-$n$ instead of $2/3$-out-of-$8\sqrt{n}$. We also note that in a similar but more detailed analysis for the graph $G_{\mathbf{P}}$ we can use a stronger theorem, Theorem~4.8 from \cite{DaiLiuZhang2025}, and derive better bounds for the constant factor. However, we will not obtain the proof as the first half (Claim~\ref{clm:XOS1sqrt}) remains the same and the bounds are still $d \in \mathcal{O}(\sqrt{n})$.

\begin{theorem}[Using Theorem~4.8 from \cite{DaiLiuZhang2025}]
     
     In every multi-graph with $n$ XOS agents, there exists a $2/3$-out-of-$(6+2\sqrt{2n+7})$-MMS orientation.
\end{theorem}

\subsection{Combinatorial Barriers and the ITS Conjecture Connection}
\label{sec:XOSITS}
Given the significant slackness, specifically, a factor of $\sqrt{n}$ instead of $n$, a natural question is whether this method can be leveraged to obtain better results. Dai, Liu, and Zhang \cite{DaiLiuZhang2025} pose an open question regarding independent transversal sets: Do their theorems still hold if $s$ is replaced by $s^2$? In this section, we demonstrate that this question is closely connected to the problem of approximate MMS allocations for XOS agents with graphical valuations.

\begin{conjecture}[Conjecture on Theorem~$4.3$ \cite{DaiLiuZhang2025}]\label{thm:ITSconj}
    Given a multipartite graph $G$ with partition $P = (V_1, V_2, \ldots, V_r)$, if $\bar{D}_P(G) \le D$ and $P$ is $4sD$-thick, then $G$ contains an ITS of size $s$ with respect to $P$. 
\end{conjecture}

We establish a compelling connection between these two problems. Specifically, a positive resolution to this conjecture implies that, for any integer $\ell > 1$, there exists a threshold $n_\ell$ such that every instance involving $n \ge n_\ell$ XOS agents with graphical valuations admits a $(1-1/\ell)$-approximate MMS allocation. Conversely, if such an approximation guarantee is unattainable for an arbitrary number of agents, the conjecture is necessarily false. More formally:
\begin{theorem}\label{thm:connectionITS}
For the approximate MMS problem with $n$ XOS agents it holds that:
     
\begin{enumerate}
    \item If Conjecture~\ref{thm:ITSconj} holds for $4sD$-thick partitions, then for every $\ell$ there exist $c_\ell > 1$ and $ r_\ell \in [0,1)$ such that in every multi-graph with $n$ XOS agents, there exists a $(1-1/\ell)$-out-of-$c_\ell\cdot n^{r_\ell}$-MMS orientation.
    \item If there exists a constant $\alpha < 1$ such that an $\alpha$-MMS allocation is not guaranteed to exist for any instance with $n$ XOS agents with graphical valuations, then Conjecture~\ref{thm:ITSconj} does not hold for $4sD$-thick partitions. 
\end{enumerate}
    
\end{theorem}
\begin{proof}
    It is enough to prove (1); then statement (2) holds, as otherwise it contradicts statement (1). Hence, assume that Conjecture~\ref{thm:ITSconj} holds.
    
    We prove the following statement by induction on $\ell\ge 2$: for every $\ell$ there exist $c_\ell > 1$ and $ r_\ell \in [0,1)$ such that every multi-graph with $n$ agents admits a $(1-1/\ell)$-out-of-$d^n_\ell$-MMS frugal orientation, where $d^n_\ell=c_\ell\cdot n^{r_\ell}$.\footnote{It will always be the case where $c_\ell \cdot n^{r_\ell}$ is greater than $\ell$. We could also define $d^n_\ell=\max \{c_\ell\cdot n^{r_\ell},\ell\},$ which would directly imply the statement for the case of $n \le 3$.} More precisely, $c_\ell = \sqrt{4 \cdot c_{\ell-1} \cdot 2 \cdot (\ell-1)}$ and $r_\ell = \frac{r_{\ell-1}+1}{2}$ where $c_2 = 2$ and $r_2 = 0$. Observe that $d^n_\ell \in \mathcal{O}(n^{1-\epsilon}),$ which implies the existence of $n_{\ell}$ for which $n \ge d^n_{\ell}$ if $n \ge n_\ell$. \footnote{Note that if $r_\ell \ge 1$ then there is always $n<d_\ell$ which cannot imply MMS guarantee.} It is sufficient to set $n_\ell$ be the smallest $n$ such that $d^n_\ell \le n$ or
    $$d^n_\ell=c_\ell \cdot n^{r_\ell}\le n \implies c_\ell \le n^{1-r_\ell} \implies\sqrt[1-r_\ell]{c_\ell}\quad \le n. $$ 
    
    As a result, for every $n$ greater than or equal to $n_\ell=\sqrt[1-r_\ell]{c_\ell}$ there exists a $(1-1/\ell)$-MMS frugal orientation. 

    Note that we have already proven the base case, i.e., for $\ell=2$ this is true for $r_2=0$ and $c_2=2$ (Theorem~\ref{thm:XOS1outofd}). \footnote{For $\ell=3$ we have $c_3=8$ and $r_3=1/2$ and the inductive hypothesis also follows (Theorem~\ref{thm:sqrtXOS}) for a different constant $c_3$.}

    Assume that the inductive hypothesis holds for some $\ell$ and consider the case of $\ell+1$. We show that, following the same analysis as the proof of Theorem~\ref{thm:sqrtXOS} and applying Conjecture~\ref{thm:ITSconj} for a $4sD$ thick graph with $D=2\ell(n-1)$ and $s=c_{\ell+1}n^{r_{\ell+1}}$, we can guarantee that the hypothesis holds for $c_{\ell+1}=\sqrt{4\cdot c_{\ell}\cdot 2\ell}$ and $r_{\ell+1}=\frac{r_\ell+1}{2}$. This implies $n_{\ell+1}= \sqrt[1-r_{\ell+1}]{c_{\ell+1}}$ and thus the inductive hypothesis also holds for $\ell+1$.
    
    Let $G=(V,E)$ be a graph with $n \ge n_{\ell+1}$ vertices, $d^n_{\ell+1}=c_{\ell+1}\cdot n^{r_{\ell+1}}$ where $c_{\ell+1}=\sqrt{4\cdot c_{\ell}\cdot 2\ell}$ and $r_{\ell+1}=\frac{r_\ell+1}{2}$ \footnote{Observe that $d_\ell^k$ is an increasing function of both $\ell$ and $k$.} and $\mathbf{B}^{d_{\ell+1}}_i=(B_{i,1},\ldots,B_{i,d_{\ell+1}})$ denote the canonical $1$-out-of-$d^n_{\ell+1}$-MMS partition of agent $i$ for the edge set $E_i$. For some $k \le d^n_{\ell+1}$ let $\mathbf{I}_i(k) \subseteq\{1,2,\ldots,d^n_{\ell+1}\}$ be a subset of $k$ indices, i.e., $|\mathbf{I}_i(k)|=k$. We define $B_i(\mathbf{I}_i(k))$ as the set of the respective bundles $B_{i,t}$ with $t \in \mathbf{I}_i(k)$ i.e., $B_{i}(\mathbf{I}_i(k))=\{B_{i,t}\mid t \in \mathbf{I}_i(k)\}$.  For some subset of agents $\mathbf{P} \subseteq V$, we denote the set of agents $\mathbf{P}_{-i} =\mathbf{P} \setminus \{i\}\text{ for } i \in \mathbf{P}$. Also, let the sets $E_{i}^{\mathbf{P}}=E_{i} \cap \left(\bigcup_{j \in \mathbf{P}}E_{i,j}\right)$ and $E^{\mathbf{P}}= \bigcup_{i \in \mathbf{P}} E_{i}^{\mathbf{P}}$ be \emph{the set of the common edges of agent $i$} with agents in $\mathbf{P}$ and \emph{the set of the common edges} of agents in $\mathbf{P}$ respectively. We will refer to those sets as the set of the common edges, omitting $\mathbf{P}$ and agent $i$ when they are clear from context.
    
    We employ Lemma~\ref{lem:XOSadditive}, which asserts that it suffices to show the existence of a  $(1-1/(\ell+1))$-out-of-$d^n_{\ell+1}$-MMS frugal orientation in every multi-graph with $n \ge 2$ additive agents. We prove by induction on $k \in \{2, \ldots, n\}$ that for every subset $\boldsymbol{P}=(p_1,\ldots,p_k) \subseteq V$ of $k$ agents and for any fixed $k$ sets $\mathbf{I}_{p_1}(d_{\ell+1}^k), \ldots,\mathbf{I}_{p_k}(d^k_{\ell+1})$ of indices of size $d^k_{\ell+1}$ there is a frugal orientation $X$, restricted to the set of the common edges,  with respect to $\mathbf{B}^k=(B_{p_1}(\mathbf{I}_{p_1}(d^k_{\ell+1})), \ldots, B_{p_k}(\mathbf{I}_{p_k}(d^k_{\ell+1})))$  (the allocated bundle of each agent is a subset of some canonical bundle with index in the corresponding set) such that each agent is guaranteed at least $1-1/\ell$ of the restricted value that the corresponding canonical bundle has over the set of the common edges. That is, for all agents $p_i \in \mathbf{P}$, there exists a canonical bundle $B \in B_{p_i} ( \mathbf{I}_{p_i}(d^k_{\ell+1}))$ such that $$v_{p_i}(X_{p_i}) \ge (1-1/(\ell+1))v_{p_i}(B \cap E_{i}^{\mathbf{P}}),\quad X_{p_i} \subseteq B \cap E_{p_i}^{\mathbf{P}}$$
    
    We note that every  frugal orientation with respect to bundles $B_{p_i}(\mathbf{I}_{p_i}(d_{\ell+1}^k))$ for some agent $p_i \in \mathbf{P}$ and set of indices $\mathbf{I}_{p_i}(d^k_{\ell+1})$ is also a frugal orientation with respect to the canonical partition $B_{p_i}^{d_{\ell+1}^n}$. Furthermore, observe that in contrast with the proof of Theorem~\ref{thm:XOS4+}, each agent of the set achieves $1-1/(\ell+1)$ of the value which some bundle with admissible index has over the set of the common edges with the other agents in the set. This is crucial for our analysis as we cannot guarantee the existence of two $1-1/(\ell+1)$ frugal orientations in which each agent selects from a different canonical bundle. Intuitively, if the induction holds for some $k$, then we append the set with one more agent, agent $i$. Now, the allocation guaranteed to exist for the $k$ agents, has to be also extended with the set of the common edges of those agents and the new agent, agent $i$. We give priority to the first $k$ agents of the group over their common edges with agent $i$ in order to guarantee at least $1-1/(\ell+1)$ of their corresponding selected bundle also in the new set of items.
    
    For $k=n$ we have $\boldsymbol{P} =V$ and $E_{i}^{V}=E_i$ and as a result, for each agent $i\in V$ we have: $$v_{i}(X_{i}) \ge (1-1/(\ell+1))\min_{B \in B_i( \mathbf{I}_{i}(d_{\ell+1}^n))}\left\{v_{i}\left(B \cap E_i\right)\right\}=(1-1/(\ell+1))\min_{B \in \mathbf{B}_i^{d_{\ell+1}^n}}\left\{v_{i}\left(B \right)\right\}=(1-1/(\ell+1))\text{-out-of-}d_{\ell+1}^n,\forall i \in V$$ and the theorem follows. 

     \noindent{\bf Base case, $\bf k=2$}.  By Lemma~\ref{lem:2XOSdlower} we have that for $n=2$ there exists a $(1-1/(\ell+1))$-out-of-$(\ell+1)$ MMS orientation. Those results are tight (Theorem~\ref{thm:XOSupper}). Thus, combining the tightness with the induction hypothesis i.e., the statement is true for $\ell$ and two agents, we get that the corresponding number of bundles $d^2_\ell$ must be at least $\ell$ i.e.,
     $$\ell \le d_\ell^2 = c_\ell \cdot 2^{r_\ell} \le 2 c_\ell \implies c_\ell \ge \ell /2$$
     where the second inequality is true due to the fact that $r_\ell < 1$. This implies that $d_{\ell+1}^2=$ $c_{\ell+1} \cdot n^{r_{\ell + 1}}\ge  \sqrt{4\cdot c_{\ell}\cdot 2\ell}\ge 2 \sqrt{\frac{\ell}{2}\cdot 2\ell} =2\ell>\ell + 1$.  That is, for every pair of agents $(i,j)$ and every two sets of $\left\lceil d^2_{\ell+1} \right\rceil$ indices (note that $(\ell+1)$ indices are enough for the case of $2$ agents) $\mathbf{I}_i\left(\left\lceil d_{\ell+1}^2 \right\rceil\right)$ and $\mathbf{I}_j\left(\left\lceil d^2_{\ell+1} \right\rceil\right)$, scaling their value in each bundle over the set of the common edges to one i.e., the scaling is applied in each canonical bundle and $v_i'(e) = \frac{v_i(e)}{v_i(B_{i,k_i} \cap E_{i,j})}$ for $e \in B_{i,k_i}$ and $$v'_i(B_{i,k_i} \cap E_{i,j})=v'_j(B_{i,k_j} \cap E_{i,j})=1\text{, }\forall k_i \in \mathbf{I}_i\left(\left\lceil d^2_{\ell+1} \right\rceil\right),k_j \in \mathbf{I}_j\left(\left\lceil d^2_{\ell+1} \right\rceil\right)$$ Applying Lemma~\ref{lem:2XOSdlower} we can find a frugal orientation for which the inductive hypothesis holds. That is, $1-1/(\ell+1)$ value in the scaled bundle admits $1-1/(\ell+1)$ value of the bundle restricted to the value of their common edges as for the allocated bundle $X_{i} \subseteq B_{i,k_i} \cap E_{i,j}$ we have $$v_i(X_i)=v_i'(X_i)\cdot v_i(B_{i,k_i} \cap E_{i,j})=(1-1/(\ell+1))\cdot v_i(B_{i,k_i} \cap E_{i,j})$$

  \noindent{\bf Induction step}. Now, let's assume that the statement holds for all possible subsets $\mathbf{P}=\{p_1,\ldots,p_k\}\subset V$ of $k$ agents and for all possible vectors of $k$ subsets of indices $\mathbf{I}_{p_1}(d^k_{\ell+1}),\ldots, \mathbf{I}_{p_k}(d^k_{\ell+1})$, with $2\leq k <n$. We show hat the statement also holds for all possible subsets $\mathbf{P}=\{p_1,\ldots,p_{k+1}\}\subseteq V$ of $k+1$ agents and for all possible vectors of $k+1$ subsets of $d^{k+1}_{\ell+1}$ indices
    $\mathbf{I}_{p_1}(d^{k+1}_{\ell+1}),\ldots,
    \mathbf{I}_{p_{k+1}}(d^{k+1}_{\ell+1})$. W.l.o.g., we establish the induction step for subset $\mathbf{P}=\{1,\ldots,k+1\}$ and set of indices $\mathbf{I}_i(d^{k+1}_{\ell+1})=\{1,\ldots, d^{k+1}_{\ell+1}\}$, for all $i \in \{1,\ldots,k+1\}$; clearly then the statement holds for any other subset of agents and any other vector of indices of size $d^k_{\ell+1}$ by renaming. As in the case of $k=2$, for brevity, we assume that for each agent and each of her canonical bundles, its value over the common set of edges is scaled to one i.e., $v_i\left(B_{i,t} \cap E_{i}^{\mathbf{P}}\right) = 1, \forall i \in \mathbf{P},t \in \mathbf{I}_i(d^{k+1}_{\ell+1})$. Clearly, for some agent $i$, a frugal orientation $X_{i} \subseteq B_{i,t} \cap E_{i}^{\mathbf{P}}\text{, } t \
    \in \mathbf{I}_i(d^{k+1}_{\ell+1})$ with value at least $1-1/(\ell+1)$ in the scaled instance admits (by rescaling) a frugal orientation with value at least $1-1/(\ell+1)$ of the total value over $E_{i}^{\mathbf{P}}$ that agent $i$ has for her selected bundle $B_{i,t}$ and hence the inductive hypothesis will hold. 

  We divide the proof of the induction step into two key claims. The first claim (Claim~\ref{clm:XOS1sqrt}), using the inductive hypothesis, outlines the necessary conditions for the case of $k+1$ agents when no frugal orientation in which every agent achieves at least $1-1/(\ell+1)$ of the value exists. Definition~\ref{def:overconstrainedsqrtconj} summarizes these conditions in terms of what we call an \textit{overconstrained set}. The second claim (Claim~\ref{clm:XOS2sqrtconj}) demonstrates how overconstrained sets reveal a useful structure in the intersection of the canonical bundles, and uses results on the existence of independent transversal sets to construct a frugal orientation where each agent gets at least $1-1/(\ell+1)$ of the value.

We proceed with the definition of an overconstrained set.
 \begin{definition}[Overconstrained set]
     \label{def:overconstrainedsqrtconj}
Let $\mathbf{P}=\{1,\ldots,k+1\}$ be a set of $k+1$ agents and let $\mathbf{I}_i(d^{k+1}_{\ell+1})=\{1,\ldots, d^{k+1}_{\ell+1}\}$ with $i \in \mathbf{P}$ be $k+1$ subsets of indices. We say that the set ${\bf P}$ is {\em overconstrained} if for every agent $i$ there exists a set of edges $S_i = \bigcup_{t \in \mathbf{I}_i(d^{k+1}_{\ell+1})} S_{i,t}$ with $S_{i,t}\subseteq B_{i,t} \cap E_{i}^{\mathbf{P}}$, such that:

 \begin{enumerate}
 \item $v_i\left(S_{i,t}\right) \geq 1/(\ell+1)$, for all $t \in \mathbf{I}_i(d^{k+1}_{\ell+1})$ and
\item For every $j \in \mathbf{P}_{-i}$ there exist exactly $\ell$ indices $t_1^i,t_2^i,\ldots,t^i_{\ell} \in \mathbf{I}_i(d^{k+1}_{\ell+1})$ such that $S_{i,t_\theta^i} \cap E_{i,j} \ne \emptyset$ for all $\theta \le \ell$.

\end{enumerate}

\end{definition}

 Intuitively, the first property guarantees that for every frugal orientation of edges in $S_{i,t}$ for some index $t$ towards agent $i$ the value she achieves is high enough. More over, those bundles $S_{i,t}$ arise from the intersection of the corresponding canonical bundle $B_{i,t}$ and the extended orientation for the set of $k$ agents $\mathbf{P}_{-i}$. The second property guarantees that those edges cannot intersect "much" the remaining agents in $\mathbf{P}_{-i}$ as they are uniformly distributed over their common edges. We are now ready to state the following claim:

 \begin{claim} \label{clm:XOS1sqrtconj}
 Let $\mathbf{P}=\{1,\ldots,k+1\}$ be a set of $k+1$ agents and $\mathbf{I}_i(d^{k+1}_{\ell+1})=\{1,\ldots,d^{k+1}_{\ell+1}\}$ with $i \in \mathbf{P}$ be $k+1$ subsets of indices. Either (i) there exists a frugal orientation in which each agent gets at least $1-1/(\ell+1)$ value, or (ii) ${\bf P}$ is overconstrained.
 \end{claim}

 \begin{proof}[Proof of Claim~\ref{clm:XOS1sqrtconj}]
   If (i) holds we are done i.e., by rescaling the values the hypothesis holds for the given orientation. Otherwise, using the inductive hypothesis, for each agent $i$ we identify a set of edges $S_i \subseteq E_{i}^{\mathbf{P}}$ for which ${\bf P}$ is {\em overconstrained}.

To prove this, we focus on agent $k+1$ and we will carefully define a partition of her bundles over the set of her relevant edges in two parts. We show that the first one will serve as the desired union of the high value intersections and the second one will be the union of the remaining edges. That is, $$B_{k+1,t} \cap E_{k+1}^{\mathbf{P}}=S_{k+1,t} \cup S_{k+1,t}',S_{k+1,t} \cap S_{k+1,t}'=\emptyset$$ for all $t \in \mathbf{I}_{k+1}(d^{k+1}_{\ell+1})$ and we show the partition has the following property: if $v_{k+1}\left(S_{k+1,t}\right) \leq 1/(\ell+1)$ for some $t$ then we can extend an existing allocation—guaranteed to exist by the inductive hypothesis—for the set of $k$ agents and $d^{k}_{\ell+1}$ indices to the set of $k+1$ agents and $d^{k+1}_{\ell+1}$ indices. If this is not the case, we show that the two properties of the overconstrained set hold for $i=k+1$ and the set $S_{k+1}=\bigcup_{t \in \mathbf{I}_{k+1}(d^{k}_{\ell+1})} S_{k+1,t}$. We can proceed in a similar fashion by defining partitions for any bundle $B_{i,t}$ with $i, t\leq k+1$ and show that ${\bf P}$ is overconstrained.

By the inductive hypothesis, let $X^{(k+1)}$ be the promised frugal orientation for agents in $\mathbf{P}_{-k+1}$ with respect to $B_1(\mathbf{I}_1(d^{k}_{\ell+1})), \dots, B_{k}(\mathbf{I}_{k}(d^{k}_{\ell+1}))$, where $\mathbf{I}_i(d^{k}_{\ell+1}) = \{1, \dots, d^{k}_{\ell+1}\}\subseteq \mathbf{I}_i(d^{k+1}_{\ell+1})$. Recall that $d^{k}_{\ell+1}$ is an increasing function of $k$ and thus we can find such a subset. By frugality, each agent $i$ receives a subset of their canonical bundles and the orientation is also frugal with respect to the set of indices $\mathbf{I}_i(d^{k+1}_{\ell+1})$. Without loss of generality, assume $X_i^{k+1} \subseteq B_{i,1}$ and we get $v_i\left(X_i^{(k+1)}\right) \geq (1-1/(\ell+1)) \cdot v_i(B_{i,1} \cap E_{i}^{\mathbf{P}_{-k+1}})$ for all $i \in \mathbf{P}_{-k+1}$.

 Now, we argue that if we extend the orientation with bundle $X_{k+1} \subseteq B_{k+1,t}$ for some $t \in \mathbf{I}_{k+1}(d^{k+1}_{\ell+1})$ such that $v_{k+1}(X_{k+1}) \ge (1-1/(\ell+1))$ and moreover $v_i(B_{i,1} \cap X_{k+1} ) \le (1-1/(\ell+1)) \cdot v_i(B_{i,1} \cap E_{i,k+1})$ for all $i \in \mathbf{P}_{-k+1}$
then we are in case (i) of the Claim and the inductive hypothesis holds for the extended orientation $$X=\left(X^{(k+1)}_1 \cup \left(B_{1,1} \cap \left(E_{1,k+1}\setminus X_{k+1}\right) \right),\ldots,X^{(k+1)}_k \cup \left(B_{k,1} \cap \left(E_{k,k+1}\setminus X_{k+1}\right) \right),X_{k+1}\right).$$ Indeed, for the extended orientation the hypothesis holds for agent $k+1$. Furthermore, for each agent $i \in \mathbf{P} _{-k+1}$ there exists a partition of her selected bundle over the set of the common edges, i.e., set $B_{i,1}\cap E_{i}^{\mathbf{P}}$, into two parts such that the orientation guarantees at least $1-1/(\ell+1)$ in each part and hence, by additivity, guarantees $1-1/(\ell+1)$ value i.e., $$v_i(X_i)=v_i(X_i \cap E_{i,k+1} )+v_i(X_i\cap E_{i}^{\mathbf{P}_{-k+1}}) \ge (1-1/(\ell+1)) \cdot v_i(E_{i,k+1} \cap B_{i,1})+v_i(X_i^{(k+1)})= 1-1/(\ell+1).$$ As a result, for such an extended orientation, we can imply that, if there is no $1-1/(\ell+1)$ orientation then, for every index $t \in I_{k+1}(d^{k+1}_{\ell+1})$ agent $k+1$ cannot "leave" the intersections $B_{k+1,t} \cap B_{i,1}$ with some agent $i \in P_{-k+1}$ without losing much value and neither agent $i$ can do so. 

We try to construct such an extended orientation for all $t \in \mathbf{I}_{k+1}(d^{k+1}_{\ell+1})$ in which for the set of the allocated edges $X_{k+1} \subseteq B_{k+1,t}$ none of the agents $i \in \mathbf{P}_{-k+1}$ overestimates it i.e., $$v_i(X_{k+1} \cap B_{i,1}) \le (1/(\ell+1))\cdot v_i(B_{i,1} \cap E_{i,k+1}).$$  
The inability of finding such a bundle will imply the desired partition of the corresponding canonical bundle by setting $S'_{k+1,t}=X_{k+1}$ and $S_{k+1,t}=B_{k+1,t} \setminus X_{k+1}$. 

In order to extend the orientation it is sufficient to prove that agent $k+1$ values her bundle enough. To do so, each agent $i \in \mathbf{P}_{-k+1}$ marks the $\ell$ indices  in $I_{k+1}(d^{k+1}_{\ell+1})$, let them be $(t_1^i,t_2^i,\ldots ,t^i_\ell)$, for which the corresponding bundles of agent $k+1$ have the highest value over her selected bundle $B_{i,1}$. That is,
 $$t_1^i=\arg \max_{t \in \mathbf{I}_{k+1}(d^{k+1}_{\ell+1})}\{v_i(B_{i,1} \cap B_{k+1,t})\}$$
 and

$$t_\theta^i=\arg \max_{t \in \mathbf{I}_{k+1}(d^{k+1}_{\ell+1})\setminus\{t^i_1,\ldots,t^i_{\theta -1}\}}\{v_i(B_{i,1} \cap B_{k+1,t})\}$$
See Figure~\ref{fig:markedXOSconj} for an illustration. 

Let the desired set $S_{k+1}$ be the intersections of the canonical bundle each agent selected in orientation $X^{(k+1)}$ with the union of the marked bundles of the corresponding agent, i.e., $$S_{k+1}= \bigcup_{i \in \mathbf{P}_{-k+1},\theta \in [\ell]}B_{i,1} \cap B_{k+1,t_{\theta}^i}$$ and
 $$S'_{k+1,t}= (B_{k+1,t} \cap E_{k+1}^{\mathbf{P}}) \setminus S_{k+1}$$
 See Figure~\ref{fig:markedXOSconj} for an illustration. Now we argue that indeed, none of the agents overestimates any bundle $S'_{k+1,t}$. Assume otherwise and let $t^*$ and $i$ such that $v_i(S'_{k+1,t^*}\cap B_{i,1}) > (1/\ell ) \cdot v_i(B_{i,1} \cap E_{i,k+1})$. There cannot be the case $t^* \in \{t_1^i,\ldots,t_\ell^i\}$ since, by construction, we omit the corresponding intersections from the bundle i.e., $B_{i,t^*} \cap B_{i,1} \in S_{k+1}$ and thus the corresponding value for the intersection is zero as $S'_{k+1,t^*} \cap B_{i,1}= \emptyset$. Hence, there must be the case  $t^* \notin \{t_1^i,\ldots,t_\ell^i\}$. Due to additivity,  we have 
 \[
 \sum_{\theta = 1}^\ell v_{i}(B_{i,1} \cap B_{k+1,t_\theta^i})+v_{i}(B_{i,1} \cap B_{k+1,t^*}) \ge (\ell + 1) \cdot v_{i}(B_{i,1} \cap B_{k+1,t^*}) > v_i(B_{i,1} \cap E_{i,k+1})
 \] which is a contradiction.

\begin{center}
\begin{figure} 
\begin{center}
\begin{tikzpicture}[scale=0.65]
\small{
  \draw[step=1cm,] (0,-1) grid (8,6);
\draw[pattern={north west lines},pattern color=red](0,5) rectangle +(1,1);
\draw[pattern={north west lines},pattern color=red](0,4) rectangle +(1,1);
\node[anchor=north] at (0.5,6) {$t_1^1$};
\node[anchor=north] at (0.5,5) {$t_2^1$};

\draw[pattern={north west lines},pattern color=red](0,0) rectangle +(1,1);
\node[anchor=north] at (0.5,1) {$t_\ell^1$};

\draw[pattern={north west lines},pattern color=red](2,-1) rectangle +(1,1);
\node[anchor=north] at (2.5,0) {$t_\ell^2$};

\draw[pattern={north west lines},pattern color=red](2,5) rectangle +(1,1);
\draw[pattern={north west lines},pattern color=red](2,2) rectangle +(1,1);
\node[anchor=north] at (2.5,6) {$t_1^2$};
\node[anchor=north] at (2.5,3) {$t_2^2$};

\draw[pattern={north west lines},pattern color=red](4,0) rectangle +(1,1);
\draw[pattern={north west lines},pattern color=red](4,-1) rectangle +(1,1);

\node[anchor=north] at (4.5,1) {$t_{\ell - 1}^3$};
\node[anchor=north] at (4.5,0) {$t_{\ell}^3$};

\draw[pattern={north west lines},pattern color=red](7,4) rectangle +(1,1);
\draw[pattern={north west lines},pattern color=red](7,1) rectangle +(1,1);
\node[anchor=north] at (7.5,5) {$t_1^k$};
\node[anchor=north] at (7.5,2) {$t_\ell^k$};

\node[anchor=north] at (0.5,7) {$B_{1,1}$};
\node[anchor=north] at (1.6,6.75) {$\ldots$};
\node[anchor=north] at (2.5,7) {$B_{2,1}$};
\node[anchor=north] at (3.6,6.75) {$\ldots$};
\node[anchor=north] at (4.5,7) {$B_{3,1}$};
\node[anchor=north] at (5.6,6.75) {$\ldots$};
\node[anchor=north] at (6.6,6.75) {$\ldots$};
\node[anchor=north] at (7.5,7) {$B_{k,1}$};

\node[anchor=west] at (-3.5,-0.5) {$B_{k+1,d^{k+1}_{\ell+1}}$};
\node[anchor=west] at (-3.5,0.5) {$B_{k+1,d^{k+1}_{\ell+1}-1}$};
\node[anchor=west] at (-0.75, 2.25) {$\vdots$};
\node[anchor=west] at (-2,3.5) {$B_{k+1,3}$};
\node[anchor=west] at (-2,4.5) {$B_{k+1,2}$};
\node[anchor=west] at (-2,5.5) {$B_{k+1,1}$};}
\end{tikzpicture}
    
\end{center}
\caption{This figure illustrates the set of edges $E_{k+1}^{\mathbf{P}}$. For agent $1$ and her selected bundle, $B_{1,1} \supset X_1^{(k+1)}$, the corresponding indices are $t_{1}^1=1,t_2^1=2$ and $t_{\ell}^1=d^{k+1}_{\ell+1}-1$. That is, $v_1(B_{k+1,1} \cap B_{1,1})\ge v_1(B_{k+1,2} \cap B_{1,1}) \ge v_1(B_{k+1,2} \cap B_{1,d^{k+1}_{\ell+1}-1})  \ge v_1(B_{k+1,t} \cap B_{1,1}),t \in \mathbf{I}_{k+1}(d^{k+1}_{\ell+1})\setminus \{t^1_{1}, \ldots, t^1_{\ell}\}$. The corresponding indices for the remaining agents are as shown. Note that may happen that some bundles are marked by more than one agent, bundles $B_{k+1,1}$ and $B_{k+1,2}$, while other may not be marked, bundle $B_{k+1,3}$. The set $S_{k+1}$ is the union of those intersections, $S_{k+1}=\bigcup_{i \in \mathbf{P}_{-k+1}, \theta \in [ \ell ]} B_{i,1} \cap B_{k+1,t^i_\theta} $, shown in Figure with red diagonal lines. Observe that by construction, $S_{k+1}$ has the second property of overconstrained set.}
\label{fig:markedXOSconj}
\end{figure}  
\end{center}

 As a result, if for some index $t \in \mathbf{I}_{k+1}(d^{k+1}_{\ell+1})$ we have $v_{k+1}(S_{k+1,t}') \ge (1-1/(\ell+1))$ then we can extend the existing orientation and property (i) of the claim holds. If this is not the case, then due to additivity, $v_i(S_{k+1,t}) > 1/(\ell+1)$. Moreover from the construction we have that for every $i \in \mathbf{P}_{-k+1}$ there exist exactly $\ell$ indices $t_1^i,\ldots,t_\ell^i \in \mathbf{I}_i(d^{k+1}_{\ell+1})$ such that $S_i \cap B_{i,t_\theta^i} \cap E_{i,k+1}=S_{i,t^i_\theta} \cap E_{i,k+1} \ne \emptyset$ for all $\theta \in [\ell]$ and the properties of the overconstrained set follow for agent $k+1$ and set of edges $S_{k+1}$. In a similar way, we can prove that either $(i)$ we can extend an existing orientation or $(ii)$ ${\bf P}$ is overconstrained as all the properties hold for all agents.
\end{proof}

If Conjecture~\ref{thm:ITSconj} is true then we can prove the following claim.
\begin{claim} \label{clm:XOS2sqrtconj}
    Let $\mathbf{P}=\{1,\ldots,k+1\}$ be a set of $k+1$ agents and $\mathbf{I}_i(d^{k+1}_{\ell+1})=\{1,\ldots,d^{k+1}_{\ell+1}\}$ with $i \in \mathbf{P}$ be $k+1$ subsets of indices. If $\mathbf{P}$ is overconstrained then Conjecture~\ref{thm:ITSconj} admits a frugal orientation $X$ such that 
    $$v_{i}(X_{i}) \ge (1-1/(\ell+1))v_{i}(B_{i,k_i} \cap E_{i}^{\mathbf{P}}),X_{i} \subseteq B_{i,k_i} \cap E_{i}^{\mathbf{P}}, \forall i \in \mathbf{P} \text{ and some }k_i \in \mathbf{I}_{k+1}(d^{k+1}_{\ell+1})$$
\end{claim}

 Claim~\ref{clm:XOS2sqrtconj} combined with Claim~\ref{clm:XOS1sqrtconj} proves the inductive hypothesis 2hich completes the proof.

\begin{proof}
    We show that the overconstrained set $\mathbf{P}$ can be represent via a multipartite graph with the desired properties such that there exists a $ITS$ of size $d^{k+1}_{\ell}$. This implies that each agent $i \in \mathbf{P}$ can get its corresponding high value intersections $S_{i,k_i}$ from some $k_i$ and also, by the inductive hypothesis, at least $1-1/\ell$ of the remaining value in the corresponding canonical bundle. This guarantees at least $1-1/(\ell+1)$ value for each agent i.e.

    $$v_{i}(S_{i,k_i}) + \frac{1}{\ell}v_i(B_{i,k_i}\setminus S_{i,k_i}) \ge  \frac{1}{\ell+1}+\frac{1}{\ell}\left(1-\frac{1}{\ell+1}\right)=1-1/(\ell+1)$$

    We define the $k+1$-partite graph $G_{\mathbf{P}}$, corresponds to the overconstrained set, with the partition $P=(V_1, V_2\ldots,V_{k+1})$ as follows:
    \begin{enumerate}
        \item For each part we have $V_i=\left(v_{i,1},v_{i,2},\ldots,v_{i,d^{k+1}_{\ell+1}}\right)$ where each vertex $v_{i,t}$ corresponds to the canonical bundle $B_{i,t} \in B_{i}(\mathbf{I}_i(d^{k+1}_{\ell+1}))$. 
        \item For every pair of vertices $(v_{i,t},v_{i',t'})$ with $i\ne i'$ there exist an edge if and only if either (i) $S_{i,t} \cap B_{i',t'} \ne \emptyset$ or (ii) $S_{i',t'} \cap B_{i,t} \ne \emptyset$.
    \end{enumerate}

    That is, an edge $(v_{i,t},v_{i',t'})$ exists if and only if the corresponding set $S_{i,t}$ (or set $S_{i',t'}$) intersects the canonical bundle $B_{i',t'}$ (respectively bundle $B_{i,t}$). Observe that an $s$-independent transversal set $(v_{1,t_1^1},v_{1,t_1^2},\ldots,v_{1,t_1^s},v_{2,t_2^1},\ldots,v_{k+1,t_{k+1}}^s)$ in $G_{\mathbf{P}}$ yields an orientation of the corresponding high value sets $S_{i,t_i^1},S_{i,t_i^2},\ldots,S_{i,t_i^s}$ for each agent $i$ and set of $s$ indices $t_{i}^1,\ldots,t_i^s \in \mathbf{I}_i(d^{k+1}_{\ell+1})$ i.e., the orientation is valid as none of the items is allocated multiple times. 

     Note that the graph $G_{\mathbf{P}}$ is $d_{\ell+1}^{k+1}$ thick and has maximum average degree $\bar{D}_P(G) \le \frac{2k\ell}{d_{\ell+1}^{k+1}}$. That is, from the definition of the overconstrained set, for every pair of agents $(i,i')$ there exist $\ell$ bundles with indices in $\mathbf{I}_i(d^{k+1}_{\ell})$ (and also $\ell$ bundles with indices in $\mathbf{I}_{i'}(d^{k+1}_{\ell + 1})$) intersecting the corresponding $\ell$ bundles of agent $i'$ (respectively, the bundles of agent $i$) and thus at most \footnote{If is the case where those edges have an overlapped then there are $2 \ell - 1$ edges.} $2 \ell$ edges between parts $V_i,V_{i'}$. Each agent forms $k$ pairs in total, and as a result, there are at most $2k\ell$ of those intersections that correspond to some edge with one endpoint in $V_i$. 
    
    We claim that for $s=d_{\ell}^{k+1}$ the Conjecture~\ref{thm:ITSconj} follows as $4sD \le 4 \cdot  d_{\ell}^{k+1} \cdot \frac{2 k\ell}{d^{k+1}_{\ell+1}} \le d_{k+1}^{\ell+1}$. Indeed, by multiplying both sides of the inequality with $d_{\ell+1}^{k+1}$ we get
    \[
    \left(d^{k+1}_{\ell+1}\right)^2=\left(c_{\ell+1} \cdot (k+1)^{r_{\ell+1}}\right)^2=\left(\sqrt{4\cdot c_\ell \cdot 2 \cdot \ell} \cdot (k+1)^{(r_{\ell}+1)/2}\right)^2= 4\cdot c_\ell \cdot (k+1)^{r_\ell} \cdot 2 \ell (k+1) \ge 4 \cdot d^{k+1}_\ell \cdot 2\ell(k+1)
    \]
    where the last inequality holds by the inductive hypothesis i.e., $c_\ell \cdot (k+1)^{r_\ell}=d^{k+1}_\ell$. Thus there exist an ITS of size $d_{\ell}^{k+1}$ implying that each agent can get the high value intersections from at least $d_{\ell}^{k+1}$ canonical bundles and no multi allocation will occur.
    
    Without loss of generality let $(v_{1,1},v_{1,2},\ldots,v_{1,d^{k+1}_\ell},v_{2,1},v_{2,2}, \ldots,v_{2,d^{k+1}_\ell},\ldots v_{k+1,d^{k+1}_\ell-1},v_{k+1,d^{k+1}_\ell})$ be the promised ITS of size $d^{k+1}_\ell$. The analysis holds for every ITS of size $d_{\ell}^{k+1}$ by renaming. For the corresponding bundles restricted to the set of the common items $E^{\mathbf{P}}$ without the items corresponding to edges in $G_{\mathbf{P}}$, i.e., the set of items $A = E^{\mathbf{P}}\setminus \bigcup_{i \in \mathbf{P}}\bigcup_{k_i=1}^{d^{k+1}_\ell}S_{i,k_i}$, we scale the value to one. That is, $v_i\left(B_{i,t} \cap A\right)=1, i \in \mathbf{P},t\in\{1,2,\ldots,d^{k+1}_\ell\}$ and apply the inductive hypothesis for $k+1$ agents and $1-1/\ell$ approximation which admits a $(1-1/\ell)$-out-of-$d_{\ell}^{k+1}$ frugal orientation and w.l.o.g. let this orientation be $X=\left(X_1,X_2,\ldots,X_{k+1}\right),X_i \subseteq B_{i,1} \cap A$. By rescaling, each agent $i$ achieves at least half the restricted value of her first bundle, i.e., $v_i(X_i) \ge \frac{1}{2}B_{i,1} \cap \left(E_{i}^{\mathbf{P}}\setminus S_{i}\right)$. We extend the orientation by oriented the corresponding edges $S_{i,1}$ towards agent $i$ which can be done as the bundles form an ITS in $G_{\mathbf{P}}$. Thus, due to additivity, each agent guarantees value at least
$$v_i(X_i \cup S_{i,1}) = v_i(S_{i,1}) + (1-\frac{1}{\ell})v_i(B_{i,t_i} \cap A)= v_i(S_{i,1}) +(1-\frac{1}{\ell})(1-v_i(S_{i,1}))=1-\frac{1}{\ell}+\frac{1}{\ell}v_i(S_{i,1}) \ge 1-1/(\ell+1)$$ As a result, the claim holds, which establishes the inductive hypothesis.
\end{proof}
\end{proof}

Thus, given $n$ we can solve $c_\ell n^{r_\ell} \le n$ which implies the following lower bound for MMS:

\begin{corollary}\label{cor:conj}
    If Conjecture~\ref{thm:ITSconj} is true then there exists a $(1-\frac{1}{\Omega ( \log \log n)})$-MMS approximation.
\end{corollary}

\subsection{Upper Bound}\label{sec:upper-bound-xos}
In this section, we present impossibility results for $n$ agents and $d$ bundles. For the special case of $d=n$, the following upper bound complements the positive results of Theorem~\ref{thm:XOS4+} and provides a tight $2/3$ MMS guarantee for the case of $3$ and $4$ agents. Moreover, for $n \in \{2,3\}$ the upper bounds match the lower bounds established in Theorem~\ref{thm:twothreeXOS} while for $d=2$, they match the bounds from Theorem~\ref{thm:XOS1outofd}.

\begin{theorem}\label{thm:XOSupper}
        A ($1-1/\Delta_n(d)+\varepsilon$)-out-of-$d$ MMS allocation in multi-graphs with $n$ XOS agents is not guaranteed to exist for any $\varepsilon > 0$ where

    \[
    \Delta_n(d) = \begin{cases}
        \lceil \frac{n-1}{2(n-2)}d \rceil & \text{if $n$ is odd} \\
         \lceil \frac{n}{2(n-1)}d \rceil & \text{if $n$ is even}
    \end{cases}
    \]
    In particular, a $\left(1-\frac{1}{\left\lceil n/2 \right\rceil+1}+ \varepsilon\right)$-MMS allocation for $n$ XOS agents, is not guaranteed to exist for any $\varepsilon>0$.
\end{theorem}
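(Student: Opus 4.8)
The plan is to prove the impossibility by building an explicit hard instance and pushing it through the frugality reduction. By Lemma~\ref{lem:XOSadditive} it suffices to exhibit an \emph{additive} graphical instance on $n$ agents in which no \emph{frugal} $\left(1 - 1/\Delta_n(d) + \epsilon\right)$-out-of-$d$ MMS allocation exists; the lemma then promotes this to an XOS instance admitting no such allocation at all. I would therefore work entirely with additive values and frugal allocations, writing $\Delta := \Delta_n(d)$.

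For the construction I would place edges on the complete multigraph $K_n$ so that each agent $i$ has a \emph{unique} $1$-out-of-$d$ MMS partition $B_{i,1},\dots,B_{i,d}$ with $v_i(B_{i,t}) = 1$ for all $t$, exactly as in the $n=2$ base case of Lemma~\ref{lem:2XOSdupper}, where the grid of $d^2$ unit-value edges forces the rows and columns to be the only MMS bundles. The crucial design choice is to split each bundle $B_{i,t}$ into exactly $\Delta$ \emph{cells} of value $1/\Delta$, where each cell is a set of edges shared with a single bundle of exactly one other agent. Uniqueness of the MMS partition is then argued as in Lemma~\ref{lem:2XOSdupper}: any value-$1$ bundle must be a union of whole cells summing to $1$, and the cell layout is chosen so that the intended grouping is the only feasible one. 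This reduces every frugal allocation to a \emph{choice of one bundle index $k_i \in \{1,\dots,d\}$ per agent}, where $v_i(X_i)$ equals $1$ minus the total value of the cells of $B_{i,k_i}$ that agent $i$ must cede to a competing agent who selected the conflicting bundle.

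The combinatorial heart is to choose the cell-sharing pattern so that the induced $n$-partite \emph{conflict graph} $H$---parts $V_1,\dots,V_n$ of size $d$, one vertex per bundle, an edge whenever two bundles share a cell---admits \emph{no independent transversal}. Since each cell has value exactly $1/\Delta$, any index selection avoiding all conflicts is precisely an independent transversal of $H$ and would yield a perfect frugal allocation; conversely, the absence of such a transversal forces some selected bundle to share a cell with another selected bundle, whence one of the two agents cedes that cell and ends with value at most $1 - 1/\Delta$. I would thus reduce the theorem to constructing a \emph{realizable} $H$ that is an \emph{extremal} no-independent-transversal graph---one sitting exactly at the tightness boundary of Lemma~\ref{lem:ITSXOS}, with parts of size $d$ and per-vertex degree tuned so that each bundle carries exactly $\Delta_n(d)$ conflict cells. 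The parity split in the formula reflects two different extremal gadgets: for even $n$ the agents pair up, giving the ratio $\tfrac{n}{2(n-1)}$, while for odd $n$ one agent is treated separately, giving $\tfrac{n-1}{2(n-2)}$. I would present these as two explicit families generalizing the $K_{2,2}$ structure of Figure~\ref{fig:4XOS}, which is exactly the $n=4$ instance, and finally specialize $d=n$, simplifying the ceilings to recover $\Delta_n(n) = \lceil n/2\rceil + 1$ and hence the tight $2/3$ bound at $n=4$.

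The main obstacle is satisfying three requirements simultaneously inside a single instance: (i) the edge values must force the MMS partitions to be exactly the intended unit bundles, so that the frugality reduction has teeth; (ii) the conflict graph $H$ must be one of the extremal no-independent-transversal configurations; and (iii) the per-bundle cell count must land exactly on $\Delta_n(d)$, with the correct ceiling and the correct even/odd distinction. Verifying that the extremal gadget is genuinely \emph{graphically realizable}---each conflict being a true pairwise edge-set of the right value, with no unintended third-agent overlaps---and that no \emph{non-frugal} reshuffling inside the XOS lift can beat $1 - 1/\Delta$ is where the real work lies; the degree-versus-part-size accounting that produces the two parity formulas is the delicate numeric step.
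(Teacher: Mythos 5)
Your overall architecture matches the paper's: force each agent to have a unique $1$-out-of-$d$ MMS partition into unit-value bundles, make every pair of bundles of two distinct agents overlap in a block of value $1/b$, encode the overlaps as an $n$-partite conflict graph with parts of size $d$, and observe that beating $1-1/b$ requires every agent to take a whole bundle, i.e.\ an independent transversal. The reduction to additive valuations and frugal allocations via Lemma~\ref{lem:XOSadditive} is a legitimate (slightly different) packaging; the paper instead writes the XOS valuations directly as a max over $d$ additive functions, one per bundle, which amounts to the same thing.

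The genuine gap is the combinatorial core that you defer: the existence of an $n$-partite graph with parts of size $d$, maximum degree at most $\Delta_n(d)$, and \emph{no} independent transversal. You propose to build these as ``two explicit families generalizing the $K_{2,2}$ structure of Figure~\ref{fig:4XOS},'' with even $n$ handled by pairing agents and odd $n$ by setting one agent aside. That heuristic does not produce the stated thresholds: the $K_{2,2}$-pairing gadget is exactly the extremal example for $n=4$, but its naive generalizations do not achieve maximum degree as low as $\frac{n}{2(n-1)}d$ (even $n$) or $\frac{n-1}{2(n-2)}d$ (odd $n$) while still blocking every transversal; these are the sharp thresholds of Szab\'o--T\'ardos (Construction 3.3 of \cite{SzaboTardos2003}, for an even number of parts) and of Haxell--Szab\'o (\cite{PennyHaxellSzabo2006oddtransversals}, for an odd number of parts), and the constructions are recursive and considerably more intricate than paired $K_{2,2}$'s. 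The paper simply imports these constructions and then realizes them graphically (each conflict-graph edge between parts $i$ and $j$ becomes an item on the multigraph edge $(i,j)$, with self-loops padding low-degree vertices so every bundle has exactly $b$ items). Without citing or reproving those extremal results, your argument establishes the qualitative statement ``some upper bound strictly below $1$'' but not the quantitative factor $1-1/\Delta_n(d)$, which is the content of the theorem. The remaining concerns you raise (graphical realizability, uniqueness of the MMS partitions, ruling out non-frugal reshuffling) are all handled routinely once the extremal graph is in hand.
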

\begin{proof}
We construct a counterexample for $n$ agents. The valuations are defined such that a single edge has one of two possible values, i.e., $v_i(e) \in \{0, 1/b\}$ for $b=\Delta_n(d)$. The construction ensures that, if an agent $i$ receives a bundle $X_i$ satisfying $v_i(X_i) \geq 1-1/\Delta_n(d) + \varepsilon$ for some $\varepsilon > 0$, then she receives a whole canonical bundle i.e. $X_i = B_{i,t_i^*}$ for some $t_i^* \le d$. However, the structure of the edges prevents all agents from simultaneously achieving this condition, proving that no allocation can guarantee an approximation factor strictly greater than $1-1/\Delta_n(d)$.

We show this argument using a combinatorial result on independent transversal. In \cite{SzaboTardos2003}, Construction 3.3 is shown that for given integers $k,d$ and $b$, with $b \geq \left\lceil \frac{kd}{2k-1}\right\rceil=\Delta_n(d) $, there exists a construction of a $2k$-partite graph $G_{2k}$ with partition $P = \{V_1, \dots, V_{2k}\}$, where $|V_i| = d$, and maximum degree at most $b$ such that no independent transversal exists. Later, in \cite{PennyHaxellSzabo2006oddtransversals} it is proved that the same maximum degree also holds for $n=2k-1$ parts i.e. if $b \geq \left\lceil\frac{(2k-1)d}{2(2k-2)}\right\rceil=\Delta_n(d)$, there exists a construction of a $(2k-1)$-partite graph $G_{2k-1}$ with partition $P = \{V_1, \dots, V_{2k-1}\}$ and $|V_i| =d$, and maximum degree at most $b$ such that no independent transversal exists. 

In the rest of the proof we show that the absence of an independent transversal in the $2k$-partite graph implies an upper bound of $(1 - 1/b)$-out-of-$d$ MMS for an even number of agents $n=2k$ and the same analysis also holds for an odd number of agents $n=2k-1$. Thus, we can set $b = \Delta_n(d)$ ensuring the desired upper bound.

Let $G=(V,E)$ be a graph with $2k$ agents and a set of edges $E$. Consider the constructed $2k$-partite graph $G_{2k}$ of \cite{SzaboTardos2003}, Construction 3.3 of maximum degree at most $b$, together with a vertex set partition into $2k$ disjoint subsets $V_1,\ldots,V_{2k}$ of size $|V_i|=n=2k,i\in \{1,\ldots,2k\}$ with $V_{i}=\{v_{i,1},\ldots,v_{i,2k}\}$. We associate every edge $(v_{i,t_i}, v_{j,t_j})$ in $G_{2k}$ with an edge $e(i,t_i,j,t_j) = (i,j)$ in $G$. For each vertex $v_{i,t_i}$ in graph $G_{2k}$ with degree $d(v_{i,t_i}) < b$ we additionally add $b-d(v_{i,t_i})$ self-loops $e(i,t_i,i,t_i)=(i,i)$. We denote the additive functions
\[
a_{i^*,t_i^*}(e(i,t_i,j,t_j))=\begin{cases}
    1/b, & \text{if } i=i^* \text{ and } t_i=t^*_i  \\
    0, & \text{otherwise}\\
\end{cases}
\]
that is, $a_{i^*t^*}(e)=1/b$ if and only if edge $e$ corresponds to an edge connecting vertex $v_{i^*,t_i^*}$ in $G_{2k}$ (or to a corresponding self loop). We denote the valuation $v_i(S) = \max_{t\le d}\{a_{i,t}(S)\}, S \subseteq E$ of agent $i$ which is XOS i.e. maximum over additive. For each agent $i$, there are exactly $n$ disjoint bundles of edges $B_{i,t_i},t_i \in \{1, \dots, d\}$, with value $1$, namely $B_{i,t_i} = \Bigl\{e(i,t_i,j,t_j) : j>i, t_j \in \{1, \dots, d\}\Bigr\} \cup  \Bigl\{e(j,t_j,i,t_i) : j < i, t_j \in \{1, \dots, d\}\Bigr\}$.
By construction, each set $B_{i,t_i}$ has exactly $b$ edges. 

The partition $B_i = (B_{i,1}, \dots, B_{i,d})$ is the unique $1$-out-of-$d$ MMS partition for agent $i$, with $\mu_i^d(E) = 1$. To see this, suppose there exists another MMS partition $X = (X_1, \dots, X_d)$ such that $v_i(X_{t_X}) = 1$ for every $t_X$, but $X$ is not a reordering of $B_i$. Then, for every $t_X \in \{1, \dots, d\}$, there exists some $B_{i,t}$ such that $X_t \supseteq B_{i,t}$ with the containment being strict for at least some $t^*$, but this is impossible.

Now, assume there exists a $X = (X_1, \dots, X_d)$ $\left(1-1/b + \varepsilon\right)$-MMS allocation. By construction, for each agent $i$, there must exist a bundle $B_{i,t_i^*} \subseteq X_i$ i.e. agent $i$ receives all edges $e \in B_{i,t_i^*}$. Let $(B_{1,t_1^*}, \dots, B_{n,t_n^*})$ be the set of MMS bundles allocated wholly to agents in $X$. Since such an allocation exists, no edge can be assigned to more than one agent which admits $B_{i,t_i^*} \cap B_{j,t_j^*} = \emptyset$ for all allocated pairs $(B_{i,t_i^*}, B_{j,t_j^*})$. By construction, each MMS bundle corresponds to a vertex in $G_{2k}$, and each shared edge among these bundles corresponds to an edge in $G_{2k}$. Consequently, the existence of such an allocation would imply the existence of an independent transversal in $G_{2k}$ —a contradiction to the result of \cite{SzaboTardos2003}.
\end{proof}

\subsection{PMMS for XOS Valuations}\label{sec:PMMS-XOS}
In this section, we consider PMMS allocations for XOS agents. For the positive result, which also holds for the more general class of subadditive valuations, we refer the reader to Lemma~\ref{lem:subadditivePMMS}. Here, we provide a tight impossibility upper bound of $1/2$ for XOS valuations, which therefore also holds for subadditive valuations.

\begin{lemma}\label{lem:PMMSXOS}
  There exists a multi-graph with XOS agents in which there exists a PMMS allocation but not $(1/2+\varepsilon)$-PMMS orientation for any $\varepsilon > 0$.
\end{lemma}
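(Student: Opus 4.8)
The plan is to construct a single graphical XOS instance that simultaneously witnesses two facts: a PMMS allocation exists, yet every \emph{orientation} fails to be $(1/2+\epsilon)$-PMMS. Since we want a separation between orientations and general allocations, the natural device is to build a graph on a small number of agents (two suffice for the core gadget, with extra isolated agents if needed) where some high-value edge is shared between two agents, but the only way to satisfy both agents' PMMS thresholds is to \emph{split} a bundle of items across the two endpoints in a way that an orientation cannot achieve—because an orientation is forced to give each edge to one of its two incident vertices, whereas a non-orientation allocation may route value more flexibly via the XOS structure.

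Concretely, I would place two agents $1,2$ joined by parallel edges, and exploit the freedom of XOS valuations (maximum over additive clauses) to make each agent's value for a set depend on \emph{which} clause is activated. The key is to design the XOS clauses so that, in an orientation, whenever agent $1$ collects enough edges to meet her own PMMS target, the complementary edges left for agent $2$ activate only a low-value clause for agent $2$, capping $v_2(X_2)$ at essentially half of $\operatorname{PMMS}_2$; and symmetrically. The $\operatorname{PMMS}_i^2(X_i\cup X_j)$ threshold is computed by the best $2$-partition of the pooled edges, so I would size the edge weights so that this threshold is large (forcing a demanding target) while every orientation-feasible split of those same edges yields at most half that target for one of the two agents. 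To exhibit the \emph{existence} of an exact PMMS allocation, I would then give an explicit non-oriented allocation: here an edge $e=(1,2)$ can be handed to a third party or, more usefully, the allocation can assign items so that both pooled bundles activate the agents' high-value clauses simultaneously—something an orientation cannot do because it cannot duplicate or freely reassign the shared edges. The XOS (non-additive) nature is essential: it lets the same physical bundle be valued differently depending on the partition, creating the wedge between the achievable orientation value and the PMMS threshold.

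The main steps, in order, are: (1) fix the agent set and the multi-edge structure between agents $1$ and $2$; (2) specify the additive clauses defining each $v_i$ so that the $1$-out-of-$2$ MMS value of any pooled set is controlled, and compute $\operatorname{PMMS}_1,\operatorname{PMMS}_2$ explicitly for the relevant allocations; (3) prove the \emph{orientation impossibility}: enumerate (up to symmetry) how an orientation can split the shared edges, and show that in each case some agent $i$ has $v_i(X_i)\le \tfrac12\operatorname{PMMS}_i$, so no orientation beats the $1/2$ factor; (4) prove \emph{existence of a PMMS allocation} by displaying one allocation $X$ and verifying $v_i(X_i)\ge\operatorname{PMMS}_i(X)$ for every $i$, which requires checking the threshold on the pooled sets $X_i\cup X_j$ only for pairs that actually share items. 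The hard part will be step (2)–(3): tuning the XOS clauses and edge weights so that the orientation-feasible value is pinned to \emph{exactly} half the PMMS threshold (tightness, not just $<1$), while ensuring the PMMS threshold itself is realized by a $2$-partition that an orientation is structurally barred from matching. The tightness of the $1/2$ bound is what makes this delicate: I would likely normalize so that one agent's PMMS target equals the full value of a ``large'' shared edge, while any orientation can route at most one of two equal half-weight bundles to that agent, giving the clean $1/2$ ratio in the $\epsilon\to 0$ limit.
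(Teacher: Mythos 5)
Your outline points in the right direction -- parallel edges between two agents, XOS valuations given as a max over additive clauses, and (crucially) the observation that a non-orientation allocation can hand a shared edge to a third, indifferent agent, which shrinks the pooled set $X_i\cup X_j$ and hence lowers the PMMS threshold. This is exactly the mechanism the paper exploits. But the proposal never actually produces the instance, and the step you defer as ``the hard part'' is the entire content of the lemma. The missing idea is a concrete combinatorial structure: the paper uses four parallel edges $e_{11},e_{12},e_{21},e_{22}$ between agents $1$ and $2$, with $v_1$ the max of two additive clauses that each assign $1/2$ to one \emph{row} $\{e_{k1},e_{k2}\}$, and $v_2$ symmetric over \emph{columns} $\{e_{1k},e_{2k}\}$. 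Then in any orientation both PMMS thresholds equal $1$, exceeding $1/2$ forces agent $1$ to hold a full row and agent $2$ a full column, and every row meets every column -- an immediate contradiction. The explicit PMMS allocation is $X=(\{e_{11},e_{12}\},\{e_{21}\},\{e_{22}\})$: agent $3$ absorbs $e_{22}$, so the pool $X_1\cup X_2$ contains only one full column and $\operatorname{PMMS}_2$ drops to $1/2$, which agent $2$ attains. Without some such intersecting-bundles design, steps (2)--(4) of your plan cannot be carried out, so as written the argument is not yet a proof.

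Two of your auxiliary ideas would also lead you astray. First, the suggestion that a non-orientation allocation can ``assign items so that both pooled bundles activate the agents' high-value clauses simultaneously'' is not a real capability: allocations cannot duplicate edges any more than orientations can; the \emph{only} extra power is assigning an edge to a non-incident agent, and its effect is entirely through lowering the pairwise threshold $\mu_i^2(X_i\cup X_j)$, not through raising the achievable value. Second, the closing normalization with one ``large'' edge against two half-weight bundles is the wrong shape here (it resembles the paper's additive Example on round-robin); recall that for additive valuations an exact PMMS \emph{orientation} always exists in multi-graphs (Theorem~\ref{thm:multiaddcutandchoose}), so any witness for this lemma must make essential use of the row-versus-column asymmetry that only XOS clauses provide, with all edges of equal clause-value and tightness coming from needing two specific edges to beat $1/2$.
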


\begin{proof}
We present the proof for the upper bound for the case of three agents and we can extend the counter example by adding disconnected vertices (agents). Clearly this will not change the result as in every orientation all the disconnected vertices receive the empty bundle.

Let the graph $G=(V,E)$ where $V=\{1,2,3\}$ is the set of three XOS agents and, $E=\{e_{1,1}, e_{1,2}, e_{2,1}, e_{2,2}\}$ is the set of four edges with endpoints $(1,2)$. We define the additive functions
\[
a_{t}(e_{i,j}) = \begin{cases}
1/2, & \text{if } i=t \\
0, & \text{otherwise}
\end{cases} \]
and
\[
b_{t}(e_{i,j}) = \begin{cases}
1/2, & \text{if } j=t \\
0, & \text{otherwise}
\end{cases}
\]
For $S \subseteq E$, the valuations are $v_1(S) = \max_{t \le 2}\{a_{t}(S)\}$, $v_2(S) = \max_{t \le 2}\{b_{t}(S)\}$, and $v_3(S) = 0$, which are XOS (maximum over additive). Assume that there exists a $(1/2 + \varepsilon)$-PMMS orientation $X' = (X'_1, X'_2, X'_3)$; that is, all edges are distributed among agents 1 and 2. For these agents, $PMMS_1 = \mu_1^2(E) = 1$ and $PMMS_2 = \mu_2^2(E) = 1$, so $\{e_{i,1}, e_{i,2}\} \subseteq X'_1$ for $i \in \{1,2\}$ and $\{e_{1,j}, e_{2,j}\} \subseteq X'_2$ for $j \in \{1,2\}$; otherwise, at least one agent will achieve no more than $1/2$-PMMS. In this case, for every possible pair $(i,j)$, edge $e_{i,j}$ is allocated to both agent 1 and agent 2, contradicting the assumption.

The allocation $X = (\{e_{1,1}, e_{1,2}\}, \{e_{2,1}\}, \{e_{2,2}\})$ is PMMS; that is, $v_i(X_i) = PMMS_i$ for all $i \in V$. Therefore, the lemma follows.
\end{proof}

\section{Subadditive valuations}
\label{sec:subadditive}
In this section, we show that for multi-graphs with subadditive valuations, there always exists a $1/2$-MMS allocation, and this guarantee is tight (Theorem~\ref{Thm:subadditivemulti}). In the general (non-graphical) setting, the best previously known lower bound was $1/\mathcal{O}(\log \log n )$, as shown by \cite{Feige25}, while an upper bound of $1/2$ was established in \cite{GhodsiHSSY22}. Furthermore, we show a tight approximation of $1/2$-PMMS orientations (Theorem~\ref{thm:subadditivePMMS}).

\subsection{Approximate MMS for Subadditive Valuations}
In this section, we present the two lemmas (lower and upper bound) that will establish our results concerning the tight bounds for approximate MMS for subadditive agents, culminating in Theorem~\ref{Thm:subadditivemulti}.

\begin{lemma}
\label{lem:subadditivemulti}
In every multi-graph with $n$ subadditive agents, there exists a $1/2$-MMS orientation.
\end{lemma}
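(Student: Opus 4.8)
The plan is to build the orientation one agent at a time, analogous to the cut-and-choose procedure of Theorem~\ref{thm:multiaddcutandchoose}, but exploiting subadditivity in place of additivity. The key structural feature of the graph model is that for any agent $i$, her valuation depends only on the edges in $E_i$, and these edges are partitioned by the other endpoints into the sets $E_{i,j}$. I would process agents in increasing order of index, and for each pair $i<j$ let the later agent $j$ partition the common edge set $E_{i,j}$ into two parts so as to guarantee herself $\mu_j^2(E_{i,j})$, with $i$ taking her preferred part $X_{ij}$ and $j$ retaining $X_{ji}$; self-loops $E_{i,i}$ go to $i$ directly. This guarantees that the output is an orientation, since every edge is assigned to an incident vertex.

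The heart of the argument is to show $v_i(X_i) \geq \tfrac{1}{2}\mu_i$ for every agent $i$. The natural approach is to relate $X_i = \bigcup_j X_{ij}$ to agent $i$'s canonical MMS partition $B_i = (B_{i,1},\ldots,B_{i,n})$ of $E_i$. The main obstacle, and the place where subadditivity is weaker than additivity, is that I can no longer simply sum the values $v_i(X_{ij})$ over $j$ and compare to $v_i(E_i)$; subadditivity only gives me $v_i(S \cup T) \leq v_i(S) + v_i(T)$, the ``wrong'' direction for lower-bounding a union. So the additive averaging argument of Claim~\ref{claim:multiadd} does not transfer. I expect the right move is to argue about a single well-chosen bundle rather than an aggregate: I would like to show that $X_i$ contains, or dominates in value, at least one of the canonical bundles $B_{i,t}$ up to a factor of $1/2$.

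Concretely, I would try the following. For each other agent $j$, since $j$ cuts $E_{i,j}$ to guarantee $\mu_j^2(E_{i,j})$ and $i$ then chooses her favorite of the two parts, subadditivity applied to $E_{i,j} = X_{ij} \cup X_{ji}$ yields $v_i(X_{ij}) \geq \tfrac{1}{2} v_i(E_{i,j})$ whenever $i$ is the chooser (the case $i<j$); the harder case is $i>j$, where $j$ chooses. Here I would argue that among the parts agent $i$ retains across all the pairs, she can always locate a single canonical bundle $B_{i,t}$ whose edges she has either kept entirely or whose value she matches within a factor $1/2$. The cleanest realization is probably to consider the canonical bundle of least value and charge the loss to at most one ``sacrificed'' part per neighbor, then invoke monotonicity together with $v_i(E_i) \geq \mu_i$ on the surviving bundle. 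I anticipate that the tightest step is controlling the case where several neighbors each force agent $i$ to surrender a singleton edge; the same removal-and-reshuffle idea from Theorem~\ref{thm:multiaddcutandchoose}—deleting the agents who took singletons and observing that $\mu_i$ only increases on the reduced instance—should close the gap, but with the $1/3$ bound relaxed to $1/2$ because subadditivity prevents the stronger additive accounting.
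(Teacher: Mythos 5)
There is a genuine gap, and it sits exactly where you flagged it yourself: you never actually prove that agent $i$ ends up with half the value of some canonical bundle, and the sentences beginning ``I would argue that among the parts agent $i$ retains\ldots'' and ``should close the gap'' are hopes rather than arguments. The pairwise cut-and-choose gives you, for each neighbor $j$, a set $X_{ij}\subseteq E_{i,j}$ with $v_i(X_{ij})\ge \tfrac12 v_i(E_{i,j})$ (when $i$ chooses) or $v_i(X_{ij})\ge \mu_i^2(E_{i,j})$ (when $i$ cuts), but a single canonical bundle $B_{i,t}$ is in general scattered across many different sets $E_{i,j}$, and under subadditivity the only generic lower bound on $v_i\bigl(\bigcup_j X_{ij}\bigr)$ is $\max_j v_i(X_{ij})$, which can be far below $\tfrac12 v_i(B_{i,t})$ when each $E_{i,j}$ carries only a $1/(n-1)$ fraction of the bundle's value. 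The removal-and-reshuffle trick from Theorem~\ref{thm:multiaddcutandchoose} does not rescue this: that argument reduces to the inequality $v_i(X_i)\ge\sum_{j\notin K_i}v_i(X_{ij})\ge\tfrac13 v_i(E')\ge\tfrac13 n'\mu_i'$, whose first and last steps both require additivity. So the proposed route is not merely incomplete; it is built on an aggregation step that subadditivity forbids, and there is no indication of a replacement.

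The paper proceeds entirely differently. It proves by induction on $k$ that for any $k$ agents and any choice of $k$ canonical indices per agent there is a \emph{frugal} $1/2$-MMS orientation (Definition~\ref{def:frugal}: each agent receives a subset of a single one of her designated canonical bundles). For the step from $k$ to $k+1$ it invokes the hypothesis twice, on the index sets $\{1,\ldots,k\}$ and $\{2,\ldots,k+1\}$, obtaining two frugal orientations $X$ and $X'$; frugality forces any edge in $X\cap X'$ to be oriented to different endpoints in the two orientations, hence $X\cap X'\cap B_{k+1,t}=\emptyset$. A single application of subadditivity to the split $B_{k+1,t}=(B_{k+1,t}\cap X)\cup(B_{k+1,t}\setminus X)$ then shows one of the two parts is worth at least $\tfrac12 v_{k+1}(B_{k+1,t})$, and the corresponding orientation is extended by that part. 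The value agent $k+1$ receives is thus certified on one bundle at a time, never by summing over neighbors, which is precisely what makes the argument survive subadditivity. If you want to salvage your write-up, you should abandon the per-pair cut-and-choose and look for an invariant of this ``two disjoint candidate orientations'' type.
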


\begin{proof}

  Let $B_i=(B_{i,1},\ldots B_{i,n})$ be the canonical MMS partition of agent $i$ for $E_i$. Let $\mathbf{I}_i(k)$ be a subset of $k$ indices, for some $k\leq n$, i.e., $\mathbf{I}_i(k)\subseteq \{1,\ldots, n\}, |\mathbf{I}_i(k)|=k$. We denote by $B_i(\mathbf{I}_i(k))$ as the set of the respective bundles $B_{i,t}$ with $t\in \mathbf{I}_i(k)$, i.e. $B_i(\mathbf{I}_i(k))=\left(B_{i,t}| t\in \mathbf{I}_i(k)\right).$

  We will show by induction on $k\in \{1,\ldots, n\}$, that for agents
  $1,\ldots, k$ and for any $k$ fixed sets of indices of size $k$,
  $\mathbf{I}_1(k),\ldots, \mathbf{I}_k(k)$, there is a frugal (see definition~\ref{def:frugal}) $1/2$-MMS orientation $X$.
  
    \noindent{\bf Base case, $\bf k=1$}. The base case is immediate; let $\mathbf{I}_1(1)=t\in \{1\ldots, n\}$, then we set $X_1=B_{1,t}$.
    
    Now let's assume that the statement holds for the first $k<n$ agents and for all possible vectors of $k$ subsets of indices $\mathbf{I}_1(k),\ldots, \mathbf{I}_k(k)$.
    
  \noindent{\bf Induction step}.  We will show that the statement also holds for the first $k+1$
  agents and for all possible vectors of $k+1$ subsets of indices $\mathbf{I}_1(k+1),\ldots, \mathbf{I}_{k+1}(k+1)$. W.l.o.g. we will focus on
  $\mathbf{I}_i(k+1)=\{1,\ldots, k+1\}$, for all $i \in \{1,\ldots,k+1\}$; the statement holds for any other vector of indices of size $k+1$ by renaming.

  By the induction hypothesis let $X$ be the promised orientation for agents $1,\ldots, k$ on $\mathbf{I}_1(k),\ldots, \mathbf{I}_k(k)$ with $\mathbf{I}_i(k)=\{1,\ldots, k\}$ and $v_i(X_i) \ge \frac{1}{2}\mu_i$ for all agents. By frugality each agent $i\leq k$ gets a subset of their canonical bundles (w.l.o.g. it is
  $X_i\subseteq B_{i,1}$).

  Now, let's focus on indices $\mathbf{I}'_1(k),\ldots, \mathbf{I}'_k(k)$ with
  $\mathbf{I}'_i(k)=\{2,\ldots, k+1\}$ and let $X'$ be the respective promised
  $1/2$-MMS frugal orientation for agents $1,\ldots, k$ with
  $v_i(X'_i) \ge \frac{1}{2}\mu_i$.
  
  We will now show how we can extend one of the partial orientations $X$ or $X'$ by assigning
  also a subset of items to agent $k+1$. Consider the $t$-th
  canonical bundle $B_{k+1,t}$ of agent $k+1$. Take any
  $e=(i,j)\in X\cap X'$ and assume that $e$ is oriented towards $i$ in
  $X$, i.e. $e\in X_i\subseteq B_{i,1}$. By construction of $X'$, item
  $e$ should be oriented towards agent $j$ in $X'$ and therefore
  $X\cap X' \cap B_{k+1,t}=\emptyset$, i.e., the edge is irrelevant to
  agent $k+1$.  If
  $v_{k+1}(B_{k+1,t}\setminus X)\geq v_{k+1}(B_{k+1,t})/2$ then the
  resulting orientation is $X$ extended with
  $X_{k+1}=B_{k+1,t}\setminus X$. Otherwise, by subadditivity $v_{k+1}(B_{k+1,t}\cap X)\geq v_{k+1}(B_{k+1,t})/2$ and the resulting orientation
  is $X'$ extended with $X'_{k+1}=B_{k+1,t}\cap X$. The extension is feasible because $X\cap X' \cap B_{k+1,t}=\emptyset$. 
In either case, it is clear that the resulting orientation is $1/2$-MMS and satisfies the property of frugality. 
 \end{proof}
Next, we complement the lower bound of Lemma~\ref{lem:subadditivemulti} by providing a matching upper bound. We could also obtain the same result using the upper bound of \cite{GhodsiHSSY22} for two agents and extending the counter example by adding $n-2$ disconnected vertices and the inapproximability follows i.e. one agent cannot guarantee more than $1/2$ or her MMS value while the other agents can guarantee their full MMS value. We present our counter example in which at most one agent guarantee her full MMS value while none of the other agents can guarantee more than $1/2$ of their MMS value.

\begin{lemma}\label{lem:subadditiveupper}
        A $(1/2+\varepsilon)$-MMS allocation in multi-graphs with $n$ subadditive agents, is not guaranteed to exist for any $\varepsilon>0$.
\end{lemma}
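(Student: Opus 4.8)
The plan is to construct an explicit multi-graph instance with $n$ subadditive agents where no allocation can give every agent strictly more than half her MMS value. Since Lemma~\ref{lem:subadditivemulti} already shows $1/2$-MMS is always achievable, this matching upper bound establishes the tight bound of Theorem~\ref{Thm:subadditivemulti}. The cleanest construction mirrors the XOS counterexample of Lemma~\ref{lem:PMMSXOS}: take two ``active'' agents sharing a set of parallel edges, let the remaining $n-2$ agents be isolated (so all edges are irrelevant to them and their MMS value is trivially $0$), and design a subadditive valuation on the shared edges so that each active agent attains a positive MMS value only by holding together an entire canonical MMS bundle, yet the two agents' canonical bundles pairwise intersect in every edge.

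Concretely, I would place $E_{1,2}$ as a grid of edges $\{e_{ij} : i,j \in \{1,\dots,n\}\}$ with endpoints $(1,2)$, exactly as in Lemma~\ref{lem:2XOSdupper}. Agent $1$'s canonical bundles are the ``rows'' $B_{1,k} = \{e_{kj} : j\}$ and agent $2$'s are the ``columns'' $B_{2,k} = \{e_{ik} : i\}$, so every row meets every column in exactly one edge. First I would define a \emph{subadditive} (rather than XOS) valuation that rewards an agent only upon receiving a complete bundle: for instance, set $v_1(S) = 1$ if $S$ contains some whole row $B_{1,k}$ and $v_1(S)$ otherwise equal to a small fraction strictly below $1/2$ of the rows it partially covers — the key is engineering the function so that $\mu_1 = 1$ (achieved by the row partition) while any $S$ missing even one edge from every row has value at most $1/2$. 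I would then verify subadditivity directly and confirm the row partition is the unique MMS partition, giving $\mu_1 = \mu_2 = 1$.

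The core combinatorial argument is the same pigeonhole obstruction as in the XOS upper bound: to give both active agents value exceeding $1/2$, agent $1$ must receive a full row $B_{1,k_1}$ and agent $2$ a full column $B_{2,k_2}$, but these share the edge $e_{k_1 k_2}$, which cannot be allocated to both — a contradiction. Hence at most one active agent can exceed $1/2$-MMS, so no $(1/2+\epsilon)$-MMS allocation exists. The isolated agents are satisfied trivially since their MMS value is $0$.

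The main obstacle will be crafting the subadditive valuation so that the ``all-or-nothing'' threshold behavior is genuinely subadditive and so that $\mu_i$ is provably $1$ with the grid partition as the \emph{unique} optimal partition; a naive $0/1$ indicator of containing a full row need not be subadditive, so I expect to need a valuation of the form $v_i(S) = \min\{1,\, \tfrac{1}{2}\cdot(\text{number of rows }S\text{ fully contains}) + (\text{fractional credit below }1/2)\}$ or a similarly capped construction, and to check the subadditive inequality $v_i(S)+v_i(T)\ge v_i(S\cup T)$ on the critical cases where the union first completes a row. Once the valuation is pinned down, the rest follows by the transversal-style pigeonhole already used for XOS, so the bulk of the effort is in the valuation design and its verification.
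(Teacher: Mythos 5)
Your construction and combinatorial argument are essentially the paper's: the paper likewise builds a grid of parallel edges whose ``rows'' and ``columns'' are the agents' canonical bundles, shows that any agent exceeding $1/2$ must hold an entire bundle, and derives a contradiction from the fact that every row meets every column. (The paper makes all $n$ agents active, placing $n^2$ parallel edges between every pair, whereas you use two active agents plus $n-2$ isolated dummies; both are valid, and your variant is the same trick the paper itself uses in Lemma~\ref{lem:2XOSdupper}.)

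The one piece you leave open---the valuation---is exactly where your candidate formulas fail, and the correct choice is simpler than anything you sketch. Assigning a set with no complete row a value \emph{strictly below} $1/2$ breaks subadditivity: two disjoint half-rows each get value $c<1/2$ while their union is a full row of value $1$, so $v(S)+v(T)=2c<1=v(S\cup T)$. Your capped formula with ``fractional credit below $1/2$'' instead makes a single full row worth less than $1$, so the row partition no longer witnesses $\mu_i=1$ and the threshold argument loses its normalization. The resolution is the constant $1/2$: set $v_i(S)=1$ if $S\supseteq B_{i,k}$ for some $k$, and $v_i(S)=1/2$ for every other nonempty $S$ (with $v_i(\emptyset)=0$). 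Subadditivity is then immediate, since for nonempty $S,T$ one has $v_i(S)+v_i(T)\ge 1\ge v_i(S\cup T)$; the row partition gives $\mu_i=1$; and $v_i(X_i)>1/2$ still forces $X_i$ to contain a whole bundle. With that substitution your argument closes and coincides with the paper's proof.
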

\begin{proof}
 We construct an instance $G=(V,E)$ with $n$ vertices (agents) and $\binom{n}{k}n^2$ edges. Each edge is denoted by $e(i,t_i,j,t_j)$ for $1\leq i<j\leq n$ and for $t_i,t_j\in \{1,\ldots, n\}$.
    
    For each pair $i<j$, the set $E_{i,j}$ consists of $n^2$ edges i.e., $E_{i,j}=\left\{e(i,t_i,j,t_j): t_i,t_j\in \left\{1,\ldots, n \right\}\right\}$.

    For every agent $i$, there are exactly $n$ disjoint bundles of edges $B_{i,t_i}$, $t_i\in\{1,\ldots, n\}$ that can give her value equal to 1. 
    The set $B_{i,t_i}$ is defined as follows:
    $
    B_{i,k_i}=\Bigl\{e(i,t_i,j,t_j): j>i, t_j=\{1,\ldots, n\}\Bigr\} \bigcup \Bigl\{e(i,t_i,j,t_j): j < i, t_j=\{1,\ldots, n\}\Bigr\}$.
  The valuation $v_i(S)$ of each agent $i\in \{1,\ldots, n\}$ and bundle $S \subseteq E$ is defined as follows
     \[
    v_i(S) =   \begin{cases}
      1, & \text{ if }\exists k_i :S \supseteq B_{i,t_i} \\
      1/2, & \text{otherwise.}
    \end{cases}
  \]

  It is easy to verify that the valuations are subadditive.  Note that
  the partition $B_i=(B_{i,1},\ldots,B_{i,n})$ is the unique MMS partition
  for agent $i$, with
  $\mu_i^n=1$. To see this, suppose that there is another partition $X=(X_1,\ldots, X_n)$, with $v_i(X_{t_X})=1$ for every $t_X$,  which is not a reordering of $B_i$. Then, for every $t_X\in \{1,\ldots, n\}$, there exists some $B_{i,t}$ such that $X_{t_X}\supseteq B_{i,t}$ with the containment being strict for at least some $t^*$. But this is impossible.

Observe that by construction, each pair of bundles $B_{i,t_i}, B_{j,t_j}$ with $i<j$ must intersect i.e., $B_{i,t_i} \cap B_{j,t_j}=e(i,t_i,j,t_j)$. 
We now argue that there is no allocation $X=(X_1,\ldots, X_n) $ that guarantees value $v_i(X_i)>1/2$ for each agent $i$. Take two agents $i<j$.
First note, that $v_i(X_i)>1/2$ if and only if $X_i\supseteq B_{i,t_i}$ for some
$k_i$. We will show that $v_j(X_j)\leq 1/2$. This is because $X_i$ intersects with all good bundles $B_{j,t_j}$ of agent $j$, as item $e(i,t_i,j,t_j)\in B_{i,k_i}\subseteq X_i$ hence it isn't available in $X_j$.
\end{proof}

By combining Lemmas~\ref{lem:subadditivemulti} and \ref{lem:subadditiveupper} we obtain the main result of the section.
\begin{theorem}\label{Thm:subadditivemulti}
    In every multi-graph with $n$ subadditive agents there exists $1/2$-MMS orientation. Furthermore, an $(1/2+\varepsilon)$-MMS allocation is not guaranteed to exist, for any $\varepsilon >0$. 
\end{theorem}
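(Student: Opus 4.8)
The plan is to prove the two directions separately and combine them, exactly as packaged by Lemma~\ref{lem:subadditivemulti} (existence of a $1/2$-MMS orientation) and Lemma~\ref{lem:subadditiveupper} (no $(1/2+\epsilon)$-MMS allocation in general); since orientations are a special case of allocations, the two bounds together pin the optimal factor at $1/2$ for both regimes. For the existence direction I would induct on the number of agents while maintaining the stronger invariant that the partial orientation is \emph{frugal} (Definition~\ref{def:frugal}), i.e. each served agent holds a subset of a single one of her canonical MMS bundles $B_{i,1},\dots,B_{i,n}$. The statement I would actually prove is that for every $k$ and every choice of $k$ index-sets of size $k$ there is a frugal $1/2$-MMS orientation on the first $k$ agents; the base case $k=1$ simply gives agent $1$ one whole canonical bundle.

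For the inductive step the key idea is to apply the hypothesis \emph{twice} to the first $k$ agents, once with index sets $\{1,\dots,k\}$ and once with $\{2,\dots,k+1\}$, obtaining two frugal orientations $X$ and $X'$ in which every agent $i\le k$ is served from two \emph{different} canonical bundles (say $X_i\subseteq B_{i,1}$ and $X'_i\subseteq B_{i,2}$). The crucial observation is that no relevant edge of the new agent $k+1$ can lie in both $X$ and $X'$: such an edge would be oriented to the same first-$k$ endpoint in both orientations, which is impossible since that endpoint draws from disjoint bundles in $X$ and $X'$. Hence $X\cap X'\cap B_{k+1,t}=\emptyset$ for $k+1$'s chosen bundle $B_{k+1,t}$, so $B_{k+1,t}\cap X$ can be appended to $X'$ and $B_{k+1,t}\setminus X$ to $X$, both without conflict. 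Subadditivity yields $v_{k+1}(B_{k+1,t}\cap X)+v_{k+1}(B_{k+1,t}\setminus X)\ge v_{k+1}(B_{k+1,t})\ge\mu_{k+1}$, so one of the two pieces has value at least $\mu_{k+1}/2$; extending the corresponding orientation keeps it frugal and $1/2$-MMS.

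For the impossibility direction I would construct a rigid instance in which clearing the $1/2$ threshold is all-or-nothing. Give each agent $n$ pairwise-disjoint canonical bundles $B_{i,k_i}$, define $v_i(S)=1$ exactly when $S$ contains some whole bundle $B_{i,k_i}$ and $v_i(S)=1/2$ otherwise (this is routinely subadditive and makes $B_i$ the unique MMS partition with $\mu_i=1$). Then $v_i(X_i)>1/2$ forces $X_i\supseteq B_{i,k_i}$ for some $k_i$. Placing a single shared edge $e(i,j,k_i,k_j)$ in each pairwise intersection $B_{i,k_i}\cap B_{j,k_j}$ guarantees that any two agents who both clear the threshold would have to own a common edge, which no allocation allows; hence some agent is always left at $1/2$.

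The routine parts are the subadditivity checks and the uniqueness of the MMS partitions; the genuinely delicate step is the feasibility of the inductive extension, specifically the disjointness argument $X\cap X'\cap B_{k+1,t}=\emptyset$ that lets one of the two complementary halves of $B_{k+1,t}$ always be grafted onto an existing frugal orientation. I expect this to be the main obstacle, and it is precisely where \emph{subadditivity}---rather than additivity or XOS---is used: it converts ``one of the two halves is large'' into the $1/2$ guarantee, matching the upper-bound construction and thereby making the factor tight.
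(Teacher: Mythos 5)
Your proposal is correct and follows essentially the same route as the paper: the existence half is the paper's Lemma~\ref{lem:subadditivemulti} (induction maintaining frugality, two applications of the hypothesis on shifted index sets, the disjointness $X\cap X'\cap B_{k+1,t}=\emptyset$, and subadditivity to pick the larger half of $B_{k+1,t}$), and the impossibility half is exactly the construction of Lemma~\ref{lem:subadditiveupper} with the $\{1,1/2\}$-valued subadditive valuations and pairwise-intersecting bundles.
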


\subsection{Approximate PMMS for Subadditive Valuations}
In this subsection we show that the the algorithm described in Theorem~\ref{thm:multiaddcutandchoose} guarantees a $1/2$-PMMS orientation for subadditive agents. Combined with lemma~\ref{lem:PMMSXOS} derived our main results of a tight $1/2$-PMMS orientation (Theorem~\ref{thm:subadditivePMMS}).
\begin{lemma}\label{lem:subadditivePMMS}
In every multi-graph with $n$ subadditive agents there exists a $1/2$-PMMS orientation.
\end{lemma}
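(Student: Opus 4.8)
The plan is to reuse the very orientation produced by the cut-and-choose algorithm of Theorem~\ref{thm:multiaddcutandchoose} and verify that it is $1/2$-PMMS under subadditive valuations. Fix an agent $i$ and another agent $j\neq i$. Since the allocation is an orientation, the only edges of $X_i\cup X_j$ relevant for $i$ are those in $X_i$ together with $X_{ji}:=X_j\cap E_{i,j}$, so by the graphical property $\mu_i^2(X_i\cup X_j)=\mu_i^2(X_i\cup X_{ji})$. Writing $Y_i:=X_i\setminus X_{ij}$ for the edges $i$ received from pairs other than $j$ (and her self-loops), which are disjoint from $E_{i,j}$, we have $X_i\cup X_{ji}=E_{i,j}\sqcup Y_i$. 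It therefore suffices to prove $\mu_i^2(E_{i,j}\sqcup Y_i)\le 2\,v_i(X_i)$ for every $j$. Throughout I will use the elementary fact that, for a monotone $v_i$, $\mu_i^2(S)\le v_i(S)$ for all $S$.

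I would split the argument according to who cut $E_{i,j}$. If $i<j$, then $i$ is the chooser and picks her favorite part, so $v_i(X_{ji})\le v_i(X_{ij})\le v_i(X_i)$; combining $\mu_i^2(X_i\cup X_{ji})\le v_i(X_i\cup X_{ji})$ with subadditivity gives
\[
\mu_i^2(X_i\cup X_{ji})\le v_i(X_i)+v_i(X_{ji})\le 2\,v_i(X_i),
\]
as required.

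The interesting case is $i>j$, where $i$ is the cutter: here $j$ may select the part of $E_{i,j}$ that is more valuable to $i$, so $v_i(X_{ji})$ can exceed $v_i(X_i)$ and the crude bound above breaks down. The remedy is to exploit that $i$ cut $E_{i,j}$ optimally for herself, so both parts are worth at least $\mu_i^2(E_{i,j})$ to her; in particular $\mu_i^2(E_{i,j})\le v_i(X_{ij})\le v_i(X_i)$. Let $(Q_1,Q_2)$ be an optimal $2$-partition of $S=E_{i,j}\sqcup Y_i$. Projecting onto $E_{i,j}$, the sets $Q_1\cap E_{i,j},Q_2\cap E_{i,j}$ partition $E_{i,j}$, so the one of smaller value—say $Q_1\cap E_{i,j}$—has value at most $\mu_i^2(E_{i,j})$. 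Then, using $\mu_i^2(S)\le v_i(Q_1)$ and subadditivity,
\[
\mu_i^2(S)\le v_i(Q_1)\le v_i(Q_1\cap E_{i,j})+v_i(Q_1\cap Y_i)\le \mu_i^2(E_{i,j})+v_i(Y_i)\le 2\,v_i(X_i),
\]
since $Y_i\subseteq X_i$. This yields the desired bound in the second case as well.

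Combining the two cases gives $v_i(X_i)\ge \tfrac12\,\mu_i^2(X_i\cup X_j)$ for every $j\neq i$, hence $v_i(X_i)\ge\tfrac12\,\operatorname{PMMS}_i(X)$, and the orientation is $1/2$-PMMS. I expect the main obstacle to be precisely the cutter case ($i>j$): the naive subadditive estimate $v_i(X_i\cup X_{ji})\le v_i(X_i)+v_i(X_{ji})$ is too lossy when $j$ claims the $i$-heavy part, and one must instead argue through the optimality of $i$'s own cut together with a projection of the witnessing $2$-partition onto $E_{i,j}$.
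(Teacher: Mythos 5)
Your proof is correct and follows essentially the same route as the paper: it analyzes the cut-and-choose orientation of Theorem~\ref{thm:multiaddcutandchoose}, dispatches the chooser case by subadditivity, and in the cutter case projects the optimal $2$-partition of the combined bundle onto $E_{i,j}$, using the optimality of $i$'s own cut of $E_{i,j}$ together with subadditivity. The only differences are presentational---you argue directly and strip irrelevant edges up front via $Y_i$, whereas the paper argues by contradiction---but the key inequalities coincide.
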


\begin{proof}

Let $X = (X_1, \ldots, X_n)$ be the orientation produced by applying the cut and choose algorithm for each pair of agents $(i,j)$ and their common set of items $E_{i,j}$ (see proof of Theorem~\ref{thm:multiaddcutandchoose} for more details). We will show that for any agents $i, j$, we have $v_i(X_i) \geq \frac{1}{2} \mu_i(X_i \cup X_j)$ and thus $v_i(X_i) \geq \frac{1}{2}\operatorname{PMMS}_i(X)$.  

\noindent \textbf{Case 1:} If $i < j$, then agent $i$ chooses her most valuable bundle and also receives some additional edges (from common edges with other agents) and thus $v_i(X_i) \ge v_i(X_j)$. By subadditivity, agent $i$ guarantees at least half the value of the combined bundle, i.e., 
$v_i(X_i) =\max\{v_i(X_i),v_i(X_j) \} \geq \frac{v_i(X_i \cup X_j)}{2} \geq \frac{\mu_i^2(X_i \cup X_j)}{2}$.
The last inequality follows also from subadditivity, which ensures that the pairwise valuation is at most the total value.

\noindent \textbf{Case 2:} If $j < i$, let $P = (A, B)$ be the $1$-out-of-$2$ MMS partition of the common edges $E_{i,j}$ among agents $i$ and $j$. Without loss of generality, assume that $v_i(A) \geq v_i(B) \geq a$. If the agent $i$ gets bundle $A$, then by the same analysis as in Case 1, we obtain $v_i(X_i) \geq \frac{\mu_i^2(X_i \cup X_j)}{2}$.
Thus, consider the case where agent $i$ is allocated bundle $B$.  

Now, assume for the sake of contradiction that after in the final allocation, there exists an agent $j$ such that $v_i(X_i) < \frac{\mu_i^2(X_i \cup X_j)}{2}$. By monotonicity, we have $v_i(X_i) \geq v_i(X_i \cap E_{i,j}) = a$.

Let $P' = \{A', B'\}$ be the $1$-out-of-$2$ MMS partition of the edges in $X_i \cup X_j$ by agent $i$. Since $\mu_i^2(E_{i,j}) = a$, define the set
$S = \arg \min_{S \in \{A', B'\}} v_i(S \cap E_{i,j})$, which ensures that $v_i(S \cap E_{i,j}) \leq a$. Otherwise, the partition $(A' \cap E_{i,j}, B' \cap E_{i,j})$ would guarantee a value strictly greater than $a$ for the less valuable bundle, contradicting the fact that $P$ is the $1$-out-of-$2$ MMS partition of edges in $E_{i,j}$.  

Consider the partition of $S$ into two bundles: $(S \cap E_{i,j}, S \setminus E_{i,j})$. Since agent $i$ did not achieve at least half of her PMMS value, we obtain 
\begin{align} 
\label{eq:sub1}
    2v_i(X_i) < \min\{v_i(A'), v_i(B')\} \leq v_i(S).
\end{align}
 By subadditivity, it follows that $v_i(S) \leq v_i(S \cap E_{i,j}) + v_i(S \setminus E_{i,j})$.
Since we have already established an upper bound of $a$ for the value of edges in $S \cap E_{i,j}$, we derived that 
\begin{align} \label{eq:sub2}
    v_i(S) \leq a + v_i(S \setminus E_{i,j}).
\end{align}
Note that the $X$ is an orientation, and thus $v_i(X_j) = v_i(X_j \cap E_{i,j}) \le v_i(E_{i,j})$. Hence, we conclude that  
\begin{align}\label{eq:sub3}
 v_i(S \setminus E_{i,j}) \leq v_i(S \setminus X_j) = v_i(S \cap X_i) \leq v_i(X_i).   
\end{align}

Combining equations (\ref{eq:sub1}),(\ref{eq:sub2}) and (\ref{eq:sub3}), we get  
$2v_i(X_i) < a + v_i(X_i)$ which simplifies to  
$v_i(X_i) < a$ which is a contradiction. As a result there does not exist such an agent $j$ and agent $i$ achieves at least half her PMMS value.
\end{proof}
 Combining Lemmas~\ref{lem:PMMSXOS} and \ref{lem:subadditivePMMS} we obtain the main result of this section.
 
\begin{theorem}\label{thm:subadditivePMMS}
In every multi-graph with $n$ subadditive agents, there exists a $1/2$-MMS orientation. However, there exists a graph with XOS agents where there is not a $(1/2 + \varepsilon)$-PMMS orientation for any $\varepsilon > 0$ while an exact PMMS allocation exists.
\end{theorem}

\section{Conclusions and Open Problems}

In this paper, we have provided a comprehensive study of the maximin share (MMS) fairness guarantee within the graphical valuation model. By leveraging the inherent structure of multigraphs—where items are represented by edges and agents by vertices—we established several positive results that distinguish this model from the general setting of indivisible goods. For additive valuations, we proved that an exact MMS orientation always exists and can be computed efficiently. Notably, this orientation also satisfies the stronger pairwise maximin share (PMMS) property, providing a complete solution for the additive case. Furthermore, we established that a $1$-out-of-$3$-MMS is also achievable for any number of additive agents.

In the realm of XOS valuations, we demonstrated a significant separation from general results by proving the existence of a $2/3$-MMS orientation for any number of agents, as well as a $1/2$-out-of-$2$ guarantee. We further bridged the gap between fair division and extremal combinatorics by showing that progress in the theory of independent transversal sets could lead to even tighter approximation guarantees for XOS agents. These tools guaranteed a $2/3$-out-of-$8 \sqrt{n}$-MMS orientation.

Finally, for subadditive valuations, we established a tight $1/2$-MMS approximation, matching the known upper bounds for the general model while showing that no better guarantee is possible even with graphical constraints. 

For both XOS and subadditive, agents we proved a tight $1/2$-PMMS orientation for any number of agents.

Several interesting directions for future research remain:

\begin{itemize}
\item \textbf{Tightening the XOS Gap:} While we proved a $2/3$ existence result, a gap remains between this and our $(1 - \frac{1}{\lceil n/2\rceil+1})$ upper bound for an arbitrary number of agents. Determining the exact value $\alpha$ for which $\alpha$-MMS orientations are guaranteed to exist for all $n$ remains a compelling open problem.

\item \textbf{MMS Approximation for Multi-Hypergraphs:} A natural generalization of our model is the multi-hypergraph setting, where each item (hyperedge) is relevant to a set of at most $p$ agents. Establishing approximation guarantees for hypergraphs remains a significant challenge; we note that for $p=n$, this coincides with the general MMS problem without any structural restrictions.
\end{itemize}

\section*{Acknowledgments}
The research project is implemented in the framework of H.F.R.I call “3rd Call for H.F.R.I.’s Research Projects to Support Faculty Members \& Researchers” (H.F.R.I. Project Number:24896). This work has been partially supported by project MIS 5154714 of the National Recovery and Resilience Plan Greece 2.0 funded by the European Union under the NextGenerationEU Program.


    





\clearpage
\bibliographystyle{plain}
\bibliography{Bibliography}
\end{document}